\newcounter{num}
\newcommand{\plot}[4]
{      \includegraphics[width=#1\textwidth,bb=0 0 #2 #3]{23-\arabic{#4}.pdf}
\addtocounter{#4}{1}
}
\def\@maketitle{%
  \newpage
  \null
  \vskip 2em%
  \begin{center}%
  \let \footnote \thanks
    {\Large\bfseries \@title \par}%
    \vskip 1.5em%
    {\normalsize
      \lineskip .5em%
      \begin{tabular}[t]{c}%
        \@author
      \end{tabular}\par}%
    \vskip 1em%
    {\normalsize \@date}%
  \end{center}%
  \par
  \vskip 1.5em}
\newcommand{\doBlank}[1]{}
\def\?#1{}
\newcommand{\dif}{\mathrm{d}}
\numberwithin{equation}{section}
\theoremstyle{plain}
\newtheorem{assumption}{Assumption}
\newtheorem{lemma}{Lemma}[section]
\newtheorem{definition}{Definition}
\newtheorem{proposition}{Proposition}[section]
\newtheorem{theorem}{Theorem}[section]
\newcommand{\writetitle}{0}
\newcommand{\mytitle}[1]
{   \ifthenelse{\writetitle=1}{}{}
}
\newread\mysource
\begin{document}
\title{Efficient strategy for the Markov chain Monte Carlo  in high-dimension with heavy-tailed target probability distribution}
\author{Kengo KAMATANI%
\thanks{Supported in part by Grant-in-Aid for Young Scientists (B) 24740062.}}
\affil{Graduate School of Engineering Science, Osaka University and JST, CREST}

\date{Dated: \today}

\maketitle
\begin{abstract} 
The purpose of this paper is to introduce a new Markov chain Monte Carlo method and exhibit its efficiency by 
simulation and  high-dimensional asymptotic theory. 
Key fact is that our algorithm has a reversible proposal transition kernel, which is designed to have a heavy-tailed invariant probability distribution. 
The high-dimensional asymptotic theory is studied for a class of heavy-tailed target probability distribution. 
As the number of dimension of the state space  goes to infinity, we will show that our algorithm has a much better convergence rate than that of the preconditioned Crank Nicolson (pCN) algorithm
and the  random-walk Metropolis (RWM) algorithm. 
We also show that our algorithm is at least as good as the pCN algorithm 
and better than the RWM algorithm
for light-tailed target probability distribution. 
\end{abstract}
{\bf Keywords:} Markov chain; Consistency; Monte Carlo; Stein's method; Malliavin calculus

\section{Introduction}
The Markov chain Monte Calro (MCMC) method is a widely used technique for evaluation of complicated integrals, 
especially in high dimensional setting. A lot of new methods are developed in the past few decades. 
However it is still very difficult to 
choose an MCMC that works well for a given function and a given measure, which is called the target (probability) distributoin. The choice of MCMC heavily depends on the tail behaviour of the target probability distribution. In particular, it is well-known that many MCMC algorithms 
behave poorly for heavy-tailed target probability distribution. 

In our previous work, in \citet{arXiv:1406.5392}, 
we studied some asymptotic properties of the random-walk Metropolis (RWM) algorithm for heavy-tailed target probability distribution. 
To perform RWM algorithm,  we have to choose a proposal distribution. This choice heavily affects the performance. 
We showed that the most standard choice, the Gaussian proposal distribution attains the optimal rate of convergence, although  this rate is quite poor. This rather disappointing fact illustrates that the RWM algorithm can not be so good. 
To find a more efficient strategy is an important unsolved problem. 

A candidate of this, the preconditioned Crank-Nicolson (pCN) algorithm is first appeared in \citet{MR2537193}. 
The method is a simple modification of a classical Gaussian RWM algorithm and so their computational costs are almost the same. 
The efficiency for this simple candidate  was provided  in simulation by \citet{MR3135540}
and its theoretical benefit was provided in \citet{MR2537193}, \citet{arXiv:1108.1494v2}, \citet{MR3161650} and \citet{MR3262508}. 
However our simulation shows that it works well only for a specific light-tailed target distribution and works quite poor otherwise, in particular, for heavy-tailed target probability distribution
(in Theorem \ref{Theorem-1}, we will prove it in terms of the convergence rate). 

In this paper, we introduce a new algorithm which is a slight modification of the original pCN algorithm though their performances are completely different.  It works well and is quite robust. 
Let us describe our new algorithm, 
the mixed preconditioned Crank-Nicolson (MpCN) algorithm. Let $P(\dif x)=p(x)\dif x$ be the target probability distribution on $\mathbb{R}^d$. 
Fix $\rho\in (0,1)$. Set initial value $x=(x_1,\ldots, x_d)\in\mathbb{R}^d$ and
let $\|x\|=(\sum_{i=1}^d x_i^2)^{1/2}$. 
The algorithm goes as follows:

\begin{itemize}
\item Generate $r\sim \mathrm{Gamma}(d/2,\|x\|^2/2)$. 
\item Generate $x^*= \rho^{1/2} x+(1-\rho)^{1/2} r^{-1/2}w$ where $w$ follows the standard normal distribution. 
\item Accept $x^*$ as $x$ with probability $\alpha(x,x^*)$, and otherwise,  discard $x^*$, where
\begin{equation}\nonumber
\alpha(x,y)=\min\left\{1, \frac{p(y)\|x\|^{-d}}{p(x)\|y\|^{-d}}\right\}. 
\end{equation}
\end{itemize}
In the above, $\mathrm{Gamma}(\nu,\alpha)$ is the Gamma distribution with the shape parameter $\nu$
and the scale parameter $\alpha$ with the probability distribution function $\propto x^{\nu-1}\exp(-\alpha x)$. 
In our simulation, we set $\rho=0.8$. Key fact is that the proposal transition kernel of the algorithm 
has a heavy-tailed invariant probability distribution. 
Thus it is not surprising if the new method works better than the pCN algorithm for heavy-tailed target probability distribution. 
However we will show that the new method has the same convergence rate as the pCN algorithm even for light-tailed target probability distribution. Our method is robust, which is one of the most important property for MCMC. 

We study its theoretical properties via high-dimensional asymptotic theory. 
 The high-dimensional asymptotic theory for MCMC was first appeared in \citet{MR1428751} and further developed in \citet{RSSB:RSSB123}. See \citet{MR3135540} for recent results.  
We use this framework together with the study of consistency of MCMC by \citet{Kamatani10}.  

The main technical tools are  Malliavin calculus and Stein's techniques. 
The reader is referred to \citet{MR2200233} for the former and \citet{Stein}
for the latter and see \citet{MR3003367} for the connection of the two fields. 
The analysis of this connection is a very active area of research and our paper illustrates usefulness of the analysis
even for Bayesian computation.

The paper is organized as follows. The numerical simulations are provided in the right after this section. 
We also illustrate the limitation of the MpCN algorithm in Section \ref{sim4}. 
In Section \ref{highdimsection}, high-dimensional asymptotic properties will be studied.  We will show that the pCN algorithm is worse than the classical RWM algorithm 
for heavy-tailed target probability distribution. On the other hand, the MpCN algorithm attains a better rate than the RWM algorithm. 
Proofs are relegated to Section \ref{proofsec}. 
In the appendix, Section \ref{appen1} includes a short introduction to Malliavin calculus
and Stein's techniques. 
Section \ref{appen2} provides  some properties for consistency of MCMC.

Finally, we note that our new algorithm was already implemented for the Bayesian type estimation for ergodic diffusion process in  \citet{UK} (More precisely, a version of MpCN. See Section \ref{discuss} for the detail). 
The target probability distribution is  very complicated although it is not heavy-tailed.  The performance of the Gaussian RWM algorithm 
was quite poor due to the complexity. However the new method worked well as described in Figure 1 of \citet{UK}.
In our current study, we only describe usefulness of our algorithm for a class of heavy-tailed target probability distribution. 
However, this heavy-tail assumption is just an example of target probability distribution that is difficult to approximate by MCMC. Our method is robust, and we believe that the method is useful for non heavy-tailed  complicated target probability distribution as illustrated in  \citet{UK}.

\subsection{Notation}
Several norms are considered in this paper. 
\begin{itemize}
\item
For $x=(x_1,\ldots, x_k),\ y=(y_1,\ldots, y_k)\in\mathbb{R}^k$, write $\|x\|=\left(\sum_{i=1}^k x_i^2\right)^{1/2}$
and $\left\langle x,y\right\rangle=\sum_{i=1}^k x_iy_i$. 
If $h$ is in a Hilbert space $\mathfrak{H}$ with inner product $\left\langle \cdot,\cdot\right\rangle_{\mathfrak{H}}$, write $\|h\|_{\mathfrak{H}}=\left(\left\langle h,h\right\rangle_{\mathfrak{H}}\right)^{1/2}$.  
\item
For a function $f:E\rightarrow\mathbb{R}$, write $\|f\|_\infty=\sup_{x\in E}|f(x)|$. 
\item
If $F$ is a real valued random variable on an abstract Wiener space $(W,\mathfrak{H},\mathbb{P})$, 
write 
$\|F\|_{\mathbb{D}^{1,2}}=(\mathbb{E}[F^2]+\mathbb{E}[\|DF\|^2_\mathfrak{H}])^{1/2}$. 
When the abstract Wiener space $(W,\mathfrak{H},\mathbb{P}_y)$  depends on $y\in(0,\infty)$, 
write 
$\|F\|_{\mathbb{D}^{1,2}_\delta}=\sup_{y\in [\delta,\delta^{-1}]}(\mathbb{E}_y[F^2]+\mathbb{E}_y[\|DF\|^2_\mathfrak{H}])^{1/2}$
for $\delta\in (0,1)$. 
\item If $\nu$ is a signed measure on $(E,\mathcal{E})$, write $\|\nu\|_{\mathrm{TV}}=\sup_{A\in \mathcal{E}}|\nu(A)|$. 
The integral with respect to $\nu$ is denoted by $\nu(f)=\int_E f(x)\nu(\dif x)$. 
In particular, $Nf =\mathbb{E}[f(X)],\ X\sim N(0,1)$. 
\end{itemize} 
Write $N_d(\mu,\Sigma)$ for the $d$-dimensional normal distribution
with  mean  $\mu\in\mathbb{R}^d$ and  variance covariance matrix $\Sigma$, 
and $\phi_d(x;\mu,\Sigma)$ be its probability distribution function. 
When $d=1$, write $N(\mu,\sigma^2)$ and $\phi(x;\mu,\sigma^2)$ with respectively. 
We also denote  the $d$-dimensional standard normal distribution briefly by $N_d$
and write $N=N_1$. 
Write $I_d$ for the  $d\times d$-identity matrix.
Write $\mathcal{L}(X)$ or $\mathcal{L}_\mathbb{P}(X)$ for the law of random variable $X$. 
Write $X_n\Rightarrow X$ if the law of $X_n$ converges weakly to that of $X$. 
Write $X|Y$ for the conditional distribution of $X$ given $Y$.

\section{The MpCN algorithm and its performance}\label{perform}

In this section, we describe two Metropolis-Hastings algorithms. 
The Metropolis-Hastings algorithm generates a Markov chain $\{X_m\}_m$ with transition kernel $K(x,\dif y)$ on $(E,\mathcal{E})$ defined by
the following: Set $X_0\in E$  and for $m\ge 1$, 
\begin{equation}\nonumber
\left\{\begin{array}{l}
X^*_m\sim R(X_{m-1},\dif x)\\
X_m=\left\{\begin{array}{ll}
X^*_m & \mathrm{with\ probability}\ \alpha(X_{m-1},X_m^*)\\
       X_{m-1} & \mathrm{with\ probability}\ 1-\alpha(X_{m-1},X_m^*)
       \end{array}\right.
\end{array}\right. 
\end{equation}
where $R(x,\dif y)$ is called the proposal transition kernel, and $\alpha(x,y)$ is called the acceptance ratio
that satisfy
\begin{equation}\label{detailedbalance}
P(\dif x)R(x,\dif y)\alpha(x,y)=P(\dif y)R(y,\dif x)\alpha(y,x)
\end{equation}
where $P(\dif x)$ is the target probability distribution. 
The Markov chain is called reversible with respect to $P(\dif x)$ if 
\begin{equation}\nonumber
P(\dif x)K(x,\dif y)=
P(\dif y)K(y,\dif x). 
\end{equation}
If the acceptance ratio satisfies (\ref{detailedbalance}), then the Markov chain has reversibility. 
See monograph \citet{RC} or review \citet{TierneyAOS94} for further details. 

\subsection{The pCN algorithm}

Let $P_d$ be a probability measure on $\mathbb{R}^d$ with density $p_d(x)$. 
In this paper, 
the following algorithm that generate a Markov chain $X^d=\left\{X^d_m\right\}_{m\in\mathbb{N}_0}$ is called the preconditioned Crank-Nicolson (pCN) algorithm  
for the target probability distribution $P_d$ if $X^d_0$ is a $\mathbb{R}^d$-valued random variable, and 
for $m\ge 1$, 
\begin{equation}\label{pCN}
\left\{\begin{array}{l}
X^{d*}_m= \sqrt{\rho}X^d_{m-1}+\sqrt{1-\rho}W^d_m,\ W^d_m\sim N_d(0,I_d)\\
X^d_m=\left\{\begin{array}{ll}
X^{d*}_m & \mathrm{with\ probability}\ \alpha_d(X^d_{m-1},X^{d*}_m)\\
       X^d_{m-1} & \mathrm{with\ probability}\ 1-\alpha_d(X^d_{m-1},X^{d*}_m)
       \end{array}\right.
\end{array}\right. 
\end{equation}
where $\alpha_d(x,y)=\min\left\{1, p_d(y)\phi_d(x;0,I_d)/p_d(x)\phi_d(y;0,I_d)\right\}$. 
Write $\mathrm{pCN}(P_d)$ for the law $\mathcal{L}(X^d)$  if $X_0^d\sim P_d$. 
The conditional distribution $X_m^{d*}|X_{m-1}^d$ is given by the following joint distribution:
\begin{equation}\nonumber
(X_{m-1}^d, X_m^{d*})\sim N_{2d}\left(0, \left(\begin{matrix}I_d&\sqrt{\rho}I_d\\\sqrt{\rho}I_d&I_d\end{matrix}\right)\right).
\end{equation}
In particular, if $P_d=N_d(0,I_d)$ and $X_0^d\sim P_d$, each $X_m^d$ is always accepted
and 
it becomes a $d$-dimensional $\mathrm{AR}(1)$ process. 

\subsection{The MpCN algorithm}

In this paper, we propose 
the following algorithm that generate a Markov chain $X^d=\left\{X^d_m\right\}_{m\in\mathbb{N}_0}$:
 Set $X^d_0$ as a $\mathbb{R}^d$-valued random variable, and for $m\ge 1$, 
\begin{equation}\label{MpCN}
\left\{\begin{array}{l}
Z^d_m\sim \mathrm{InvGamma}(d/2,\|X_{m-1}^d\|^2/2)\\
X^{d*}_m= \sqrt{\rho}X^d_{m-1}+\sqrt{(1-\rho)Z_m^d}W^d_m,\ W^d_m\sim N_d(0,I_d)\\
X^d_m=\left\{\begin{array}{ll}
X^{d*}_m & \mathrm{with\ probability}\ \alpha_d(X^d_{m-1},X^{d*}_m)\\
       X^d_{m-1} & \mathrm{with\ probability}\ 1-\alpha_d(X^d_{m-1},X^{d*}_m)
       \end{array}\right.
\end{array}\right. 
\end{equation}
where $\alpha_d(x,y)=\min\left\{1, p_d(y)\|x\|^{-d}/p_d(x)\|y\|^{-d}\right\}$, 
and $\mathrm{InvGamma}(\nu,\alpha)$ is the inverse Gamma distribution with the shape parameter 
$\nu$ and the scale parameter $\alpha$ with density
\begin{equation}\nonumber
g(z;\nu,\alpha)=1_{(0,\infty)}(z)\frac{\alpha^\nu}{\Gamma(\nu)}z^{-\nu-1}\exp(-\alpha/z). 
\end{equation}
In this paper, this algorithm is called the mixed preconditioned Crank-Nicolson (MpCN) algorithm
for the target probability distribution $P_d$. 
Write $\mathrm{MpCN}(P_d)$  for the law $\mathcal{L}(X^d)$ if $X_0^d\sim P_d$. 
Formally, the conditional distribution $X_m^{d*}|X_{m-1}^d$ is given by the following joint distribution: 
\begin{equation}\nonumber
(X_{m-1}^d, X_m^{d*})|Z_m^d\sim N_{2d}\left(0, Z_m^d\left(\begin{matrix}I_d&\sqrt{\rho}I_d\\\sqrt{\rho}I_d&I_d\end{matrix}\right)\right), Z_m^d\sim \overline{Q}
\end{equation}
when $\overline{Q}(\dif x)=1_{(0,\infty)}(x)x^{-1}\dif x$. 
By this structure, the transition kernel is reversible with respect to 
\begin{equation}\label{stationprop}
\overline{P}_d(\dif x)\propto \int_{z\in (0,\infty)}\phi_d(x,0,zI_d)\overline{Q}(\dif z)\dif x\propto \|x\|^{-d}\dif x. 
\end{equation}
Since $\overline{P}_d$ and $\overline{Q}$ are  improper (not probability measures but $\sigma$-finite measures), the above argument is just a formal sense. This argument is justified by the following. 

\begin{lemma}\label{reverse}
The proposal transition kernel of the MpCN algorithm is reversible with respect to a $\sigma$-finite measure $\overline{P}_d(\dif x)=\overline{p}_d(x)\dif x=\|x\|^{-d}\dif x$, and the transition kernel of the MpCN algorithm 
is reversible with respect to $P_d$. 
\end{lemma}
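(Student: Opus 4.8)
The plan is to bypass the improper-measure heuristic of (\ref{stationprop}) entirely and prove both reversibility claims by producing the proposal transition density explicitly. First I would write the proposal kernel $R(x,\dif y)$ of (\ref{MpCN}) as a genuine density $r(x,y)$ against Lebesgue measure, obtained by integrating out the auxiliary inverse Gamma variable: for $x\neq 0$,
\[
r(x,y)=\int_0^\infty \phi_d(y;\sqrt{\rho}x,(1-\rho)z I_d)\,g(z;d/2,\|x\|^2/2)\,\dif z .
\]
The two assertions then reduce to the pointwise identities $\|x\|^{-d}r(x,y)=\|y\|^{-d}r(y,x)$ (reversibility of the proposal with respect to $\overline{P}_d$) and $p_d(x)r(x,y)\alpha_d(x,y)=p_d(y)r(y,x)\alpha_d(y,x)$ (reversibility of $K$ with respect to $P_d$), both of which can be checked by a finite computation.

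The core step is the evaluation of this mixture integral. Multiplying the Gaussian density and the inverse Gamma density and collecting the two exponential factors, the exponent becomes $-\frac{1}{2z}\big(\frac{\|y-\sqrt{\rho} x\|^2}{1-\rho}+\|x\|^2\big)$, and a short simplification shows the bracket equals $S(x,y)/(1-\rho)$ with $S(x,y)=\|x\|^2+\|y\|^2-2\sqrt{\rho}\langle x,y\rangle$, which is manifestly symmetric in $x$ and $y$. Collecting powers of $z$ yields a factor $z^{-d-1}$, and the substitution $t=1/z$ turns $\int_0^\infty z^{-d-1}\exp(-S(x,y)/(2(1-\rho)z))\,\dif z$ into the Gamma integral $\Gamma(d)(2(1-\rho)/S(x,y))^{d}$. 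Hence $r(x,y)=C\,\|x\|^{d}\,S(x,y)^{-d}$ for a constant $C$ depending only on $d$ and $\rho$, so that $\|x\|^{-d}r(x,y)=C\,S(x,y)^{-d}$ is symmetric. This is exactly reversibility of the proposal with respect to $\overline{P}_d(\dif x)=\|x\|^{-d}\dif x$ (the singleton $\{0\}$ is Lebesgue-null and may be ignored).

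For the second claim I would invoke the standard Metropolis--Hastings detailed balance argument, now that the proposal carries an explicit density. From the identity just proved, $r(y,x)/r(x,y)=\|y\|^{d}/\|x\|^{d}$, so the acceptance ratio in (\ref{MpCN}) satisfies $\alpha_d(x,y)=\min\{1,p_d(y)r(y,x)/(p_d(x)r(x,y))\}$; that is, it is precisely the Hastings ratio for proposal density $r$ and target density $p_d$. The off-diagonal part of $P_d(\dif x)K(x,\dif y)$ then has density $p_d(x)r(x,y)\alpha_d(x,y)=\min\{p_d(x)r(x,y),\,p_d(y)r(y,x)\}$, symmetric in $(x,y)$, while the rejection part is carried by the diagonal $\{x=y\}$ and is automatically symmetric. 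Summing the two pieces gives $P_d(\dif x)K(x,\dif y)=P_d(\dif y)K(y,\dif x)$, the desired reversibility (which is (\ref{detailedbalance}) together with the rejection term).

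The only delicate point is the first step: one must be sure that the formal description via the improper measures $\overline{Q}$ and $\overline{P}_d$ is faithfully replaced by the honest, $z$-integrated density $r(x,y)$, and that the Gamma-integral manipulation (in particular the $t=1/z$ substitution, whose convergence at $t=0$ needs $d\ge 1$) is valid for every fixed $x\neq 0$. Once $r(x,y)$ is in closed form the symmetry is transparent, so I expect the bookkeeping of the normalizing constant $C$ and the sanity check that $r(x,\cdot)$ integrates to one (it does, being a mixture of proper Gaussians) to be the most error-prone rather than the conceptually hard part.
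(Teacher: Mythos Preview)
Your argument is correct. The closed form $r(x,y)=C\|x\|^{d}S(x,y)^{-d}$ with $S(x,y)=\|x\|^{2}+\|y\|^{2}-2\sqrt{\rho}\langle x,y\rangle$ does indeed make $\|x\|^{-d}r(x,y)$ symmetric, and the identification of $\alpha_d$ with the Hastings ratio for $(r,p_d)$ finishes the job exactly as you say.

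The paper takes a slightly different route: instead of integrating out $z$, it keeps the auxiliary variable and observes that the \emph{joint} measure $\overline{P}_d(\dif x)R_d(x,\dif z)R_d(x,z,\dif x^*)$ has density proportional to $\phi_{2d}\bigl((x,x^*);0,z\Sigma_\rho\bigr)\,z^{-1}$, which is visibly symmetric in $(x,x^*)$ for every fixed $z$; symmetry of the marginal then follows without ever computing the Gamma integral. For the second claim the paper rewrites $P_d(\dif x)R_d(x,\dif x^*)\alpha_d(x,x^*)$ as $\overline{P}_d(\dif x)R_d(x,\dif x^*)\min\{(p_d/\overline{p}_d)(x),(p_d/\overline{p}_d)(x^*)\}$ and invokes the first part---substantively the same detailed-balance observation you make, just phrased through the reference measure $\overline{P}_d$ rather than the Hastings ratio. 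Your approach is more computational but has the bonus of producing the proposal density in closed form; the paper's is shorter because it recognises the symmetry at the level of the $(x,z,x^*)$ integrand and never needs the substitution $t=1/z$ or the constant-tracking you flag as error-prone.
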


\begin{proof}
Write $R_d(x,\dif x^*)=\int_{z\in(0,\infty)}R_d(x,\dif z)R_d(x,z,\dif x^*)$ for the proposal transition kernel of the MpCN algorithm where
\begin{equation}\nonumber
R_d(x,\dif z)= \mathrm{InvGamma}(d/2,\|x\|^2/2)\\,\ R_d(x,z,dx^*)=N_d(\sqrt{\rho}x,(1-\rho)zI_d).
\end{equation}
Then 
\begin{equation}\nonumber
\overline{P}_d(\dif x)R_d(x,\dif z)R_d(x,z,\dif x^*)\propto 
\phi_{2d}\left(\left(\begin{array}{c}x\\x^*\end{array}\right);\left(\begin{matrix}0\\0\end{matrix}\right), z\left(\begin{matrix}I_d&\sqrt{\rho}I_d\\\sqrt{\rho}I_d&I_d\end{matrix}\right)\right)\frac{\dif z}{z}\dif x\dif x^*. 
\end{equation}
Since the right-hand side is exchangeable with respect to $x$ and $x^*$,  the proposal transition kernel $R_d(x,\dif x^*)$ is reversible with respect to $\overline{P}_d$. 
For the latter case, it is sufficient to show
\begin{equation}\nonumber
P_d(\dif x)R_d(x,\dif x^*)\alpha_d(x,x^*)
=
P_d(\dif x^*)R_d(x^*,\dif x)\alpha_d(x^*,x). 
\end{equation}
However, the left-hand side of the above is
\begin{equation}\nonumber
\overline{P}_d(\dif x)R_d(x,\dif x^*)\frac{p_d}{\overline{p}_d}(x)\alpha_d(x,x^*)
=
\overline{P}_d(\dif x)R_d(x,\dif x^*)\min\left\{\frac{p_d}{\overline{p}_d}(x),\frac{p_d}{\overline{p}_d}(x^*)\right\}. 
\end{equation}
Since $R_d(x,dx^*)$ is reversible with respect to $\overline{P}_d$, 
the right-hand side of the above is again, exchangeable with respect to $x$ and $x^*$. 
Hence the claim follows. 
\end{proof}

\subsection{Numerical results}
We consider two kinds of numerical experiments. \\

\noindent
\textbf{Efficiency of MpCN algorithm:}
In Sections \ref{sim1}-\ref{sim3}, we illustrate efficiency of the MpCN algorithm. We will compare two RWM algorithms and the pCN and MpCN algorithms with $M=10^8$ iterations (no burn-in) for each. 
The algorithms we consider are
\begin{enumerate}
\item The RWM algorithm with Gaussian proposal distribution. 
More precisely, the update $x^*$ from the current value $x$ is generated by 
$x^*=x+\sigma_d\epsilon$ where $\epsilon$ follows the standard normal distribution and 
$\sigma_d^2=1/d$ in this simulation.  
\item The RWM algorithm with the $t$-distribution as the proposal distribution (two degrees of freedom). 
More precisely, $x^*=x+\sigma_d\epsilon$ where $\epsilon$ follows the $t$-distribution with two degrees of freedom 
and $\sigma_d^2=1/d$ in this simulation.  
\item The pCN algorithm for $\rho=0.8$. 
\item The MpCN algorithm for $\rho=0.8$. 
\end{enumerate}
The target probability distributions are the following. 
\begin{enumerate}
\item[(a)] The standard normal distribution. 
\item[(b)] The $t$-distribution (two degrees of freedom). 
\item[(c)] A perturbation of the $t$-distribution. 
\end{enumerate}

For each target probability distribution and each algorithm, we generate a single 
Markov chain $\left\{X_m^d\right\}_m$ 
with initial value $X_0^d\sim N_d(0,I_d)$ and plot four figures
as in Figure \ref{Fig1}. 
\begin{figure}[H]
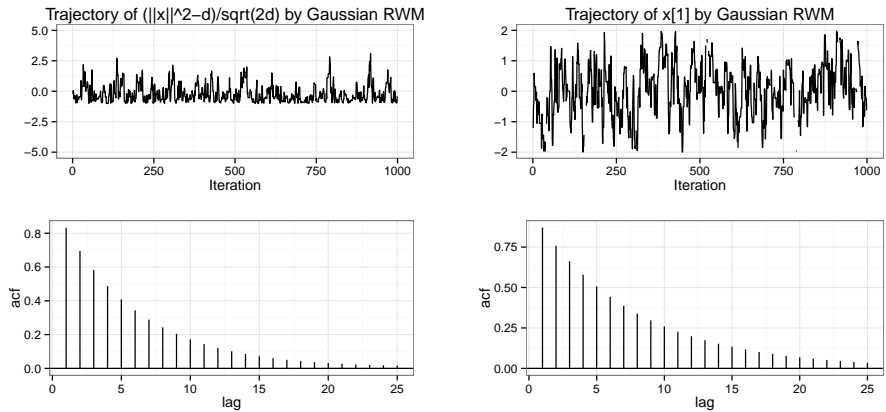

\centering
\plot{0.35}{360}{360}{num}
\plot{0.35}{360}{360}{num}
\caption{The RWM algorithm with Gaussian proposal distribution for $P_d=N_d(0,I_d)$ for $d=2$.\label{Fig1}} 
\end{figure}
This example is just for an illustration. The target probability distribution is the two dimensional standard normal distribution and the MCMC is  the RWM algorithm with Gaussian proposal distribution. These four plots are
\begin{enumerate}
\item[(i)] Trajectory of the normalised distance from the origin. 
When the target probability distribution is  the standard normal distribution, we plot $\left\{(2d)^{-1/2}(\|X_m^d\|^2-d)\right\}_m$
and for other cases, we plot $\left\{\|X_m^d\|^2/d\right\}_m$ (upper left). 
\item[(ii)] The autocorrelation plot of the above (bottom left). 
\item[(iii)] Trajectory $\{X_{m,1}^d\}_m$ where $X_m^d=(X_{m,1}^d,\ldots, X_{m,d}^d)$ (upper right). 
\item[(iv)] The autocorrelation plot of the above (bottom right). 
\end{enumerate}
The simulation results are illustrated in Sections \ref{sim1}-\ref{sim3}. \\

\noindent
\textbf{Shift perturbation effect:}
We also illustrate the limitation  of our algorithm and how to avoid it in Section \ref{sim4}. 
The target probability distribution is $P_d(\xi\textbf{1}-\dif x)$
where $\textbf{1}=(1,\ldots, 1)\in\mathbb{R}^d$ 
and
\begin{equation}\nonumber
\xi=0,1,2,3,\ \mathrm{or}\ 4
\end{equation}
and $P_d$ is 
\begin{enumerate}
\item[(a)] the standard normal distribution, or 
\item[(b)] the $t$-distribution (two degrees of freedom). 
\end{enumerate}
We plot
\begin{enumerate}
\item[(ii)] the autocorrelation plot of
 $\left\{(2d)^{-1/2}(\|X_m^d-\xi\textbf{1}\|^2-d)\right\}_m$ for the standard normal distribution, 
and plot that of $\left\{\|X_m^d-\xi\textbf{1}\|^2/d\right\}_m$ for the $t$-distribution
for $\xi\in \{0,1,2,3,4\}$. 
\end{enumerate}
Although we can not apply our theoretical results in this non-spherically 
symmetric target distribution, it is a good example to illustrate the limitation of our algorithm. 
The performance of MCMC for the shift $\xi\textbf{1}$ will illustrate shift sensitivity of the MCMC algorithms. 
The RWM algorithms are, essentially, free from the shift. However the pCN and MpCN are sensitive for this effect. 
Fortunately, this effect can be avoided by simple estimate of the peak. 
We will show the results with and without this peak estimation. 

Since RWM algorithm is free from this effect, we only consider the pCN and MpCN algorithms. 
We can compare the results in this section to that of the RWM algorithms in Sections \ref{sim1} and \ref{sim2}. 
We set $\rho=0.8$ and set $X_0^d\sim N_d(0,I_d)$. 

\subsubsection{The Standard normal distribution in $\mathbb{R}^{20}$}\label{sim1}

Set $P_d=N_d(0,I_d)$ for $d=20$. 
For this case, the optimal convergence rate for the RWM algorithm
is $d$, and the Gaussian proposal distribution attains this rate (Theorem 3.1 of \citet{arXiv:1406.5392}). 
On the other hand, the pCN and MpCN  algorithms attains consistency and so these algorithms are better than the 
optimal RWM algorithm (Theorems  \ref{Theorem-1} and  \ref{Theorem-2}). 
The simulation shows that the performance of the RWM algorithm for the Gaussian proposal and the $t$-distribution proposal are similar (Figures \ref{Fig2} and \ref{Fig3}), and that for the pCN  and MpCN algorithms are also similar (Figures \ref{Fig4} and \ref{Fig5})
and are much better than the former two algorithms.  

\begin{figure}[H]
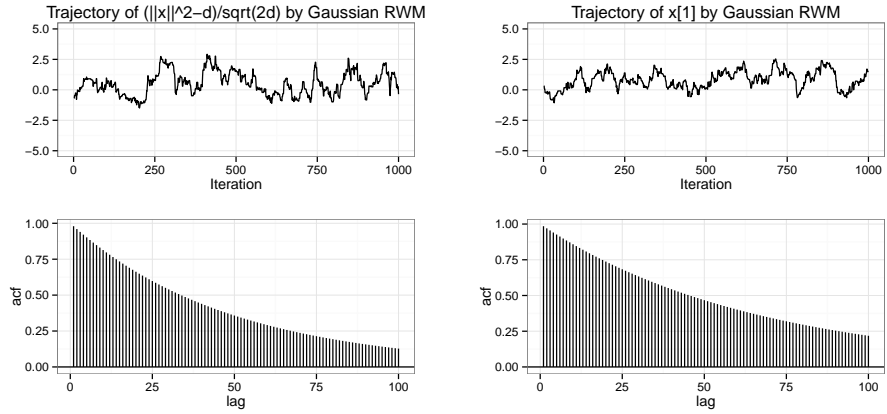

\centering
\plot{0.35}{360}{360}{num}
\plot{0.35}{360}{360}{num} 
\caption{The RWM algorithm with Gaussian proposal distribution for $P_d=N_d(0,I_d)$ for $d=20$.\label{Fig2}}
\end{figure}

\begin{figure}[H]
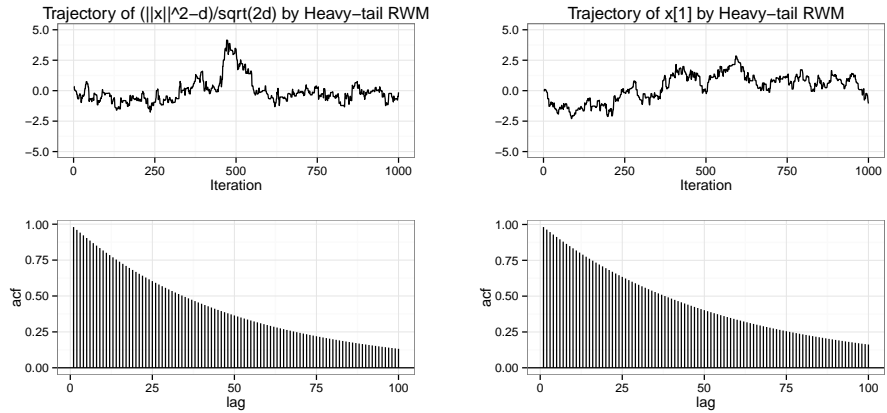

\centering
\plot{0.35}{360}{360}{num}
\plot{0.35}{360}{360}{num} 
\caption{The RWM algorithm with $t$-distribution as the proposal distribution for $P_d=N_d(0,I_d)$ for $d=20$.\label{Fig3}}
\end{figure}

\begin{figure}[H]
\centering
\plot{0.35}{360}{360}{num}
\plot{0.35}{360}{360}{num} 
\caption{The pCN algorithm for $P_d=N_d(0,I_d)$ for $d=20$.\label{Fig4}}
\end{figure}

\begin{figure}[H]
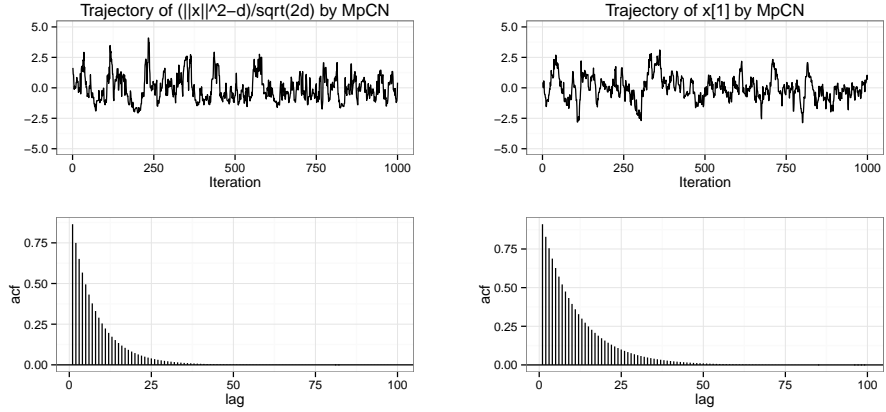

\centering
\plot{0.35}{360}{360}{num}
\plot{0.35}{360}{360}{num} 
\caption{The MpCN algorithm for $P_d=N_d(0,I_d)$ for $d=20$.\label{Fig5}}
\end{figure}

\subsubsection{$P_d$ is the $t$-distribution with two degrees of freedom in $\mathbb{R}^{20}$}\label{sim2}
Set $P_d$ as the $t$-distribution with $\nu=2$ degrees of freedom 
with the location parameter $\mu=0$ and the scale parameter $\sigma=5$ for $d=20$. 
Recall that the probability distribution  function is given by
\begin{equation}\nonumber
p_d(x)=\frac{\Gamma((\nu+d)/2)}{\Gamma(\nu/2)\nu^{d/2}\pi^{d/2}\sigma^d(1+\|(x-\mu)/\sigma\|^2/\nu)^{(\nu+d)/2}}. 
\end{equation}
For this case, the optimal convergence rate for the RWM algorithm
is $d^2$, and the Gaussian proposal distribution attains this rate (Theorem 3.2 of \citet{arXiv:1406.5392}). 
The pCN algorithm is much worse than the rate, and the  MpCN algorithm attains much better rate $d$ (Theorems \ref{Theorem-1} and \ref{Theorem-3}). 
In simulation, the MpCN algorithm (Figure \ref{Fig9}) is much better than other algorithms (Figures \ref{Fig6}-\ref{Fig8}) which corresponds to the theoretical result. 

\begin{figure}[H]
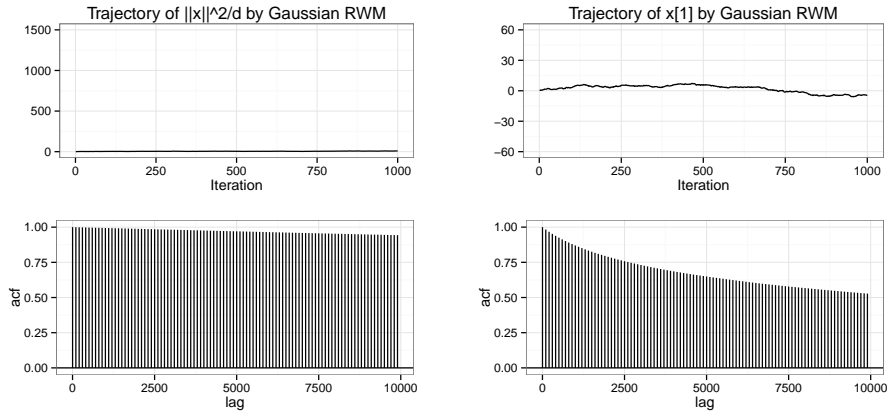

\centering
\plot{0.35}{360}{360}{num}
\plot{0.35}{360}{360}{num} 
\caption{The RWM algorithm with Gaussian proposal distribution when $t$-distribution  is the target distribution. \label{Fig6}}
\end{figure}

\begin{figure}[H]
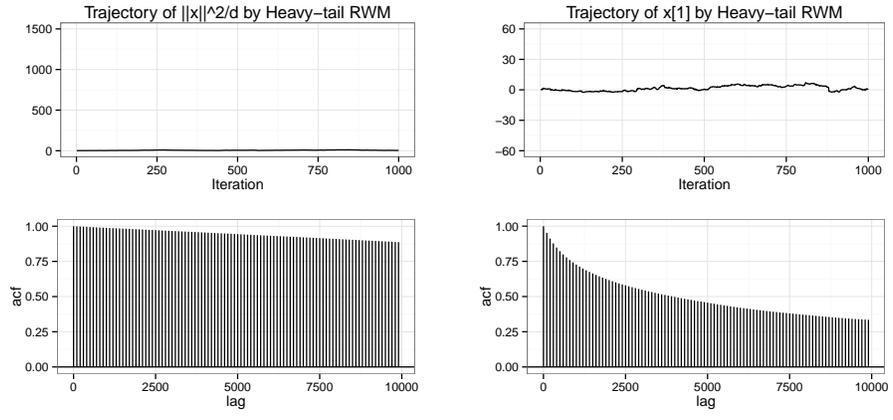

\centering
\plot{0.35}{360}{360}{num}
\plot{0.35}{360}{360}{num} 
\caption{The RWM algorithm with $t$-distribution as the proposal distribution and the target distribution is also the $t$-distribution.\label{Fig7}}
\end{figure}

\begin{figure}[H]
\centering
\plot{0.35}{360}{360}{num}
\plot{0.35}{360}{360}{num} 
\caption{The pCN algorithm when $t$-distribution  is the target probability distribution. \label{Fig8}}
\end{figure}

\begin{figure}[H]
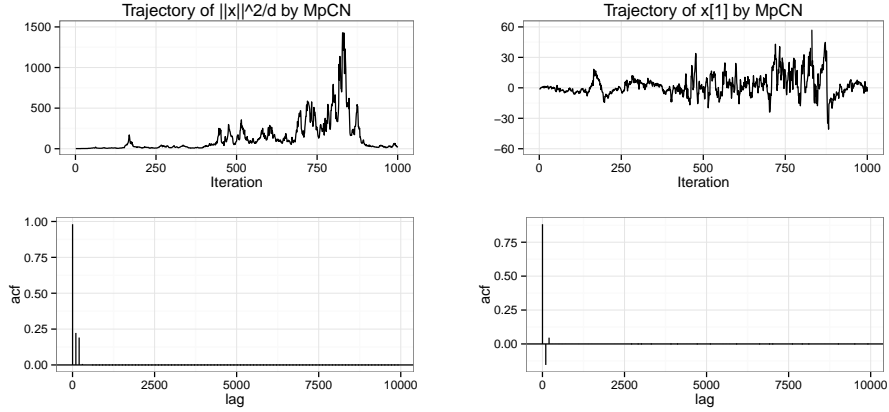

\centering
\plot{0.35}{360}{360}{num}
\plot{0.35}{360}{360}{num} 
\caption{The MpCN algorithm when $t$-distribution  is the target probability distribution.\label{Fig9}}
\end{figure}

\subsubsection{A perturbation of the $t$-distribution}\label{sim3}
We show the performance of the MpCN algorithm when the target distribution is not spherically symmetric. 
Let $P_{20}$ be a probability measure in $\mathbb{R}^{20}$ with the probability distribution function
\begin{equation}\nonumber
p_{20}(x_1,x_2,\ldots,x_{20})\propto\left(1+\sum_{i=1}^{20}\left(\frac{x_i-1}{5}\right)^2+|x_1|+\sin(x_2)/2\right)^{-(4+20)/2}. 
\end{equation}
The distribution is not scaled mixture and so we can not say 
anything for the convergence rate for this case. However by simulation we  observe that the MpCN algorithm (Figure \ref{Fig13})
is much better than other algorithms (Figures \ref{Fig10}-\ref{Fig12}). 

\begin{figure}[H]
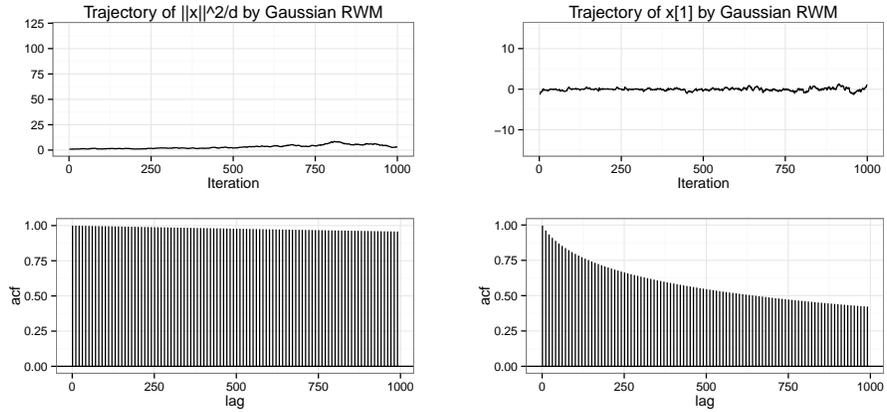

\centering
\plot{0.35}{360}{360}{num}
\plot{0.35}{360}{360}{num} 
\caption{The RWM algorithm with Gaussian proposal distribution when the perturbed $t$-distribution  is the target probability distribution. \label{Fig10}}
\end{figure}

\begin{figure}[H]
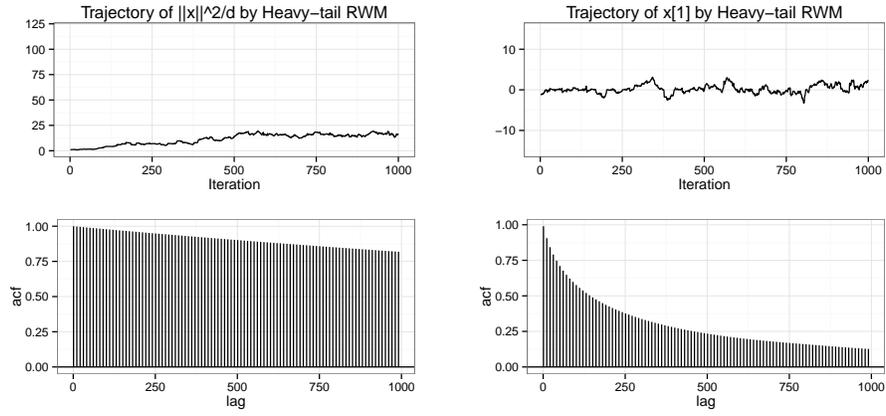

\centering
\plot{0.35}{360}{360}{num}
\plot{0.35}{360}{360}{num} 
\caption{The RWM algorithm with $t$-distribution as the proposal distribution and the target probability distribution is the perturbed $t$-distribution.\label{Fig11}}
\end{figure}

\begin{figure}[H]
\centering
\plot{0.35}{360}{360}{num}
\plot{0.35}{360}{360}{num} 
\caption{The pCN algorithm when the perturbed $t$-distribution  is the target probability distribution. \label{Fig12}}
\end{figure}

\begin{figure}[H]
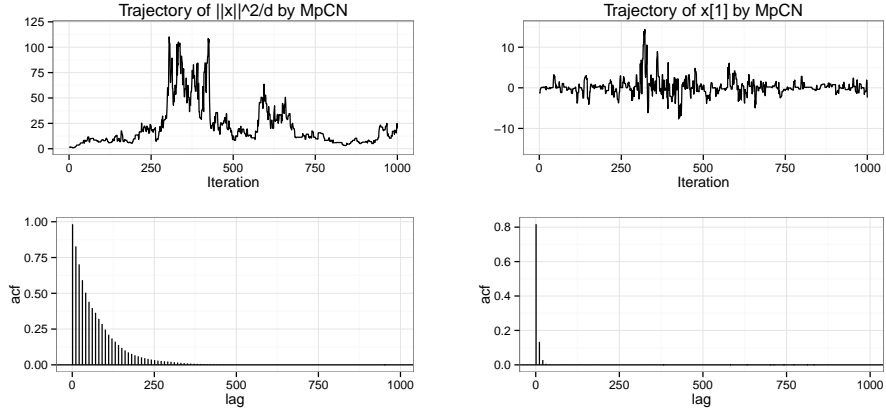

\centering
\plot{0.35}{360}{360}{num}
\plot{0.35}{360}{360}{num} 
\caption{The MpCN algorithm when the perturbed $t$-distribution  is the target probability distribution. \label{Fig13}}
\end{figure}

\subsubsection{Shift-perturbation of spherically symmetric target distributions}\label{sim4}

Let $P_d=N_d(\xi \textbf{1},I_d)$, where $\xi=0,1,2,3,4$ for $d=20$
and consider the pCN and MpCN algorithms. 
Compare the results of the RWM algorithms in Section \ref{sim1} (bottom left figures of Figures \ref{Fig2} and \ref{Fig3}). 
Figure \ref{Fig14} illustrates that although the performances of pCN and MpCN algorithms are much better than the RWM algorithms when $\xi=0$, 
it is sensitive to the value of $\xi$. Therefore for the light-tail target distribution in high-dimension, when the high-probability region is far from the origin, it is important to shift the target distribution in advance. 
For example, first, calculate rough estimate $\hat{\xi}$ of the peak of the target distribution $P_d(\dif x)$, and then 
run the MCMC algorithm for $P_d(-\hat{\xi}+\dif x)$. 
Some tempering strategy might be useful for the rough estimate of the peak
as  in \citet{UK}.  

\begin{figure}[H]
\centering
 \includegraphics[width=0.48\textwidth,bb=0 0 576 288]{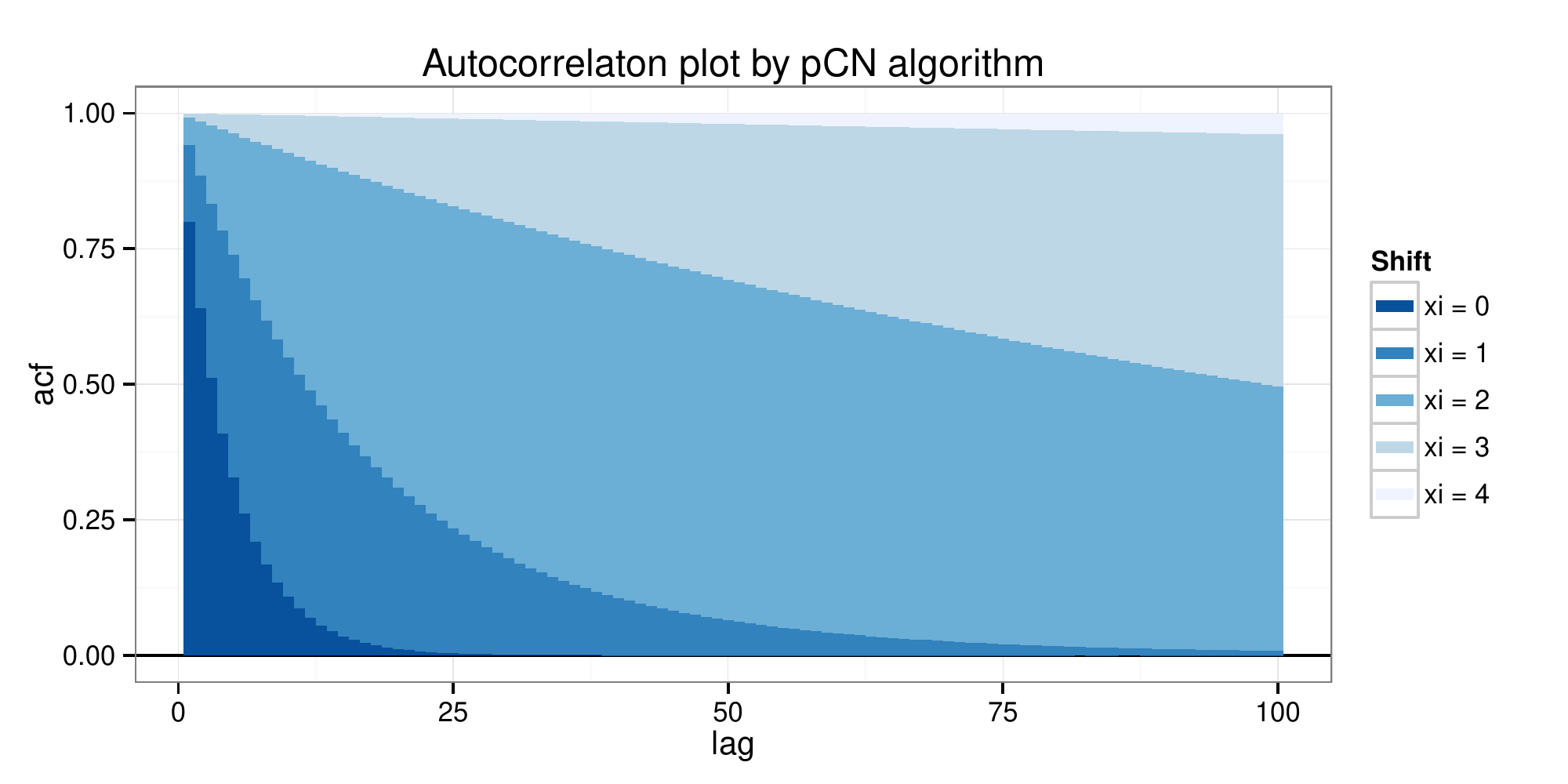}\includegraphics[width=0.48\textwidth,bb=0 0 576 288]{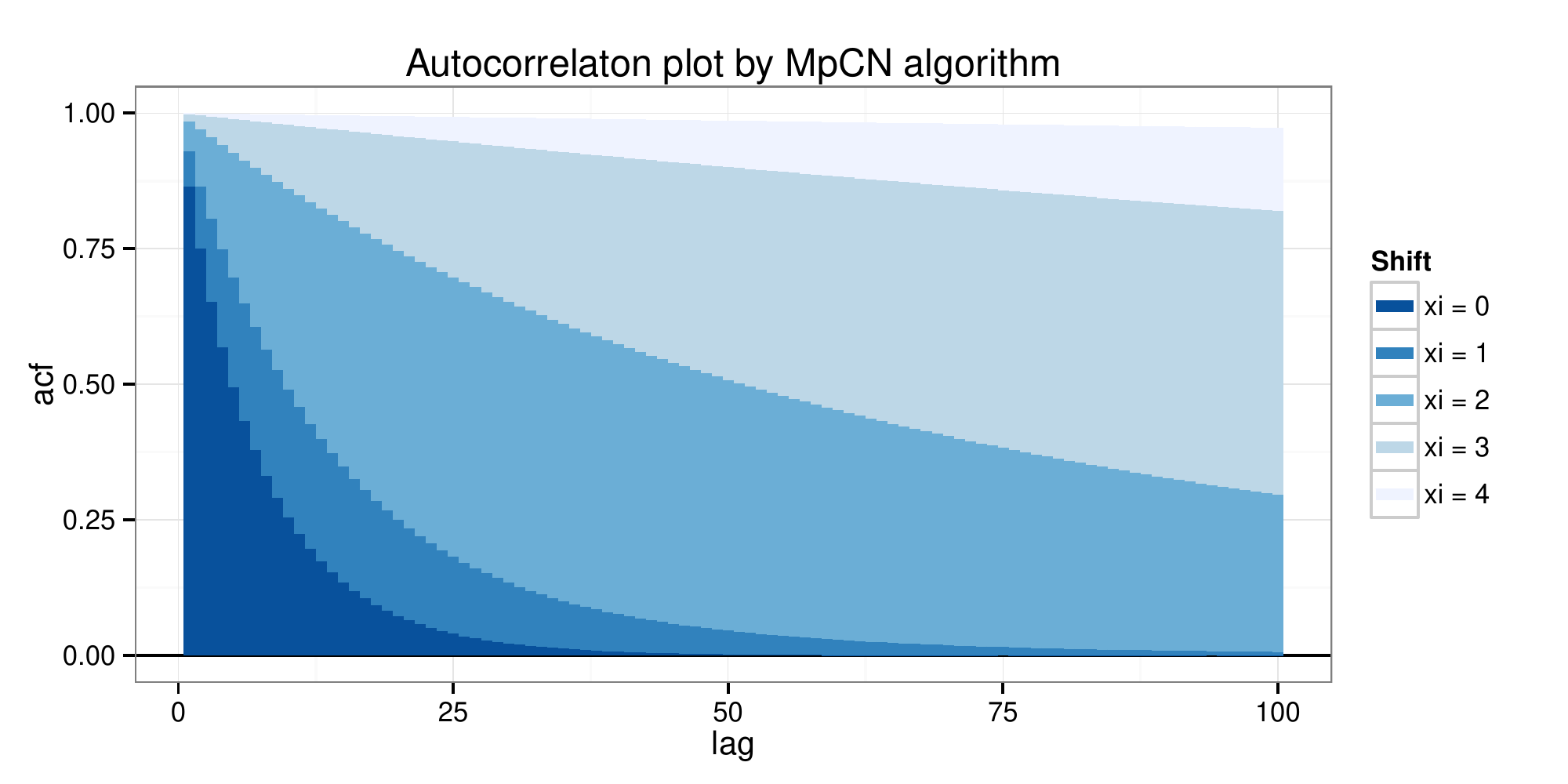}
  \caption{Autocorrelation plots for the pCN and MpCN algorithms for shifted normal distributions.\label{Fig14}}
\end{figure}

Next figure (Figure \ref{Fig15}) is a result of the pCN and MpCN algorithm with a simple peak estimation. We run $M=10^3$ iteration 
of the pCN or MpCN algorithm to calculate
\begin{equation}\label{j}
\hat{\xi}=M^{-1}\sum_{m=0}^{M-1}X_m^d
\end{equation}
and then run $M=10^8$ iteration 
of the pCN or MpCN algorithm for the target probability distribution $P_d(-\hat{\xi}+\dif x)$. The effect of the shift is considerably weakened. 

\begin{figure}[H]
\centering
 \includegraphics[width=0.48\textwidth,bb=0 0 576 288]{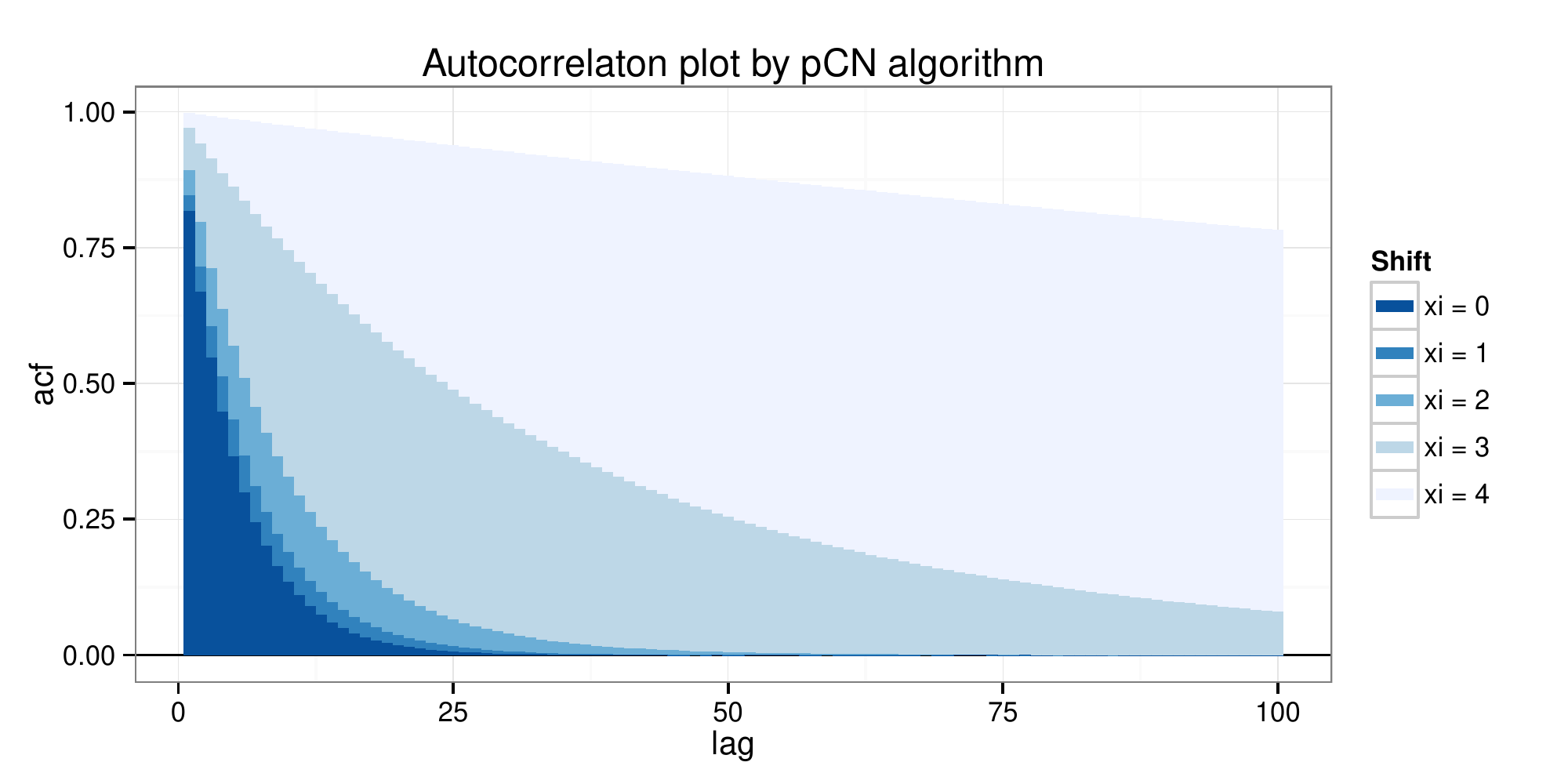}\includegraphics[width=0.48\textwidth,bb=0 0 576 288]{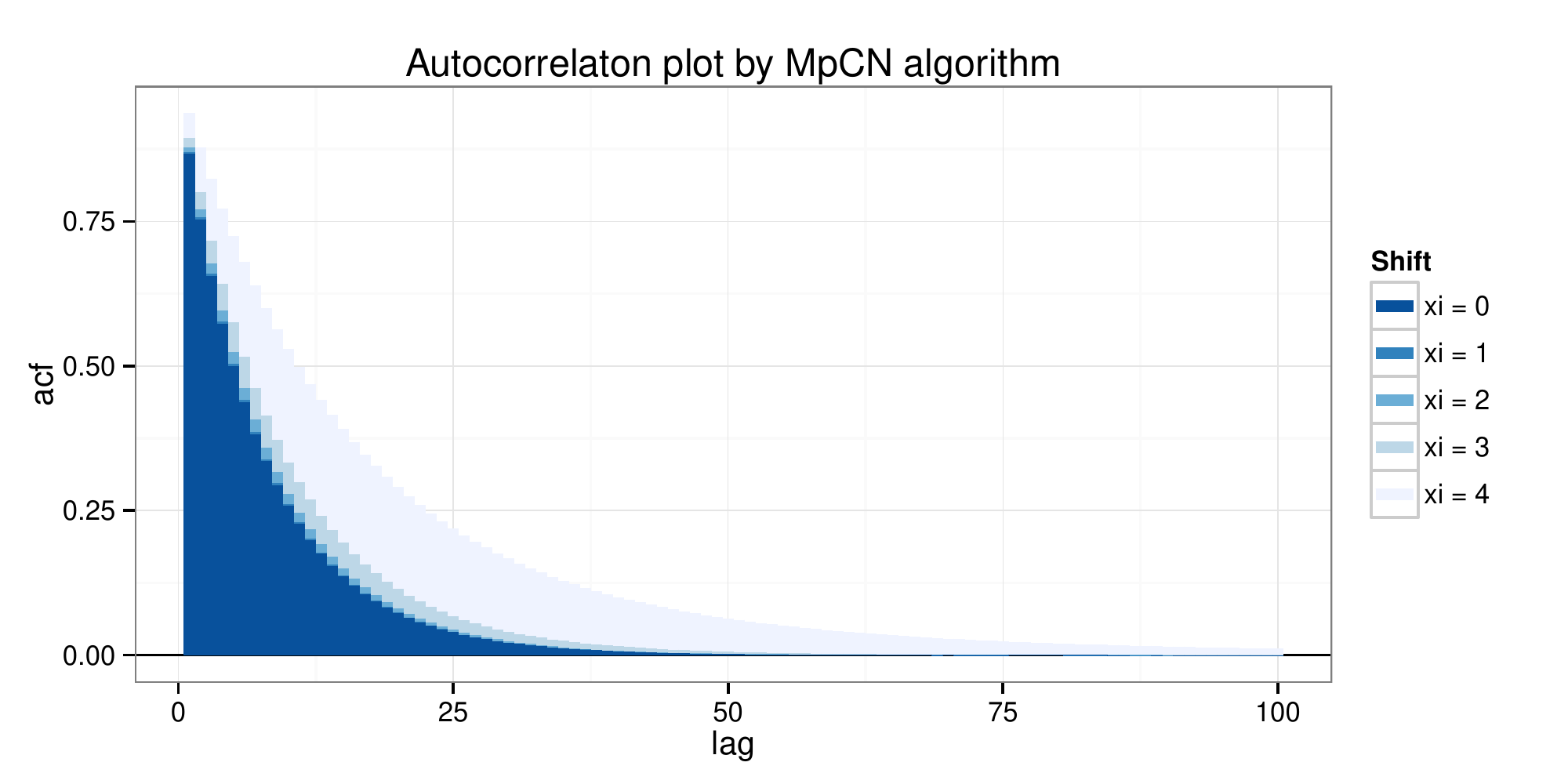}
  \caption{Autocorrelation plots for the pCN and MpCN algorithms for shifted normal distributions
  with an initial estimate of the peak.\label{Fig15}}
\end{figure}

We consider the $t$-distribution 
with $\nu=2$ and 
$\sigma=5$ where  $\xi=0,1,2,3,4$ for $d=20$. 
Compare the results in  Section \ref{sim2} for the RWM algorithms (bottom left figures of Figures \ref{Fig6} and \ref{Fig7}). 
Compared to the light-tailed distribution, the effect of the shift is small for the MpCN algorithm
and the five autocorrelation plots are overlapped in Figure \ref{Fig16}. 

\begin{figure}[H]
\centering
 \includegraphics[width=0.48\textwidth,bb=0 0 576 288]{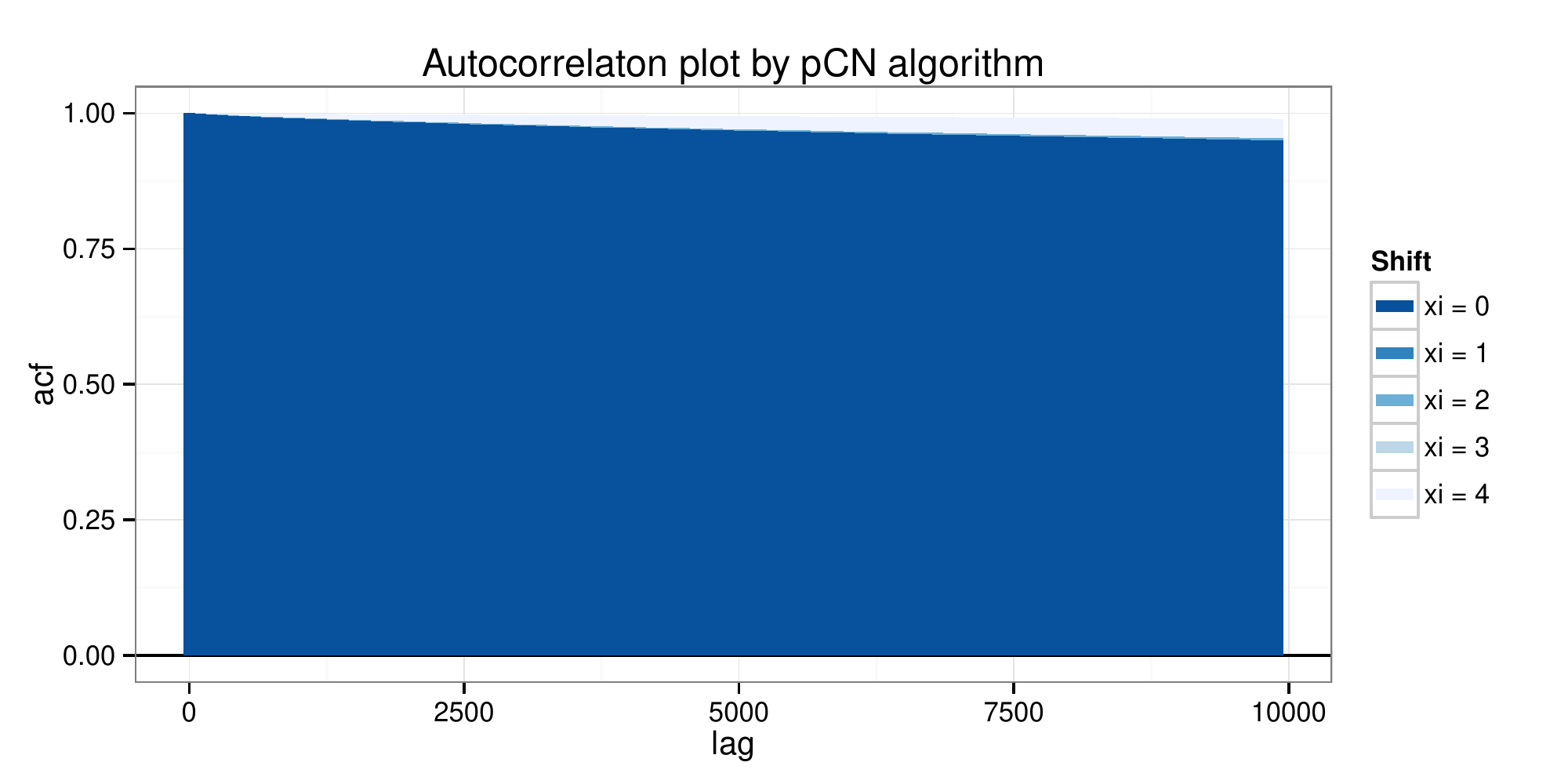}
  \includegraphics[width=0.48\textwidth,bb=0 0 576 288]{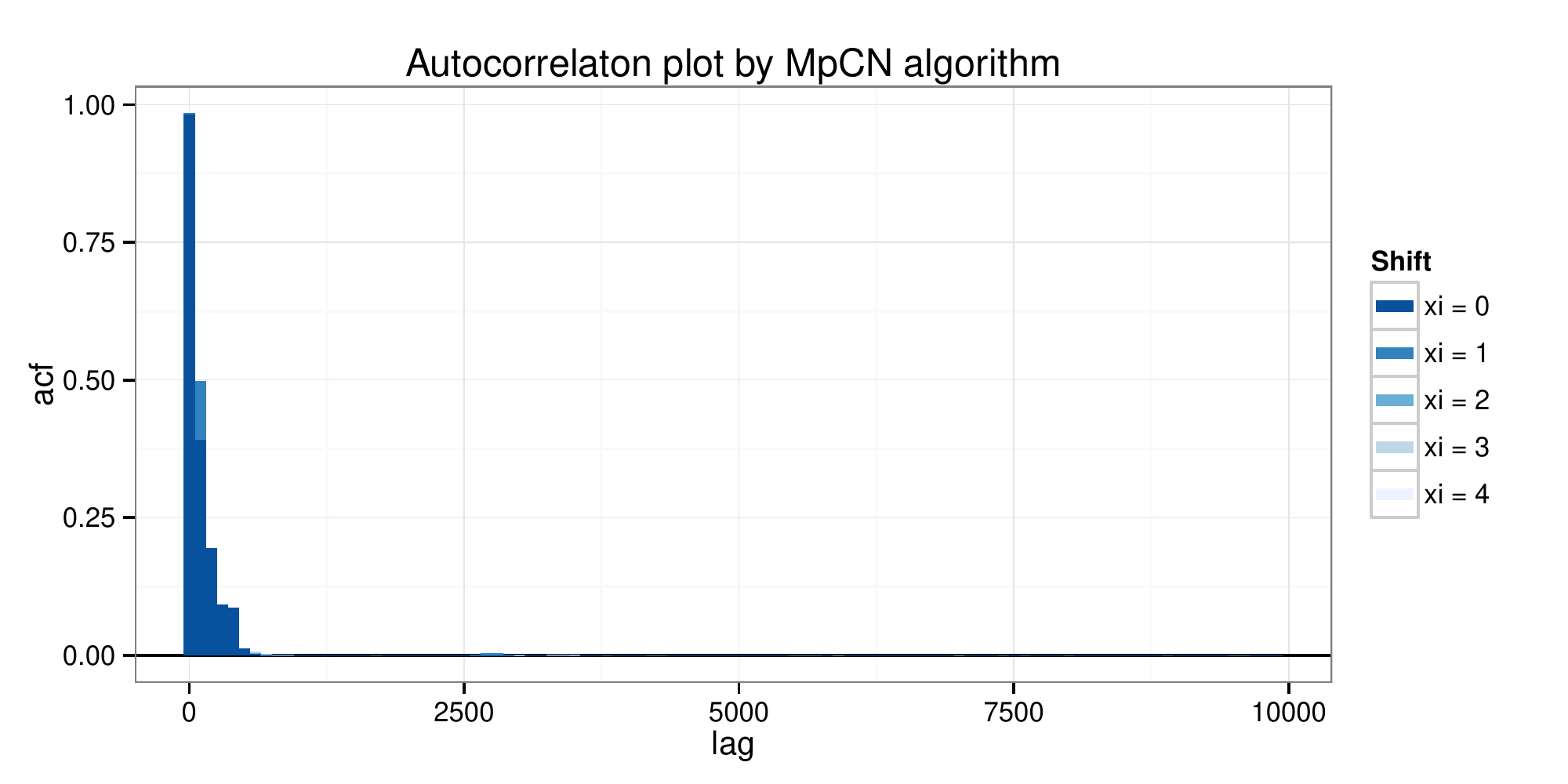}
  \caption{Autocorrelation plots for the pCN and MpCN algorithms for shifted $t$-distributions.\label{Fig16}}
\end{figure}

The next figure (Figure \ref{Fig17}), which is almost identical to the previous one,  is a result  of $M=10^8$ iteration of the pCN and MpCN algorithm with a simple peak estimation (\ref{j}) by $M=10^3$ iteration. 
Thus for heavy-tailed target distribution, the effect of shift perturbation is small, and the gain of 
the peak estimation is also small. 

\begin{figure}[H]
\centering
 \includegraphics[width=0.48\textwidth,bb=0 0 576 288]{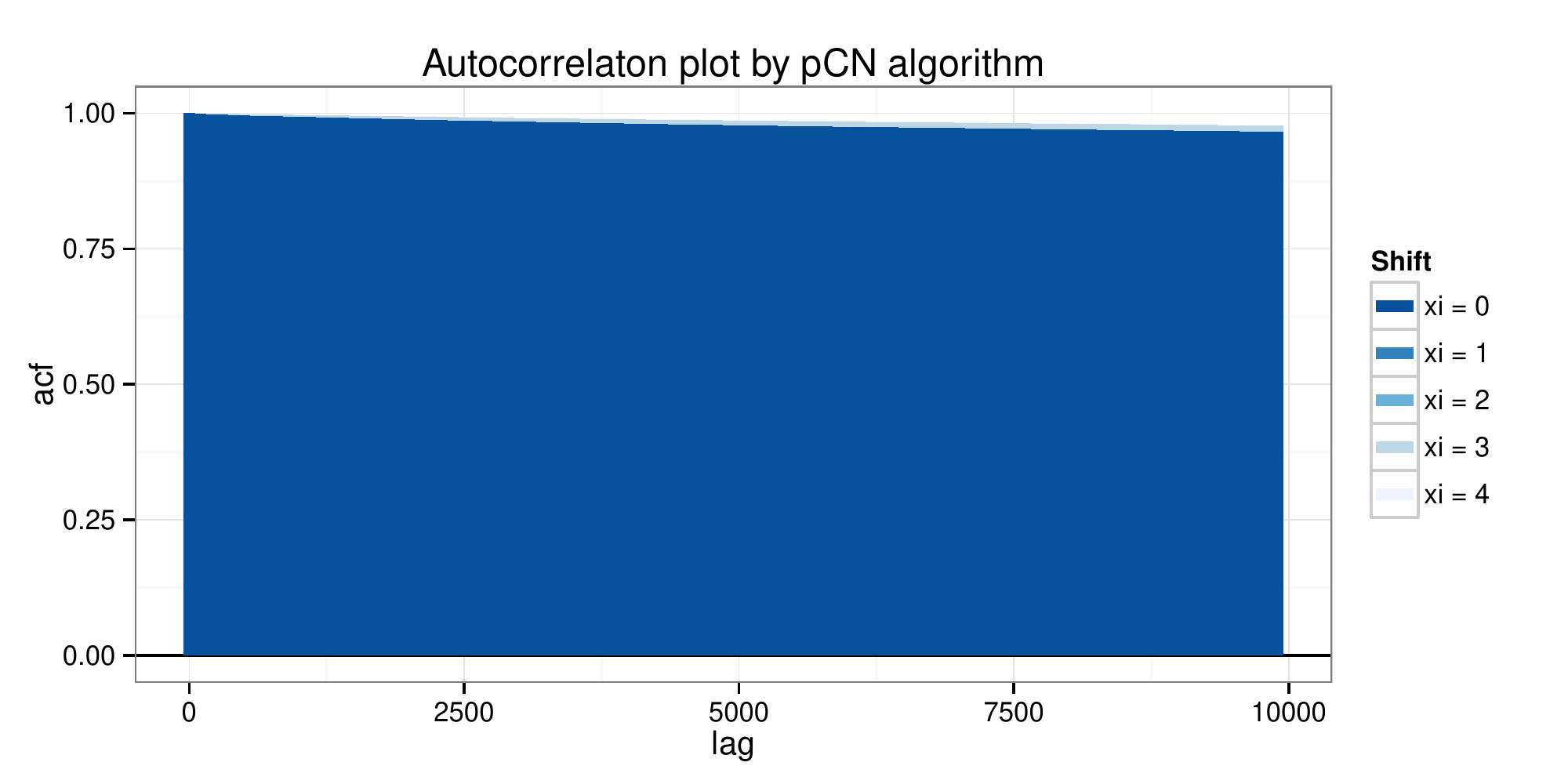}
  \includegraphics[width=0.48\textwidth,bb=0 0 576 288]{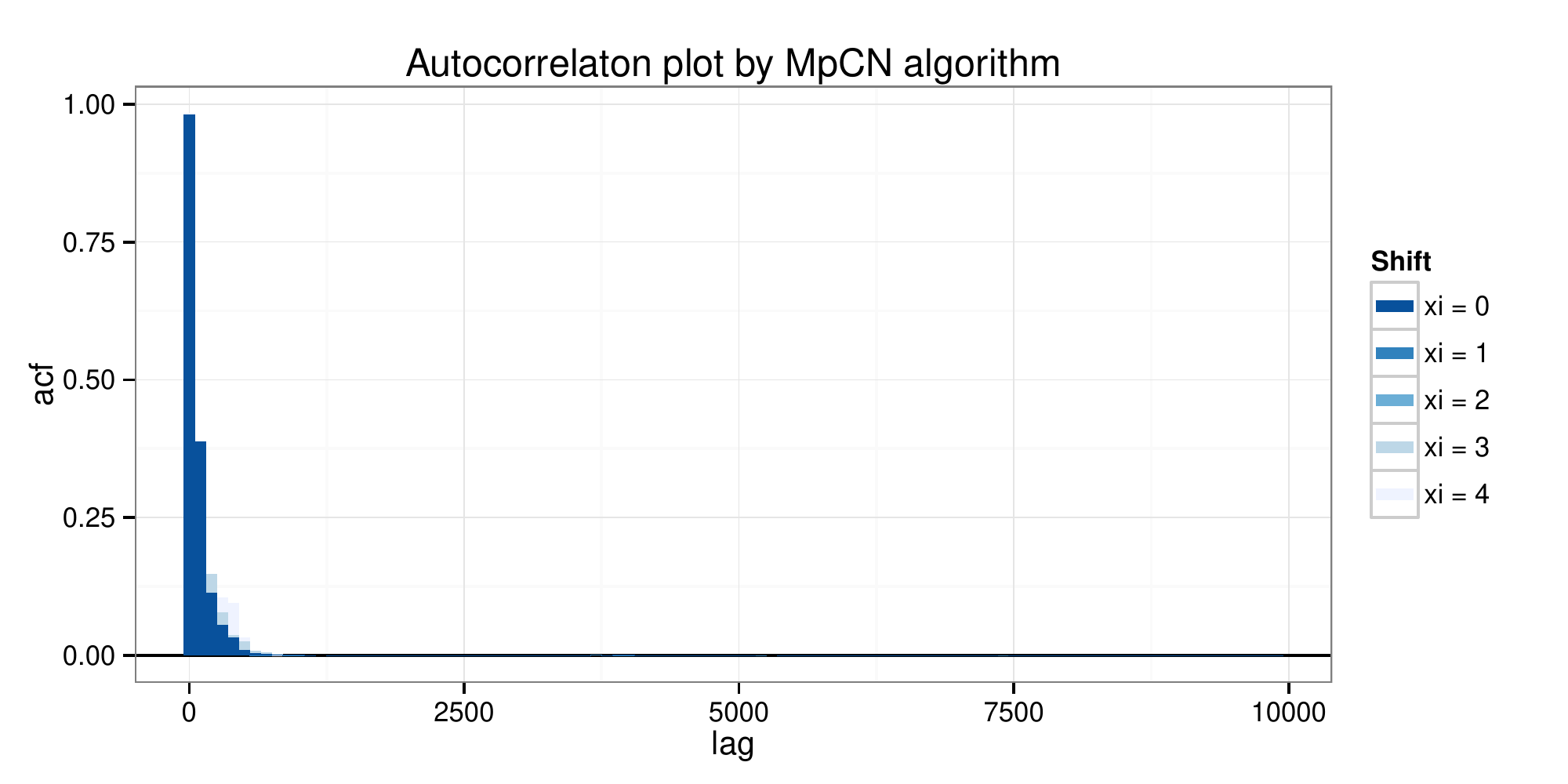}
  \caption{Autocorrelation plots for the pCN and MpCN algorithms for shifted $t$-distributions
  with an initial estimate of the peak. \label{Fig17}}
\end{figure}

\section{High-dimensional asymptotic theory}\label{highdimsection}

We consider a sequence of the target probability distributions $\{P_d\}_{d\in\mathbb{N}}$ indexed by the number of dimension $d$. 
For a given $d$, $P_d$ is a $d$-dimensional probability distribution that is a scale mixture of the normal distribution.  
Furthermore, our asymptotic setting is that the number of dimension $d$ goes infinity while the mixing distribution $Q$ of $P_d$ is unchanged. 

Note that in our results, stationarity and reversibility are essential. However this can be weakened 
as explained in Lemma 4 of \cite{Kamatani10}. 

\subsection{Consistency}

In this section, we review consistency of MCMC studied   in \cite{Kamatani10}. 
Set a sequence of Markov chains $\xi^d:=\left\{\xi^d_m;m\in\mathbb{N}_0\right\}\ (d\in\mathbb{N})$ with the invariant probability distributions  $\left\{\Pi_d\right\}_d$. 
The law of $\xi^d$ is called consistent if 
\begin{equation}\label{consistency}
\frac{1}{M}\sum_{m=0}^{M-1}f(\xi^d_m)-\Pi_d(f)=o_{\mathbb{P}}(1)
\end{equation}
for any $M, d\rightarrow\infty$ for any bounded continuous function $f$. This  says that 
the integral $\Pi_d(f)$ we want to calculate is approximated by a Monte Carlo simulated value
$\frac{1}{M}\sum_{m=0}^{M-1}f(\xi^d_m)$ after a reasonable number of iteration $M$.  
For example, regular Gibbs sampler should satisfy this type of property (more precisely, local consistency. See \cite{Kamatani10})
when $d$ is the sample size of the data. 
In the current case, the state space for  $X^d=\left\{X^d_m;m\in\mathbb{N}_0\right\}\ (d\in\mathbb{N})$
 changes as $d\rightarrow\infty$ that is inconvenient for further analysis.  
As in \citet{arXiv:1406.5392}, to overcome the difficulty, we set a projection $\pi_E=\pi_{d,E}$ for a finite subset $E\subset\left\{1,\ldots, d\right\}$ by
 \begin{equation}\nonumber
 \pi_E(x)=(x_i)_{i\in E}\ (x=(x_i)_{i=1,\ldots, d}). 
 \end{equation}
 
 \begin{definition}[Consistency]
 We call that the law of a $\mathbb{R}^d$-valued Markov chain $\left\{X^d_m\right\}_{m\in\mathbb{N}_0}$ is consistent if 
 \begin{equation}\label{eq1}
\frac{1}{M_d}\sum_{m=0}^{M_d-1}f\circ\pi_{E_d^k}(X^d_m)-P_d(f\circ\pi_{E_d^k})=o_{\mathbb{P}}(1)
 \end{equation}
as $d\rightarrow\infty$ for any 
$k\in\mathbb{N}$, $M_d\rightarrow\infty$ and for any bounded continuous function $f:\mathbb{R}^k\rightarrow\mathbb{R}$
and any $k$-elements $E_d^k$ of  $\left\{1,\ldots, d\right\}$. 
 \end{definition}
  We write $\pi_k$ for $\pi_{\left\{1,\ldots, k\right\}}$. In \citet{arXiv:1406.5392}, the role of $E_d^k$ is important, but in this paper, 
 we can assume that $\pi_{E_d^k}=\pi_k$ throughout in this paper due to rotational symmetricity  of the pCN and MpCN algorithms. 
As in \citet{arXiv:1406.5392}, we relax the condition for $M_d$ and introduce the convergence rate. 
 
  \begin{definition}[Weak Consistency]
We call that the law of a $\mathbb{R}^d$-valued Markov chain $\left\{X^d_m\right\}_{m\in\mathbb{N}_0}$ is weakly consistent with rate $T_d$ if 
(\ref{eq1}) is satisfied
 for any $M_d\rightarrow\infty$ such that 
 $M_d/T_d\rightarrow\infty$. 
 We will call the rate $T_d$, the convergence rate. 
 If $T_d/d^k\rightarrow 0$ for some $k\in\mathbb{N}$, we call that it has a polynomial rate of convergence. 
\end{definition}

The rate $T_d$ corresponds to the number of iteration until good convergence. Therefore smaller is better. 
In \citet{arXiv:1406.5392}, we showed that the optimal rate for the RWM algorithm is $d^2$
for heavy-tailed target probability distribution. We will show that this rate becomes $d$ for the MpCN algorithm. 
Note that when the MCMC is consistent, the convergence rate is $T_d=1$.

\subsection{Assumption for the target probability distribution}
  
Let $Q(\dif z)$ be a probability measure on $(0,\infty)$. 
Let $P_d$ be a scale mixture of the normal distribution defined by 
\begin{equation}\nonumber
P_d=\mathcal{L}(X^d_0),\ Q_d=\mathcal{L}(\|X^d_0\|^2/d)
\end{equation}
where $X^d_0|Z\sim N_d(0,ZI_d)$ and $Z\sim Q$. 
We will write
\begin{equation}\label{d}
P_d=P_d(Q). 
\end{equation}
In particular, $P_d(\delta_{\left\{1\right\}})=N_d(0,I_d)$.
Note that $Q_d\Rightarrow Q$ as $d\rightarrow\infty$ since $\|X_0^d\|^2/d\rightarrow Z\ \mathrm{a.s.}$
In this setup, $P_d$ and $Q_d$ have probability distribution functions $p_d$ and $q_d$ that satisfy 
\begin{equation}\nonumber
p_d(x)\propto \|x\|^{2-d} q_d\left(\frac{\|x\|^2}{d}\right). 
\end{equation} 
If $P_d=P_d(Q)$, the acceptance ratio of the MpCN algorithm defined in 
(\ref{MpCN}) can be written in the following form: 
\begin{equation}\label{accept_ratio}
\alpha_d(x,y)=\min\left\{1,\frac{\tilde{q}_d\left(\frac{\|y\|^2}{d}\right)}{\tilde{q}_d\left(\frac{\|x\|^2}{d}\right)}\right\}
\end{equation} 
where $\tilde{q}_d(x)=xq_d(x)$. 
We will assume the following regularity condition on $Q$ to show some properties of the MpCN algorithm. 

\begin{assumption}\label{Assumption-1}
Probability distribution $Q$ has the strictly positive continuously differentiable probability distribution function $q(y)$. 
Each $q(y)$ and $q'(y)$ vanishes at $+0$ and $+\infty$. 
\end{assumption}

\begin{lemma}[Lemma 2.2 of \citet{arXiv:1406.5392}]\label{assumplem}
Under Assumption \ref{Assumption-1},  $\lim_{d\rightarrow\infty}\|q_d-q\|_\infty=0$
and $\lim_{d\rightarrow\infty}\|q_d'-q'\|_\infty=0$. 
\end{lemma}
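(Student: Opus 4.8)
The plan is to realize $Q_d$ as a \emph{multiplicative} smoothing of $Q$ by a factor that concentrates at $1$, and then to exploit a multiplicative modulus of continuity that exactly matches this smoothing, so that every estimate is automatically uniform in the argument.

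First I would compute $q_d$ explicitly. Since $X_0^d\mid Z\sim N_d(0,ZI_d)$ with $Z\sim Q$, we have $\|X_0^d\|^2\mid Z\sim Z\chi^2_d$, so that $\|X_0^d\|^2/d=Z\,W_d$ with $W_d:=\chi^2_d/d$ independent of $Z$. Conditioning on $W_d$ gives the representation
\begin{equation}\nonumber
q_d(y)=\int_0^\infty \frac1w\,q\!\left(\frac{y}{w}\right)g_d(w)\,\dif w=\mathbb{E}\!\left[W_d^{-1}q\!\left(\frac{y}{W_d}\right)\right],
\end{equation}
where $g_d$ is the density of $W_d$. I would then record the elementary facts $\mathbb{E}[W_d]=1$, $\mathrm{Var}(W_d)=2/d\to0$, and the negative moments $\mathbb{E}[W_d^{-1}]=d/(d-2)$, $\mathbb{E}[W_d^{-2}]=d^2/((d-2)(d-4))$ (with $\mathbb{E}[W_d^{-4}]$ bounded), all of which tend to $1$ as $d\to\infty$; in particular $W_d\to1$, hence $\log W_d\to0$, in probability.

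The key observation is that the natural modulus of continuity here is multiplicative. Under Assumption \ref{Assumption-1} the map $u\mapsto q(e^u)$ is continuous on $\mathbb{R}$ with limits $0$ at $\pm\infty$, hence uniformly continuous; equivalently, for every $\epsilon>0$ there is $\delta>0$ with $|\log(y_1/y_2)|<\delta\Rightarrow|q(y_1)-q(y_2)|<\epsilon$. Since $|\log(y/W_d)-\log y|=|\log W_d|$ does not depend on $y$, on the event $\{|\log W_d|<\delta\}$ the bound $|q(y/W_d)-q(y)|<\epsilon$ holds \emph{simultaneously for all} $y$. Writing
\begin{equation}\nonumber
q_d(y)-q(y)=\mathbb{E}\!\left[W_d^{-1}\{q(y/W_d)-q(y)\}\right]+q(y)\,\mathbb{E}[W_d^{-1}-1],
\end{equation}
I would bound the second term by $\|q\|_\infty(\mathbb{E}[W_d^{-1}]-1)\to0$, and split the first term over $\{|\log W_d|<\delta\}$ and its complement: on the former it is at most $\epsilon\,\mathbb{E}[W_d^{-1}]$, and on the latter Cauchy--Schwarz gives $2\|q\|_\infty(\mathbb{E}[W_d^{-2}])^{1/2}\mathbb{P}(|\log W_d|\ge\delta)^{1/2}\to0$. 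This yields $\sup_y|q_d(y)-q(y)|\to0$.

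For the derivative I would differentiate the representation under the expectation, justified by boundedness of $q'$ together with $\mathbb{E}[W_d^{-2}]<\infty$, obtaining $q_d'(y)=\mathbb{E}[W_d^{-2}q'(y/W_d)]$. Since $q'$ also satisfies the hypotheses of Assumption \ref{Assumption-1}, the same argument applied to $q'$ with the weight $W_d^{-2}$ (using $\mathbb{E}[W_d^{-2}]\to1$ and the boundedness of $\mathbb{E}[W_d^{-4}]$) gives $\sup_y|q_d'(y)-q'(y)|\to0$. The only genuinely delicate point is the passage from the pointwise behaviour of $W_d$ to a bound uniform in $y$; replacing the usual additive modulus of continuity by the multiplicative one is precisely what removes the $y$-dependence and lets the tails (where both $q$ and $q'$ vanish) and the bulk be controlled by the same estimate. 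The remaining steps — the moment computations and the dominated-convergence justification of differentiation — are routine.
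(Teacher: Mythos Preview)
The paper does not include a proof of this lemma: it is quoted as Lemma~2.2 of \citet{arXiv:1406.5392}, and the argument is deferred to that reference. So there is no in-paper proof to compare your proposal against.

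Your argument is correct and self-contained. The representation $q_d(y)=\mathbb{E}[W_d^{-1}q(y/W_d)]$ with $W_d=\chi^2_d/d$ independent of $Z$ is exactly right, and the device of passing to the logarithmic scale --- using that $u\mapsto q(e^u)$ is uniformly continuous on $\mathbb{R}$ because it is continuous with limit $0$ at $\pm\infty$ --- is precisely what makes the estimate uniform in $y$: since $|\log(y/W_d)-\log y|=|\log W_d|$ does not depend on $y$, the modulus of continuity controls all $y$ simultaneously on the high-probability event $\{|\log W_d|<\delta\}$. Two minor remarks. First, when you write that ``$q'$ also satisfies the hypotheses of Assumption~\ref{Assumption-1}'', what you actually use (and what the assumption does guarantee) is only that $q'$ is continuous on $(0,\infty)$ and vanishes at $0^+$ and $+\infty$; positivity and differentiability of $q'$ are neither assumed nor needed for the uniform-continuity step. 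Second, the interchange of derivative and expectation is indeed justified by the domination $|W_d^{-2}q'(y/W_d)|\le \|q'\|_\infty W_d^{-2}$, which is integrable once $d>4$. With these clarifications the proof is complete.
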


Probability distribution $P_d(Q)$ that satisfies above includes many heavy-tailed probability distributions such as the $t$-distribution and the stable distribution. 
See \citet{arXiv:1406.5392}. 

%

\subsection{Main results}

Even for Gaussian target probability distribution,  the pCN algorithm may not work well. 
If the target probability distribution $P_d(Q)$ is different from $N_d(0,I_d)$, then 
any polynomial number of iteration is not sufficient for the pCN algorithm to have a
good approximation of the integral we want to calculate. 

\begin{theorem}\label{Theorem-1}
Let $P_d=P_d(Q)$. 
Then $\mathrm{pCN}(P_d)$ have the polynomial rate of convergence if and only if $Q=\delta_{\{1\}}$. If $Q=\delta_{\{1\}}$, then $\mathrm{pCN}(P_d)$ has consistency. 
\end{theorem}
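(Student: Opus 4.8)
The plan is to treat the two directions separately, isolating the radial coordinate $S_m^d := \|X_m^d\|^2/d$ as the bottleneck. For the easy direction, suppose $Q=\delta_{\{1\}}$, so that $P_d=N_d(0,I_d)$, whence $p_d=\phi_d$ and the acceptance ratio $\alpha_d\equiv 1$: every proposal is accepted and the chain is the $d$-dimensional AR(1) recursion $X_m^d=\sqrt{\rho}X_{m-1}^d+\sqrt{1-\rho}W_m^d$. The first $k$ coordinates then form a fixed ($d$-independent for $d\ge k$) stationary, geometrically ergodic $k$-dimensional AR(1) process with invariant law $N_k(0,I_k)$. First I would invoke the ergodic theorem for this single process to get $\frac{1}{M}\sum_{m=0}^{M-1}f(\pi_k X_m^d)\to N_k(f)$ as $M\to\infty$, and since $P_d(f\circ\pi_k)=N_k(f)$ exactly and the limit process does not depend on $d$, conclude consistency (i.e.\ $T_d=1$).

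Now suppose $Q\ne\delta_{\{1\}}$. Because $p_d$ and $\phi_d$ are both spherically symmetric, the acceptance ratio depends on $x,y$ only through the radii $s=\|x\|^2/d$ and $t=\|y\|^2/d$. Writing $V_d(x)=\log(p_d(x)/\phi_d(x))$ and using $p_d(x)\propto\|x\|^{2-d}q_d(\|x\|^2/d)$, a direct computation gives
\[
V_d(y)-V_d(x)=\frac{d}{2}\bigl(h(t)-h(s)\bigr)+\log\frac{t}{s}+\log\frac{q_d(t)}{q_d(s)},\qquad h(s):=s-\log s,
\]
so that $\alpha_d=\min\{1,\exp(V_d(y)-V_d(x))\}$. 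The next step is to analyse the proposal radius: expanding $\|X_m^{d*}\|^2=\rho\|X_{m-1}^d\|^2+(1-\rho)\|W_m^d\|^2+2\sqrt{\rho(1-\rho)}\langle X_{m-1}^d,W_m^d\rangle$ and using $\|W\|^2/d\to 1$ together with $\langle X,W\rangle/d\to 0$, I would show that, given $S_{m-1}^d=s$, the proposal radius satisfies $S_m^{d*}=\rho s+(1-\rho)+O_{\mathbb{P}}(d^{-1/2})$, a deterministic contraction of $s$ toward $1$.

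The heart of the argument is that this contraction is rejected with overwhelming probability away from $s=1$. Since $h$ is strictly convex with unique minimum at $1$, and $\rho s+(1-\rho)$ lies strictly between $s$ and $1$ whenever $s\ne 1$, one has $h(\rho s+1-\rho)<h(s)$; with Lemma \ref{assumplem} controlling the two $O(1)$ correction terms on compacta, this forces $V_d(y)-V_d(x)\le -cd+O_{\mathbb{P}}(\sqrt d)$ for some $c=c(s,\rho)>0$ bounded below on any compact set of radii avoiding $1$. I would then bound the one-step acceptance probability by $\mathbb{E}[\min\{1,e^{V_d(X^*)-V_d(X)}\}]\le e^{-c'd}$ (splitting off the event that the $O_{\mathbb{P}}(\sqrt d)$ fluctuation exceeds $cd$, which itself has probability $e^{-\Theta(d)}$); a union bound over any polynomially many steps $M_d=d^K$ then gives $M_d e^{-c'd}\to 0$, so with probability tending to $1$ \emph{no} proposal is accepted and the whole chain is frozen at $X_0^d$ (the freezing is self-consistent, since a frozen radius stays away from $1$). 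Starting from stationarity $X_0^d\sim P_d$ we have $S_0^d\Rightarrow Z\sim Q$, and since $Q\ne\delta_{\{1\}}$ there is $\delta>0$ with $Q(\{|z-1|\ge\delta\})>0$; on that event $\frac{1}{M_d}\sum_{m=0}^{M_d-1}f(\pi_k X_m^d)=f(\pi_k X_0^d)$, whose limit law (driven by $X_{0,1}^d\to\sqrt{Z}\,\xi_1$, $\xi_1\sim N(0,1)$) is non-degenerate for a suitable bounded continuous $f$ and hence does not concentrate at the constant $P_d(f\circ\pi_k)$. Thus (\ref{eq1}) fails for $M_d=d^K$, and as $K$ is arbitrary, no polynomial rate can exist.

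The delicate point, and the main obstacle, is the uniform control of the one-step acceptance probability. I must ensure that the Gaussian-scale fluctuations of $S_m^{d*}$ (coming from the cross term $\langle X,W\rangle$ and from $\|W\|^2/d-1$) cannot, even rarely, produce an accepted move that changes the radius, and that the bound $e^{-c'd}$ holds uniformly over the relevant compact range of radii so that the union bound over the $M_d=d^K$ steps survives. Making the $O_{\mathbb{P}}(\sqrt d)$ remainder and the $O(1)$ terms $\log(t/s)$, $\log(q_d(t)/q_d(s))$ genuinely uniform — via Lemma \ref{assumplem} and Gaussian concentration for $\|W\|^2$ and $\langle X,W\rangle$ — is where the real work lies.
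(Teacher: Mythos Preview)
Your treatment of the case $Q=\delta_{\{1\}}$ matches the paper's exactly: the acceptance ratio is identically $1$, the first $k$ coordinates form a fixed $d$-independent stationary AR(1) chain, and the classical ergodic theorem yields consistency.

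For $Q\neq\delta_{\{1\}}$ your route and the paper's diverge sharply. You compute the acceptance ratio explicitly and argue it is $e^{-cd}$ on radii bounded away from $1$. The paper (Lemma~\ref{inconsistlem0}) never touches the acceptance ratio. Instead it uses only that the proposal radius satisfies $\overline{R}_1^{d*}\approx\rho\,\overline{R}_0^d+O_{\mathbb P}(d^{-1/2})$ together with \emph{reversibility}: it builds compacts $I_0,I_1,I_2\subset(-1,\infty)\setminus\{0\}$ with $I_1\subset\mathrm{int}(\rho I_0)$, $\rho I_1\subset\mathrm{int}(I_2)$ and $I_0\cap I_2=\emptyset$, then shows by Chebyshev that any accepted move from $I_1$ must (up to $o(d^{-p})$) land in $I_0\cap I_2=\emptyset$, hence the one-step acceptance probability is $o(d^{-p})$ for every $p$. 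This is a purely structural argument, independent of the form of $p_d$.

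Your argument has a genuine gap at the step where you bound the correction $\log\bigl(q_d(t)/q_d(s)\bigr)$ by $O(1)$ via Lemma~\ref{assumplem}. That lemma requires Assumption~\ref{Assumption-1}, which Theorem~\ref{Theorem-1} does \emph{not} impose. Without it the correction need not be $O(1)$: take $Q=\delta_{\{a\}}$ with $a\neq 1$; then $q_d$ is a Gamma density of shape $d/2$, concentrated in a window of width $O(d^{-1/2})$ around $a$, so $\log q_d(t)-\log q_d(s)$ is of exact order $d$ whenever $|t-a|$ is bounded away from zero (as it is for $t\approx\rho a+(1-\rho)$). Your decomposition $\tfrac d2(h(t)-h(s))+O(1)$ therefore fails in this regime, and while in that particular example the net exponent happens to stay negative, your argument as written does not establish it for general $Q$. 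Under Assumption~\ref{Assumption-1} your direct computation goes through and even gives an exponential (rather than merely super-polynomial) bound; in the stated generality it does not, whereas the paper's reversibility trick is completely insensitive to the structure of $q_d$.
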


\begin{proof}
The results for $Q\neq\delta_{\{1\}}$ comes from Lemma \ref{inconsistency}. 
Consistency for $Q=\delta_{\{1\}}$ comes from Lemma \ref{consistencylem}. 
\end{proof}

On the other hand, the MpCN algorithm always works well for light-tailed target distribution. 
More precisely, the following holds: 

\begin{theorem}\label{Theorem-2}
$\mathrm{MpCN}(P_d)$ is consistent for $P_d=N_d(0,\sigma^2 I_d)$ for any $\sigma>0$. 
\end{theorem}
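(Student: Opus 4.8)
The plan is to remove the scale parameter first and then analyse the standard Gaussian case through the joint behaviour of the squared radius and a fixed block of coordinates. Recall that $\mathrm{MpCN}(P_d)$ is defined with $X_0^d\sim P_d$, so by Lemma \ref{reverse} the chain is stationary throughout, which is the setting in which the consistency theory applies.

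\textbf{Reduction to $\sigma=1$.} First I would use the scale-equivariance of the MpCN algorithm. If $\{X_m^d\}$ is the MpCN chain for $N_d(0,\sigma^2 I_d)$, then $\{\sigma^{-1}X_m^d\}$ is the MpCN chain for $N_d(0,I_d)$: the acceptance ratio $\alpha_d(x,y)=\min\{1,p_d(y)\|x\|^{-d}/(p_d(x)\|y\|^{-d})\}$ is invariant under $x\mapsto\sigma^{-1}x$ because the Gaussian density rescales to the standard one and the factors $\sigma^{\mp d}$ arising from $\|x\|^{-d},\|y\|^{-d}$ cancel, while the inverse-Gamma scale $\|x\|^2/2$ and the proposal rescale consistently (using that $cZ\sim\mathrm{InvGamma}(\nu,c\alpha)$ when $Z\sim\mathrm{InvGamma}(\nu,\alpha)$). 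Since $x\mapsto f(\sigma x)$ is again bounded continuous and $P_d^{(\sigma)}(f\circ\pi_k)=P_d^{(1)}((f(\sigma\,\cdot))\circ\pi_k)$, consistency for $\sigma=1$ against all bounded continuous test functions yields the general case. Hence it suffices to treat $P_d=N_d(0,I_d)$.

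\textbf{Radial--coordinate limit.} Here $Q=\delta_{\{1\}}$ violates Assumption \ref{Assumption-1}, so Lemma \ref{assumplem} is unavailable and I would argue directly from $\|X_m^d\|^2\sim\chi^2_d$. Writing $U=\|x\|^2$, $V=\|y\|^2$, the acceptance ratio equals $\exp(g(V)-g(U))$ with $g(u)=-u/2+(d/2)\log u$, maximised at $u=d$. I would introduce $Y_m=(\|X_m^d\|^2-d)/\sqrt d$ and $\xi_m=\pi_k(X_m^d)$, so that in stationarity $Y_m\Rightarrow N(0,2)$ and $\xi_m\Rightarrow N_k(0,I_k)$. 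A moment computation for the proposal $X^*=\sqrt\rho\,X+\sqrt{(1-\rho)Z}\,W$, using that $Z$ is $\|x\|^2$ divided by an independent $\chi^2_d$ variable and $\|W\|^2\sim\chi^2_d$, gives that $V-U$ has conditional mean $O(1)$ and conditional variance $\sim 4(1-\rho)U^2/d$; hence $Y^*-Y\Rightarrow 2\sqrt{1-\rho}\,\zeta$ with $\zeta\sim N(0,1)$, while a Taylor expansion of $g$ at $u=d$ (the linear term vanishes, $g''(d)=-1/(2d)$) gives $g(V)-g(U)\to\tfrac14(Y^2-Y^{*2})$, the Metropolis ratio for the $N(0,2)$ target. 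Simultaneously $Z\to1$ in probability, so $\xi^*=\sqrt\rho\,\xi+\sqrt{1-\rho}\,\eta$ with $\eta\sim N_k(0,I_k)$. Because $k$ is fixed, the first $k$ coordinates contribute only $O(1)=o(\sqrt d)$ to $U,V$, so the acceptance decision is asymptotically a function of $(Y,Y^*)$ alone and the increments $\eta,\zeta$ are asymptotically independent. The quantitative Gaussian approximations needed here are exactly what the Stein/Malliavin estimates advertised in the introduction supply.

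\textbf{Ergodic limit and conclusion.} The previous step identifies the limiting Markov chain on $\mathbb{R}^k\times\mathbb{R}$: propose $(\xi^*,Y^*)=(\sqrt\rho\,\xi+\sqrt{1-\rho}\,\eta,\,Y+2\sqrt{1-\rho}\,\zeta)$ and accept or reject \emph{both} with probability $\min\{1,\exp(\tfrac14(Y^2-Y^{*2}))\}$. The $Y$-marginal is an autonomous random-walk Metropolis chain for $N(0,2)$, which is geometrically ergodic, and the $\xi$-component is a contractive ($\sqrt\rho<1$) iterated-random-function system driven by the almost surely infinitely many accepted steps; since the accept/reject event is independent of $\xi$ and both the AR(1) map and the identity preserve $N_k(0,I_k)$, one checks that $N_k(0,I_k)\otimes N(0,2)$ is invariant and that the joint chain is ergodic with this unique invariant law. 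Feeding the weak convergence of $\{(\xi_m,Y_m)\}_m$ to this ergodic limit into the consistency criterion of the Kamatani framework (Section \ref{appen2}, cf. \cite{Kamatani10}) converts ergodicity of the limit into $(\ref{eq1})$ for every $M_d\to\infty$, giving consistency of $\mathrm{MpCN}(N_d(0,I_d))$ and, by the reduction above, of $\mathrm{MpCN}(N_d(0,\sigma^2 I_d))$. The delicate part is the weak-convergence step itself: establishing the joint limit of $(\xi^*,Y^*)$ with the claimed asymptotic independence, controlling the Taylor remainder of $g$ uniformly over the relevant (including moderately large) range of $Y$, and verifying tightness, so that the heuristic $\chi^2$ expansions become the precise hypotheses required by the consistency lemma.
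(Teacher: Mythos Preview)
Your plan is correct and shares the paper's architecture: reduce to $\sigma=1$ by scale equivariance, identify a limiting Metropolis--Hastings chain on $(\pi_k(X),r_d(X))$ (your limit agrees with that of Lemma~\ref{consistencylem} after the rescaling $Y=\sqrt{2}R$), prove process-level weak convergence to it, check ergodicity, and apply Lemma~\ref{lem1}. The execution of the weak-convergence step is different, however. Where you propose to Taylor-expand the log-acceptance ratio $g(V)-g(U)$ and control remainders---the ``delicate part'' you flag---the paper avoids touching the acceptance step altogether: it observes that both $(R_m^d,S_m^d)$ and the limit $(R_m,S_m)$ are themselves Metropolis--Hastings chains, so by Lemmas~\ref{MHlemma1} and~\ref{MHlemma2} it suffices to prove \emph{total-variation} convergence of the joint law of state and \emph{proposal}, $\mathcal{L}(R_0^d,S_0^d,R_1^{d*},S_1^{d*})$. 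That TV convergence is obtained by decomposing the centred vector into pieces in the first and second Wiener chaoses and invoking the Nourdin--Poly theorem (Theorem~\ref{NourdinPoly}), which upgrades weak to TV convergence on fixed chaoses for free. This trades your uniform Taylor-remainder analysis for a structural observation plus an off-the-shelf result, and it automatically delivers the asymptotic independence of $(\eta,\zeta)$ from the current state that you argue separately. For ergodicity the paper also takes a shortcut: since the pCN-type $S$-proposal is reversible for $N_k$, the $S$-part cancels in the MH ratio and the limit is exactly Metropolis--Hastings for $N_{k+1}(0,I_{k+1})$, so ergodicity follows directly from \citet{TierneyAOS94} without the separate $Y$/$\xi$ analysis you sketch.
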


\begin{proof}
By considering consistency of $\left\{X_m^d/\sigma\right\}_m$, 
it is sufficient to prove for $\sigma=1$, 
which is proved in Lemma \ref{consistencylem}. 
\end{proof}

When $P_d$ is a heavy-tailed distribution, we already know that the pCN algorithm does not work well by Theorem \ref{Theorem-1}. However, the MpCN algorithm still has a good convergence property. 
Recall that the optimal convergence rate for the RWM algorithm is $d^2$ as studied in \citet{arXiv:1406.5392}. 
Let $[x]$ denote the integer part of $x>0$. 
See Section \ref{discuss} for the proof of Proposition \ref{Proposition-1} and Theorem \ref{Theorem-3}.

\begin{proposition}\label{Proposition-1}
Let $Q$ satisfy 
Assumption \ref{Assumption-1} and $P_d=P_d(Q)$. 
Set  $X^d\sim \mathrm{MpCN}(P_d)$, and let $Y_t^d=r_d(X^d_{[dt]})$. Then $Y^d=(Y^d_t)_t$ converges to 
the stationary ergodic process $Y=(Y_t)_t$ (in Skorohod's topology)
that is the solution of 
\begin{equation}\label{Proposition-1-eq1}
\dif Y_t=a(Y_t)\dif t+\sqrt{b(Y_t)} \dif W_t; Y_0\sim Q
\end{equation}
where
\begin{equation*}\nonumber
a(y)
=2(2y+(\log q)'(y)y^2)(1-\rho),\ 
b(y)=4y^2(1-\rho). 
\end{equation*}
\end{proposition}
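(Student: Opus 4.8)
\emph{Reduction to a one-dimensional chain.} Writing $Y^d_m=r_d(X^d_m)=\|X^d_m\|^2/d$ for the normalised squared radius, the first observation is that, because $P_d=P_d(Q)$ is spherically symmetric, the MpCN proposal $X^*=\sqrt{\rho}x+\sqrt{(1-\rho)Z}W$ is rotationally equivariant, and both $Z$ and the acceptance ratio \eqref{accept_ratio} depend on $x$ only through $\|x\|$, the radial process $Y^d=\{Y^d_m\}_m$ is itself a Markov chain on $(0,\infty)$: from $y$ it moves to a proposal $Y^*=\|X^*\|^2/d$ whose law depends on $y$ alone, accepted with probability $\min\{1,\tilde q_d(Y^*)/\tilde q_d(y)\}$. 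Moreover, since $X^d$ is reversible with respect to $P_d$ (Lemma \ref{reverse}) and $Y^d$ is a Markov function of it, $Y^d$ is reversible with respect to its marginal $Q_d$ (density $q_d$). The plan is then to apply a standard diffusion-approximation theorem for rescaled Markov chains (of Stroock--Varadhan / Ethier--Kurtz type): it suffices to verify convergence of the rescaled infinitesimal moments to $a$ and $b$ uniformly on compact subsets of $(0,\infty)$, negligibility of large jumps, and well-posedness of the limiting martingale problem, and to match the initial law $Y^d_0\sim Q_d\Rightarrow Q$ (the convergence $Q_d\Rightarrow Q$ being noted earlier).

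\emph{Infinitesimal moments of the proposal.} I would first expand $Y^*=\rho y+2\sqrt{\rho(1-\rho)Z}\,\langle x,W\rangle/d+(1-\rho)Z\|W\|^2/d$. Here $Z\sim\mathrm{InvGamma}(d/2,\|x\|^2/2)$ concentrates at $y$ with $\mathbb E[Z]=\|x\|^2/(d-2)$ and $\mathrm{Var}(Z)=O(y^2/d)$, while $\langle x,W\rangle\sim N(0,\|x\|^2)$ and $\|W\|^2\sim\chi^2_d$ with $Z\perp W$. A direct moment computation then gives $\mathbb E[Y^*-y\mid y]=2(1-\rho)y/(d-2)$, hence $d\,\mathbb E[Y^*-y]\to 2(1-\rho)y$; and, since the linear functional $\langle x,W\rangle$ and the quadratic $\|W\|^2$ are asymptotically uncorrelated, $d\,\mathrm{Var}(Y^*\mid y)\to 4(1-\rho)y^2=b(y)$, with third and higher proposal moments $o(1/d)$. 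Thus the diffusion coefficient is produced entirely by the proposal. The quantitative control needed here — asymptotic normality of $\sqrt d\,(Y^*-y)$ with explicit error bounds, so that the non-smooth acceptance step below can be handled — is exactly where the Malliavin calculus of the Gaussian functionals $\langle x,W\rangle,\|W\|^2$ and Stein's normal approximation enter.

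\emph{The acceptance step (the crux).} With $L=\log\tilde q_d(Y^*)-\log\tilde q_d(y)$ and $\tilde q_d(u)=u\,q_d(u)\to\tilde q(u)=u\,q(u)$ together with $\tilde q_d'\to\tilde q'$ (Lemma \ref{assumplem}), the increment of the accepted chain is $\Delta=(Y^*-y)\min\{1,e^{L}\}$. Since $L=(\log\tilde q)'(y)(Y^*-y)+O((Y^*-y)^2)=O(d^{-1/2})$, I would use $\min\{1,e^{L}\}=1-(1-e^{L})_+=1-(-L)_+ +o(d^{-1/2})$, whence $\mathbb E[\Delta\mid y]=\mathbb E[Y^*-y\mid y]-\mathbb E[(Y^*-y)(-L)_+\mid y]+o(1/d)$. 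Using the Gaussian limit law of $\sqrt d\,(Y^*-y)$ (variance $b(y)$) and the elementary identity $\mathbb E[\zeta\,(-c\zeta)_+]=-c/2$ for $\zeta\sim N(0,1)$, the second term satisfies $d\,\mathbb E[(Y^*-y)(-L)_+]\to-\tfrac12(\log\tilde q)'(y)b(y)$, so that
\[
 d\,\mathbb E[\Delta\mid y]\to 2(1-\rho)y+\tfrac12(\log\tilde q)'(y)b(y)=2(1-\rho)y+\tfrac12\Big(\tfrac1y+(\log q)'(y)\Big)4(1-\rho)y^2=a(y),
\]
while $\min\{1,e^{L}\}\to1$ gives $d\,\mathbb E[\Delta^2\mid y]\to b(y)$. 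I expect this to be the main obstacle: the map $t\mapsto\min\{1,e^t\}$ is not differentiable at $0$, so one cannot expand pointwise, and the term $\mathbb E[(Y^*-y)(-L)_+]$ must be evaluated to precision $o(1/d)$, uniformly for $y$ in compacts. This is precisely what the quantitative normal approximation supplies: replacing the law of $\sqrt d\,(Y^*-y)$ by its Gaussian limit in a Lipschitz but non-smooth functional, with a Berry--Esseen error that survives multiplication by $d$, together with uniform integrability of $d\,(Y^*-y)^2$.

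\emph{Assembling the limit.} As a structural check, the limiting generator $\mathcal A f=a f'+\tfrac12 b f''$ is the unique $q$-reversible form $\tfrac12 q^{-1}(q\,b\,f')'$, consistent with the reversibility of $Y^d$ with respect to $q_d$; in particular its stationary density is $\propto q$, matching the assertion $Y_0\sim Q$ and the ergodicity claim. To complete the argument I would verify negligibility of jumps, $d\,\mathbb P(|Y^*-y|>\varepsilon\mid y)\to 0$ (immediate from the sub-Gaussian and $\chi^2$ tails), and well-posedness of the martingale problem for $\mathcal A$ on $(0,\infty)$. For the latter it is convenient to pass to $U^d_m=\log Y^d_m$, under which the diffusion coefficient becomes the constant $4(1-\rho)$ and the drift a function of $U$ alone; Assumption \ref{Assumption-1} (vanishing of $q$ and $q'$ at $0$ and $\infty$) then makes both boundaries inaccessible, yielding a unique non-explosive ergodic solution. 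Feeding convergence of the infinitesimal moments, tightness and compact containment (clean in the $U$ coordinate, where the noise is additive) and the matched initial law into the diffusion-approximation theorem gives $Y^d\Rightarrow Y$ in Skorokhod topology, which is the assertion.
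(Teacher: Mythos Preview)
Your proposal is correct and follows the same overall diffusion-approximation strategy as the paper: reduce to the one-dimensional radial chain, establish uniform convergence of the rescaled first and second moments of the increment on compacts (with fourth moment going to zero), and invoke a Stroock--Varadhan-type theorem (the paper cites Theorem IX.4.21 of Jacod--Shiryaev). The substantive difference is in how the acceptance correction to the drift is extracted. You expand $\min\{1,e^L\}=1-(-L)_+ +o(d^{-1/2})$ and then evaluate $d\,\mathbb E[(Y^*-y)(-L)_+]$ via the explicit Gaussian identity $\mathbb E[\zeta(-c\zeta)_+]=-c/2$, relying on quantitative normal approximation plus uniform integrability of $d(Y^*-y)^2$. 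The paper instead proves a Stein-type identity (their Proposition~\ref{main}, obtained from the Malliavin representation of $F_d$) of the form $\mathbb E_y[F_d f(F_d)]=\mathbb E_y[f'(F_d)]+d^{-1/2}\sqrt{1-\rho}\,\mathbb E_y[f(F_d)]+O(d^{-1/2}\|f'\|_\infty\vee d^{-1}\|f\|_\infty)$ and applies it directly to $f=\alpha_{y,d}$ after a tempering truncation $|F_d|\le d^{1/4}$ to make $f'$ bounded; the two contributions you isolate then emerge as the $f'$ term and the mean-of-$F_d$ term. Your route is more elementary and makes the origin of the $\tfrac12 b(\log\tilde q)'$ correction transparent; the paper's route avoids having to control the $O(L^2)$ remainder in the expansion of $\min\{1,e^L\}$ to precision $o(d^{-1})$ inside the expectation. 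For well-posedness you pass to $U=\log Y$ to make the diffusion coefficient constant, whereas the paper computes the scale function $s'(x)\propto 1/(x^2 q(x))$ and shows both boundaries are natural via a Schwarz inequality; these are equivalent standard devices.
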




\begin{theorem}\label{Theorem-3}
Let $Q$ satisfy 
Assumption \ref{Assumption-1} and $P_d=P_d(Q)$. 
Then
 $\mathrm{MpCN}(P_d)$ has the convergence rate $d$. 
\end{theorem}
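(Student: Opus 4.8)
The plan is to upgrade the diffusion limit of Proposition \ref{Proposition-1} into weak consistency with rate $d$, the extra work being to pass from the radial summary statistic to genuine coordinate functionals $f\circ\pi_k$. Write $R^d_m=\|X^d_m\|^2/d$ for the slow radial statistic; since the limit process in Proposition \ref{Proposition-1} has $Y_0\sim Q$ and $Q_d=\mathcal{L}(\|X_0^d\|^2/d)\Rightarrow Q$, the rescaling map there must be $r_d(x)=\|x\|^2/d$, so Proposition \ref{Proposition-1} says precisely that $t\mapsto R^d_{[dt]}$ converges in Skorohod topology to the stationary ergodic diffusion $Y$ with invariant law $Q$. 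The time-rescaling by $d$ is what will produce the rate $d$.

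First I would prove weak consistency with rate $d$ for radial functionals $h(R^d_m)$, with $h$ bounded continuous. By ergodicity of $Y$ the averages $T^{-1}\int_0^T h(Y_s)\,\dif s$ converge to $\int h\,\dif Q$, and $\int h\,\dif Q=\lim_d P_d(h(\|\cdot\|^2/d))$ because $Q_d\Rightarrow Q$. Transferring this through the weak convergence $R^d_{[d\cdot]}\Rightarrow Y$ gives $M_d^{-1}\sum_{m<M_d}h(R^d_m)-P_d(h(\|\cdot\|^2/d))=o_{\mathbb{P}}(1)$ whenever $M_d/d\to\infty$. The one delicate point, passing from convergence on each fixed window $[0,T]$ to windows of length $M_d/d\to\infty$, is exactly what the abstract consistency criterion of \citet{Kamatani10} (recalled in Section \ref{appen2}) is built to supply, so I would quote it rather than reprove it.

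The substantive new step is to replace $h(R^d_m)$ by $f\circ\pi_k$. To identify the target value I use that $P_d=P_d(Q)$ is rotationally symmetric: conditionally on $\|X_0^d\|^2=dr$ the vector is uniform on the sphere of radius $\sqrt{dr}$, and by the Maxwell--Poincar\'e phenomenon its first $k$ coordinates converge to i.i.d.\ $N(0,r)$, whence $P_d(f\circ\pi_k)\to\int g(r)\,Q(\dif r)$ with $g(r)=\int_{\mathbb{R}^k}f(y)\prod_{i=1}^k\phi(y_i;0,r)\,\dif y$. To reproduce this along the chain I use that the MpCN kernel is rotation-equivariant and that its acceptance ratio (\ref{accept_ratio}) depends on $x$ only through $\|x\|^2/d$; consequently, conditionally on the radial path, the direction $X^d_m/\|X^d_m\|$ evolves by a rotation-invariant kernel that mixes in a fresh $\sqrt{1-\rho}$-fraction of isotropic noise at every step and hence relaxes to the uniform law on the sphere geometrically fast, with a contraction rate bounded uniformly in $d$. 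Partitioning $\{0,\dots,M_d-1\}$ into blocks long enough for this angular equilibration yet short enough that $R^d_m$ is nearly frozen across a block, the within-block average of $f(\pi_k(X^d_m))$ is close to $g(R^d_m)$, while the between-block average of $g(R^d_m)$ is handled by the previous step; together these give $M_d^{-1}\sum_{m<M_d}f(\pi_k(X^d_m))\to\int g\,\dif Q=\lim_d P_d(f\circ\pi_k)$ in probability for $M_d/d\to\infty$, i.e.\ weak consistency with rate $d$.

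The hard part will be this two-timescale averaging: proving, uniformly in $d$, that the fast angular component self-averages to the spherical (hence coordinatewise Gaussian) law on windows where the slow radial diffusion barely moves, and that the block-decomposition errors vanish. Concretely I expect the crux to be a uniform-in-$d$ geometric ergodicity (or Wasserstein-contraction) estimate for the conditional angular dynamics, together with control of the coupling between the isotropic proposal noise and the radial increments. Everything else, the ergodicity and invariant law of $Y$, the Maxwell--Poincar\'e limit, and the conversion of the diffusion limit into the rate, is either immediate from Proposition \ref{Proposition-1} or quotable from \citet{Kamatani10}.
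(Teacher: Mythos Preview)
Your overall architecture matches the paper's: decouple the slow radial statistic $R_m^d=\|X_m^d\|^2/d$ (handled by Proposition \ref{Proposition-1}) from a fast coordinate process, and then invoke the abstract two-timescale consistency criterion in Section \ref{appen2} (specifically Lemma \ref{joint}). The difference is entirely in how the ``fast'' component is treated.

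You propose to analyze the full angular dynamics $X_m^d/\|X_m^d\|$ on $S^{d-1}$, prove a uniform-in-$d$ geometric ergodicity or Wasserstein contraction for it, and then carry out a block-averaging argument by hand. The paper sidesteps all of this. Its key observation is that on any \emph{fixed} finite window $M$, two simplifications kick in: (i) since $\sup_{m<M}|R_m^d-R_0^d|=o_{\mathbb{P}}(1)$ by Proposition \ref{Proposition-1}, the acceptance ratio (\ref{accept_ratio}) tends to $1$ uniformly over the window, so every proposal is accepted with probability tending to $1$; and (ii) $Z_m^d\to R_0^d$ in probability, so the update of $S_m^d=\pi_k(X_m^d)$ collapses to $S_m=\sqrt{\rho}\,S_{m-1}+\sqrt{(1-\rho)R_0}\,W_m$. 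Hence the finite-dimensional weak limit of $\{(R_m^d,S_m^d)\}_m$ is the explicit process with $R_m\equiv R_0\sim Q$ and $\{S_m\}_m$ a stationary $\mathrm{AR}(1)$ with invariant law $N_k(0,R_0 I_k)$, whose ergodicity is trivial. These are precisely the hypotheses of Lemma \ref{joint}, and the conclusion follows.

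What the paper's route buys: it never needs a quantitative mixing estimate on the high-dimensional sphere, nor any explicit control of the coupling between the angular noise and the radial increments that you correctly flag as the delicate point. All of the two-timescale averaging is packaged inside Lemma \ref{joint}; what remains is a soft weak-limit identification on finite windows. Your route would presumably work, but it is substantially heavier, and the uniform-in-$d$ contraction of the conditional angular kernel (with the random scaling $Z_m^d$ and occasional rejections still present) is not obviously easy to establish rigorously.
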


\subsection{Discussion}\label{discuss}

\begin{itemize}
\item In \citet{arXiv:1406.5392}, we defined optimality among all the RWM algorithms. 
In the current study, it is difficult to find suitable sense of optimality. Na\"ive sense of optimality may not work. 
We can find a rather impractical MCMC that is consistent  for any $P_d(Q)$ by making a mixture of the MpCN algorithm and independent type Metropolis-Hastings algorithm with the  proposal probability distribution $P_d(Q^*)$ for any $Q^*$ that satisfies Assumption \ref{Assumption-1}. 
To construct a practically useful sense of optimality is an open problem. 
I believe that the MpCN algorithm  has a kind of optimality. 
\item Proposal transition kernel used in MpCN has the invariant  distribution $\overline{P}_d$ defined in (\ref{stationprop}), and so this is a special case of MCMC that uses reversible proposal transition kernel. 
The relation to the target probability distribution $P_d$ and $\overline{P}_d$ is quite important. 
If $\overline{P}_d$ has a heavier-tail than that of $P_d$, then MCMC behaves relatively well. 
On the other hand, if $\overline{P}_d$ has a lighter-tail, it becomes quite poor. 
The RWM algorithm has $\overline{P}_d=\mathrm{Uniform\ distribution}$. This is a robust choice, but it loses  efficiency to pay the price as described in  \citet{arXiv:1406.5392}. 
On the other hand, the pCN algorithm, which has $\overline{P}_d=N_d(0,I_d)$,  does not work well except 
some specific cases. The proposed algorithm, MpCN is in the middle of these algorithms. It is robust and works well. 
\item It is possible to consider a more general class of the MpCN algorithm:
Let $\overline{Q}$ be a $\sigma$-finite measure on $(0,\infty)$ and set
 $\overline{P}_d=P_d(\overline{Q})$ with density $\overline{p}_d$. 
For $m\ge 1$, set
\begin{equation}\nonumber
\left\{\begin{array}{l}
Z^d_m\sim \phi_d(X^d_{m-1};0,zI_d)\overline{Q}(\dif z)\\
X^{d*}_m= \sqrt{\rho}X^d_{m-1}+\sqrt{(1-\rho)Z_m^d}W^d_m,\ W^d_m\sim N_d(0,I_d)\\
X^d_m=\left\{\begin{array}{ll}
X^{d*}_m & \mathrm{with\ probability}\ \alpha_d(X^d_{m-1},X^{d*}_m)\\
       X^d_{m-1} & \mathrm{with\ probability}\ 1-\alpha_d(X^d_{m-1},X^{d*}_m)
       \end{array}\right.
\end{array}\right. 
\end{equation}
where $\alpha_d(x,y)=\min\left\{1, p_d(y)\overline{p}_d(x)/p_d(x)\overline{p}_d(y)\right\}$
assuming that $\int \phi_d(x;0,zI_d)\overline{Q}(\dif z)<\infty$ for any $x\in\mathbb{R}^d$. 
For example, in \cite{UK}, $\overline{Q}(\dif z)\propto z^{-\nu/2-1}e^{-\nu/(2z)}$. 
If $\overline{Q}$ satisfies Assumption \ref{Assumption-1}, then this algorithm has the same asymptotic property as our  MpCN algorithm
and our algorithm is a special case $\overline{Q}(\dif z)=z^{-1}\dif z$. 
We believe that the choice of $\overline{Q}$ has a little effect in practice. 
\item There is no theoretical results for the MpCN algorithm for target probability distributions with shift perturbation discussed in Section \ref{sim4}. It might be possible to study scaling limit theorem for this direction. 
\item The class of target probability distributions we considered is quite restrictive. 
The extension of the class is not straightforward and probably it requires some 
new techniques. 
However as illustrated in simulation, we believe that by using our restrictive class, we successfully described the real behaviour of the MCMC algorithms  and it will be surprising if we find a completely different story 
by generalising this class. 
\end{itemize}

\section{Proofs}\label{proofsec}

%
Let $K_\delta=[\delta,\delta^{-1}]$ for $\delta\in (0,1)$. 
\subsection{Consistency results for Gaussian target probability distribution}
By definition, the pCN algorithm defined in (\ref{pCN}) has the following form:
\begin{align}
\|X_1^{d*}\|^2-\|X_0^d\|^2&=
-(1-\rho)\|X_0^d\|^2+2\sqrt{\rho(1-\rho)}\left\langle W_1^d, X_0^d\right\rangle 
+(1-\rho) \|W_1^d\|^2. \label{crgeq0}
\end{align}
The MpCN algorithm defined in (\ref{MpCN}) has a similar form
\begin{align}
\|X_1^{d*}\|^2-\|X_0^d\|^2&=
-(1-\rho)\|X_0^d\|^2+2\sqrt{\rho(1-\rho)Z_1^d}\left\langle W_1^d, X_0^d\right\rangle 
+(1-\rho)Z_1^d \|W_1^d\|^2\label{crgeq1}\\
&=2\sqrt{\rho(1-\rho)Z_1^d}\left\langle W_1^d, X_0^d\right\rangle 
+(1-\rho)Z_1^d \left(\|W_1^d\|^2-\frac{\|X_0^d\|^2}{Z_1^d}\right).\nonumber
\end{align}
The conditional law of $\|X_0^d\|^2/Z_1^d$ given $X_0^d$
is chi-squared distribution with $d$ degrees of freedom and so we will write
$\|\tilde{W}_1^d\|^2=\|X_0^d\|^2/Z_1^d$
where $\tilde{W}_1^d\sim N_d(0,I_d)$. By this notation, the above becomes 
\begin{equation}\label{crgeq2}
\|X_1^{d*}\|^2-\|X_0^d\|^2
=2\sqrt{\rho(1-\rho)Z_1^d}\left\langle W_1^d, X_0^d\right\rangle 
+(1-\rho)Z_1^d \left(\|W_1^d\|^2-\|\tilde{W}_1^d\|^2\right).
\end{equation}

\begin{lemma}\label{consistencylem}
If $P_d=N_d(0,I_d)$, then $\mathrm{pCN}(P_d)$ and $\mathrm{MpCN}(P_d)$  are consistent. 
\end{lemma}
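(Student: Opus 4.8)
The plan is to reduce consistency to a uniform-in-$d$ variance bound and then to establish the latter from a uniform decorrelation estimate. By the rotational symmetry already noted, we may take $\pi_{E_d^k}=\pi_k$, and since the first-$k$-coordinate marginal of $N_d(0,I_d)$ is exactly $N_k(0,I_k)$, the centering $c:=P_d(f\circ\pi_k)$ is independent of $d$ for $d\ge k$. Writing $S_M^d=M^{-1}\sum_{m=0}^{M-1}f(\pi_k(X_m^d))$, stationarity gives $\mathbb{E}[S_M^d]=c$ exactly, so (\ref{eq1}) follows once we show $\mathrm{Var}(S_M^d)\to 0$ as $M\to\infty$ \emph{uniformly in $d$}: indeed this yields $S_{M_d}^d\to c$ in $L^2$, hence in probability, for every $M_d\to\infty$. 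I stress that finite-dimensional weak convergence of $\{\pi_k(X_m^d)\}_m$ to an ergodic limit is \emph{not} by itself enough for the rate-$1$ statement, precisely because $M_d$ may grow arbitrarily slowly relative to $d$; the uniformity in the variance bound is what carries the double limit.

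For $\mathrm{pCN}(P_d)$ this is immediate. When $P_d=N_d(0,I_d)$ the acceptance ratio in (\ref{pCN}) is identically $1$, so $\{X_m^d\}_m$ is the $d$-dimensional stationary $\mathrm{AR}(1)$ process and $\{\pi_k(X_m^d)\}_m$ is the fixed $k$-dimensional stationary Gaussian $\mathrm{AR}(1)$ process, the \emph{same} process for every $d\ge k$. Its autocovariances $c(j)=\mathrm{Cov}(f(\pi_k(X_0^d)),f(\pi_k(X_j^d)))$ decay geometrically and do not depend on $d$, so $\mathrm{Var}(S_M^d)=M^{-1}\sum_{|j|<M}(1-|j|/M)c(j)\to 0$ uniformly, and consistency follows.

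For $\mathrm{MpCN}(P_d)$ the projected process does depend on $d$, and the work is to prove $|c_d(j)|\le C\gamma^{|j|}$ with $C$ and $\gamma\in(0,1)$ independent of $d$. Two structural facts drive this. First, the acceptance probability is bounded below uniformly in $d$: writing $x=X_{m-1}^d$, $y=X_m^{d*}$, the log acceptance ratio $L_d=\tfrac12(\|x\|^2-\|y\|^2)+\tfrac d2\log(\|y\|^2/\|x\|^2)$ is tight, because a second-order expansion of $\log(1+(\|y\|^2-\|x\|^2)/\|x\|^2)$ together with (\ref{crgeq2}) and the concentration of $\|x\|^2\sim\chi^2_d$ shows that $L_d$ converges in law to a nondegenerate limit; hence $\inf_d\mathbb{E}[\min(1,e^{L_d})]\ge\alpha_0>0$. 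Second, each accepted move injects fresh independent noise: since $X_m^d=\sqrt{\rho}X_{m-1}^d+\sqrt{(1-\rho)Z_m^d}\,W_m^d$ with $Z_m^d\to 1$ and $W_m^d$ independent of the past, the first $k$ coordinates receive an $N_k$-increment of weight bounded away from $0$, which decorrelates $\pi_k(X_m^d)$ from $X_0^d$. Combining a uniform lower bound on acceptance with a uniform lower bound on the injected-noise weight, I would build a one-step coupling contracting the dependence of $\pi_k(X_m^d)$ on $X_0^d$ by a factor $\gamma<1$ uniform in $d$, giving the claimed geometric decay of $c_d(j)$ and hence $\sup_d\mathrm{Var}(S_M^d)\le CM^{-1}(1+\gamma)/(1-\gamma)\to 0$.

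The main obstacle is exactly this uniform contraction. The rejection step couples all $d$ coordinates through the scalar $\|X_m^d\|$, so $\{\pi_k(X_m^d)\}_m$ is not itself Markov and no naive single-coordinate drift--minorization applies; the coupling must control the radial coordinate (using its own $O(1)$-step mixing and the exact $\chi^2_d$ law at stationarity) and the angular/first-$k$ part \emph{simultaneously and uniformly in $d$}. Making the lower bound on acceptance and the decorrelation estimate uniform requires quantitative control of the fluctuations of $\|X^d\|^2$ and of $\langle W^d,X^d\rangle$ that is insensitive to $d$, which is where the Malliavin--Stein estimates of the paper are the natural tool. Establishing this $d$-uniform spectral-gap-type bound, rather than any single distributional computation, is the crux.
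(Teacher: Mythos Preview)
Your treatment of the pCN case is correct and coincides with the paper's: the projected process is a fixed $k$-dimensional AR(1) chain, independent of $d$, so pointwise ergodicity suffices.

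For MpCN you have a genuine gap, and it stems from a misconception about what is needed. You assert that weak convergence of the process to an ergodic limit is ``not by itself enough for the rate-$1$ statement'' and therefore embark on a programme to prove a $d$-uniform geometric contraction $|c_d(j)|\le C\gamma^{|j|}$. But for \emph{stationary} processes this is not true: Lemma~\ref{lem1} (Lemma~2 of \citet{Kamatani10}) states precisely that if a sequence of stationary processes $\xi^d$ converges in law to a stationary ergodic process $\xi$, then consistency in the sense of (\ref{consistency}) holds, with no uniform spectral gap required. Stationarity is what carries the double limit, essentially because the variance of the empirical average is controlled once the finite-dimensional laws are close to those of the ergodic limit. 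You have thrown away the tool that does the job.

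Because of this, you never use the key structural observation in the paper: although $\{\pi_k(X_m^d)\}_m$ is not Markov, the augmented process $\{(R_m^d,S_m^d)\}_m$ with $R_m^d=(2d)^{-1/2}(\|X_m^d\|^2-d)$ and $S_m^d=\pi_k(X_m^d)$ \emph{is} a $(k{+}1)$-dimensional Markov chain whose law converges (in total variation of the one-step joint law, via the Wiener-chaos result Theorem~\ref{NourdinPoly} and Lemmas~\ref{MHlemma1}--\ref{MHlemma2}) to a fixed Metropolis--Hastings chain on $\mathbb{R}^{k+1}$ with target $N_{k+1}(0,I_{k+1})$. That limit is ergodic by standard results, and Lemma~\ref{lem1} then gives consistency directly. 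Your alternative route---building a one-step coupling that contracts uniformly in $d$ and simultaneously controls the radial and angular parts---is something you explicitly flag as ``the main obstacle'' and do not carry out; establishing a dimension-free spectral gap for a chain on $\mathbb{R}^d$ with global accept/reject is substantially harder than anything the lemma actually requires, and your sketch (uniform lower bound on acceptance plus injected noise) does not by itself yield the claimed geometric decorrelation of $\pi_k(X_m^d)$, since the accept/reject decision depends on the full state through $\|X_m^d\|$.
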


\begin{proof}
When $P_d=N_d(0,I_d)$, the $k$-dimensional process $\left\{\pi_k(X_m^d)\right\}_m$ is a Markov chain for 
the pCN algorithm. Moreover, it is a stationary ergodic $\mathrm{AR}(1)$ process
since the acceptance ratio is $\alpha_d(x,y)\equiv 1\ (x,y\in\mathbb{R}^d)$ for this case. 
Since the law of this $\mathrm{AR}(1)$ process does not depend on $d$, 
consistency of $\mathrm{pCN}(P_d)$ comes from classical point-wise ergodic theorem. 

Now we assume $X^d\sim \mathrm{MpCN}(P_d)$ and set 
\begin{equation}\nonumber
r_d(x)=(2d)^{-1/2}(\|x\|^2-d)\ (x\in\mathbb{R}^d). 
\end{equation}
For this case, $\left\{\pi_k(X_m^d)\right\}_m$ is not a Markov chain, but  $\left\{\left(r_d(X_m^d),\pi_k(X_m^d)\right)\right\}_m$ is a Markov chain. 
Set
\begin{equation}\nonumber
\left\{\begin{array}{l}
R_m^d=r_d(X_m^d)\\
R_m^{d*}=r_d(X_m^{d*})
\end{array}, 
\right. 
\left\{\begin{array}{l}
S_m^d=\pi_k(X_m^d)\\
S_m^{d*}=\pi_k(X_m^{d*})
\end{array}. 
\right. 
\end{equation}
To prove consistency by using Lemma \ref{lem1}, we need to show weak convergence of the process $\left\{(R_m^d,S_m^d)\right\}_m$ to the limit process $\left\{(R_m,S_m)\right\}_m$  defined below, and 
ergodicity of this limit: for $m\ge 1$, 
\begin{equation}\nonumber
\left\{\begin{array}{l}
\left(\begin{array}{c}
R_m^*\\
S_m^*
\end{array}\right)= 
\left(\begin{array}{c}
R_{m-1}+\sqrt{2(1-\rho)}W_m^R,\\
\sqrt{\rho}S_{m-1}+\sqrt{1-\rho}W_m^S
\end{array}\right), \ W_m^R\sim N(0,1),\ W_m^S\sim N_k(0,I_k),\\
\left(\begin{array}{c}
R_m\\
S_m
\end{array}\right)=\left\{\begin{array}{ll}
\left(\begin{array}{c}
R_m^*\\
S_m^*
\end{array}\right)
 & \mathrm{with\ probability}\ \alpha(R_{m-1},R_m^*),\\
\left(\begin{array}{c}
R_{m-1}\\
S_{m-1}
\end{array}\right) & \mathrm{with\ probability}\ 1-\alpha(R_{m-1},R_m^*),
       \end{array}\right.
\end{array}\right. 
\end{equation}
where $\alpha(x,y)=\min\left\{1,\exp(-y^2/2+x^2/2)\right\}$
and $R_0\sim N(0,1)$, $S_0\sim N_k(0,I_k)$. 
Note that the Markov chain $\{(R_m,S_m)\}_m$ has the same law as that generated by the Metropolis-Hastings algorithm 
with the target probability distribution $N_{k+1}(0,I_{k+1})$. 
 First we prove the weak convergence. 
It
 can be proved by total variation convergence 
\begin{equation}\label{e}
\|\mathcal{L}(R_0^d, S_0^d, R_1^{d*}, S_1^{d*})-
\mathcal{L}(R_0, S_0, R_1^*, S_1^*)\|_{\mathrm{TV}}\rightarrow 0
\end{equation}
by Lemmas \ref{MHlemma1} and \ref{MHlemma2}
since both $\{(R_m^d, S_m^d)\}_m$ and $\{(R_m,S_m)\}_m$ are generated by Metropolis-Hastings algorithms. 
For (\ref{e}), it is sufficient to show 
\begin{equation}\label{crgeq3}
\lim_{d\rightarrow\infty}\|\mathcal{L}(F_d)-N_{2k+2}\|_{\mathrm{TV}}= 0\ \mathrm{where}\ 
F_d=\left(R_0^d, S_0^d, \frac{R_1^{d*}-R_0^d}{\sqrt{2(1-\rho)}},\frac{S_1^{d*}-\sqrt{\rho}S_0^d}{\sqrt{1-\rho}}\right)
\end{equation}
where, by (\ref{crgeq2}),  
\begin{align*}
\frac{R_1^{d*}-R_0^d}{\sqrt{2(1-\rho)}}&=\sqrt{\rho Z_1^d}d^{-1/2}\left\langle W_1^d, X_0^d\right\rangle 
+\sqrt{\frac{1-\rho}{2}}Z_1^d \left(r_d(W_1^d)-r_d(\tilde{W}_1^d)\right)\\
&=\sqrt{\rho Z_1^d}d^{-1/2}\left\langle W_1^d, X_0^d\right\rangle 
+\sqrt{\frac{1-\rho}{2}}\left(Z_1^d r_d(W_1^d)-(Z_1^d-1)r_d(\tilde{W}_1^d)-r_d(\tilde{W}_1^d)\right),\\
\frac{S_1^{d*}-\sqrt{\rho}S_0^d}{\sqrt{1-\rho}}&=\sqrt{Z_1^d}\pi_k(W^d_1). 
\end{align*}
We decompose $F_d$ into the sum of the following random variables:
\begin{align*}\nonumber
F_{d,1}&=\left(R_0^d,S_0^d, -\sqrt{\frac{1-\rho}{2}} r_d(\tilde{W}_1^d),0\right),\\
F_{d,2}&=\left(0, 0, \sqrt{\rho Z_1^d}d^{-1/2}\left\langle W_1^d, X_0^d\right\rangle 
+\sqrt{\frac{1-\rho}{2}}\left(Z_1^d r_d(W_1^d)-(Z_1^d-1)r_d(\tilde{W}_1^d)\right),\sqrt{Z_1^d}\pi_k(W^d_1)\right). 
\end{align*}
We show
\begin{equation}\label{mixconv}
\mathcal{L}(F_{d,1})\Rightarrow N_{2k+2}(0, A),\ \mathcal{L}(F_{d,2}|F_{d,1})\Rightarrow N_{2k+2}(0, B)\ \mathrm{in\ probability},\ 
\end{equation}
where 
\begin{equation}\nonumber
A= \begin{pmatrix}
  1 & 0 & 0 & 0 \\
  0 & I_k & 0 & 0 \\
  0  & 0  & \frac{1-\rho}{2} & 0  \\
  0 & 0 & 0 & 0
 \end{pmatrix},\ 
B=
 \begin{pmatrix}
  0 & 0 & 0 & 0 \\
  0 & 0 & 0 & 0 \\
  0  & 0  & \frac{1+\rho}{2} & 0  \\
  0 & 0 & 0 & I_k
 \end{pmatrix}. 
\end{equation}
The former convergence of (\ref{mixconv}) is an easy conclusion of Slutsky's lemma. 
For the latter, we use  Skorohod's representation theorem.
By using this,  we can assume $F_{d,1}(\omega)\rightarrow F_1(\omega)\sim N_{2k+2}(0, A)$
for each $\omega$, and $\sigma(F_{d,1})$-measurable random variables 
$Z_1^d-1$ and $(Z_1^d-1)r_d(\tilde{W}_1^d)$ converges to $0$ for each $\omega$. 
Then the latter convergence of (\ref{mixconv})
also comes from  Slutsky's lemma. 

Observe that the random variable $F_{d,1}$, and the random variable $F_{d,2}$ conditioned on $F_{d,1}$ are composed by the first and the second Wiener chaoses. Therefore, by Theorem \ref{NourdinPoly},  convergences in (\ref{mixconv}) also imply 
the total variation convergences. 
Hence the law of $F_d=F_{d,1}+F_{d,2}$ converges in total variation to $N_{2k+2}$, 
and therefore, the weak convergence of 
$\{(R_m^d,S_m^d)\}_m$ follows. 

Finally we prove the ergodicity of the process $\left\{(R_m,S_m)\right\}_m$.  However this follows by 
Corollary 2 of  \citet{TierneyAOS94} and hence the claim follows.

\end{proof}

\subsection{Inconsistency for the pCN algorithm}

%

In this and subsequent section, we set
\begin{equation}\nonumber
r_d(x)=\frac{\|x\|^2}{d}\ (x\in\mathbb{R}^d)
\end{equation}
and write $R_m^d=r_d(X_m^d)$ and $R_m^{d*}=r_d(X_m^{d*})$. 
Let $\mathrm{int}(A)$ be the interior of a set $A$. 

\begin{lemma}\label{inconsistlem0}
Let $P_d=P_d(Q)$. For $\left\{X_m^d\right\}_m\sim\mathrm{pCN}(P_d)$, for any $p\in\mathbb{N}$ and any compact subset $K$ of $(0,\infty)\backslash\left\{1\right\}$, we have
\begin{align*}
d^p\mathbb{P}(R_0^d\in K, X_0^d\neq X_1^d)=o(1). 
\end{align*}
\end{lemma}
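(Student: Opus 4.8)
The plan is to show that, conditionally on $R_0^d=r$, the probability that the pCN proposal is accepted decays exponentially in $d$, uniformly for $r$ in the compact set $K$. Since $\mathbb{P}(R_0^d\in K, X_0^d\neq X_1^d)=\mathbb{E}[1_{\{R_0^d\in K\}}\,a_d(R_0^d)]$, where $a_d(r)$ denotes the conditional acceptance probability and $a_d\le 1$, such a bound gives $d^p\,\mathbb{P}(R_0^d\in K,X_0^d\neq X_1^d)\le d^p\,\sup_{r\in K}a_d(r)\to 0$ for every $p$. First I would write the acceptance ratio explicitly: using $p_d(x)\propto\|x\|^{2-d}q_d(\|x\|^2/d)$ and $\phi_d(x;0,I_d)\propto e^{-\|x\|^2/2}$, and writing $r=R_0^d$, $r^*=R_1^{d*}$, the pCN ratio $p_d(y)\phi_d(x)/(p_d(x)\phi_d(y))$ becomes
\begin{equation}\nonumber
\Lambda_d=\frac{r^*}{r}\,\frac{q_d(r^*)}{q_d(r)}\,\exp\left(\tfrac{d}{2}\psi(r^*,r)\right),\qquad \psi(r^*,r)=(r^*-r)-\log(r^*/r).
\end{equation}
The exponential factor is dominant, so the sign of $\psi$ at the typical proposal value drives the estimate.

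Next I would identify where the proposal concentrates. Dividing (\ref{crgeq0}) by $d$ gives $r^*=\rho r+(1-\rho)\|W_1^d\|^2/d+2\sqrt{\rho(1-\rho)}\,d^{-1}\langle W_1^d,X_0^d\rangle$. Conditionally on $X_0^d$ with $\|X_0^d\|^2=dr$, the last two terms concentrate ($\|W_1^d\|^2/d\to1$ and $\langle W_1^d,X_0^d\rangle/d=O_{\mathbb P}(d^{-1/2})$), so $r^*$ concentrates at $\bar r:=\rho r+(1-\rho)$ with fluctuations at scale $d^{-1/2}$. A short computation yields $\psi(\bar r,r)=h(r):=(1-\rho)(1-r)-\log(\rho+(1-\rho)/r)$, with $h(1)=0$ and $h'(r)=(1-\rho)\,\frac{-(r-1)(\rho r+1)}{r(\rho r+1-\rho)}$, so $h'$ has the sign of $-(r-1)$ and hence $h(r)<0$ for every $r\neq 1$. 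Since $K$ is compact with $1\notin K$, continuity gives $\sup_{r\in K}h(r)\le-2c<0$, while $\bar r=\rho r+(1-\rho)$ stays in a compact subset of $(0,\infty)$ bounded away from $1$.

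Then I would split $a_d(r)=\mathbb E[\min\{1,\Lambda_d\}]$ over $\{|r^*-\bar r|\le\delta_0\}$ and its complement, for $\delta_0$ small and fixed. On the first event, continuity of $\psi$ around $(\bar r,r)$ forces $\psi(r^*,r)\le-c$ for all large $d$, while the prefactor $\tfrac{r^*}{r}\tfrac{q_d(r^*)}{q_d(r)}$ stays bounded uniformly on $K$ because $r$ and $r^*$ lie in a fixed compact subset of $(0,\infty)$ on which $q_d\to q$ uniformly with $q>0$ (Lemma \ref{assumplem}); thus this contribution is $\le C e^{-cd/2}$. On the complement I bound $\min\{1,\Lambda_d\}\le 1$ and use the Gaussian tail of $\langle W_1^d,X_0^d\rangle\sim N(0,dr)$ together with chi-square concentration of $\|W_1^d\|^2$ to get $\mathbb P(|r^*-\bar r|>\delta_0\mid X_0^d)\le e^{-c'd}$, uniformly for $r\in K$. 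Combining, $a_d(r)\le C'e^{-c_0 d}$ uniformly on $K$, which closes the argument.

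The main obstacle is the uniform control of the density prefactor $q_d(r^*)/q_d(r)$, since without regularity a $q_d$ could in principle be small in the denominator or spike in the numerator. This is exactly where Lemma \ref{assumplem} (and hence Assumption \ref{Assumption-1}) is used, to bound $q_d$ above and below on the relevant compact set. If one wishes to avoid that regularity, a useful observation is that after $\min\{1,\Lambda_d\}\le\Lambda_d$ the denominator $q_d(r)$ cancels against the stationary density $q_d(r)$ of $R_0^d$, reducing the task to bounding $\int_K r^{-1}\int r^*q_d(r^*)e^{(d/2)\psi}\,\mu_r^d(\dif r^*)\,\dif r$, where $\mu_r^d$ is the concentrated law of $r^*$; the exponential decay of $e^{(d/2)\psi}$ against this concentrated proposal again dominates, and only local boundedness of $q_d$ near $\bar r$ is then required.
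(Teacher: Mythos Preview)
Your argument is sound but follows a genuinely different route from the paper's, and as written it imports a hypothesis the paper does without.

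The paper never computes the acceptance ratio. It uses only the \emph{proposal} dynamics: on $\{X_0^d\neq X_1^d\}$ one has $R_1^d=R_1^{d*}$, and (\ref{crgeq0}) forces $R_1^{d*}-1$ to lie within $O_{\mathbb P}(d^{-1/2})$ of $\rho(R_0^d-1)$, with all moments bounded by hypercontractivity (Proposition~\ref{contractive}). Choosing compact sets $I_0,I_1,I_2\subset(-1,\infty)\setminus\{0\}$ with $I_1\subset\mathrm{int}(\rho I_0)$, $\rho I_1\subset\mathrm{int}(I_2)$ and $I_0\cap I_2=\emptyset$, and combining these deviation bounds with reversibility, the paper obtains $\mathbb P(R_0^d-1\in I_1,\,X_0^d\neq X_1^d)=o(d^{-p})$ directly. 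Nothing about $Q$ is used beyond its being a probability measure on $(0,\infty)$.

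Your explicit route---showing $h(r)=\psi(\bar r,r)<0$ on $K$ and hence $\Lambda_d\lesssim e^{-cd/2}$---is more transparent and even gives the sharper exponential rate, but controlling the prefactor $q_d(r^*)/q_d(r)$ via Lemma~\ref{assumplem} tacitly imposes Assumption~\ref{Assumption-1}. The lemma as stated (and its use in Theorem~\ref{Theorem-1}, which must cover, e.g., $Q=\delta_{\{z_0\}}$) does not assume this, so your main line proves a weaker statement. Your closing workaround is the right reflex; a clean way to finish it without any pointwise control of $q_d$ is to use the detailed-balance identity $P_d(\dif x)R(x,\dif x^*)\Lambda_d=P_d(\dif x^*)R(x^*,\dif x)$ (reversibility of the pCN proposal under $N_d$), which converts $\mathbb E[1_A\Lambda_d]$ into a pure probability. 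The resulting event forces $R_1^{d*}\in K$ and $R_0^d$ near $\rho R_1^{d*}+(1-\rho)$; combined with the concentration $R_1^{d*}\approx\rho R_0^d+(1-\rho)$ this pins both quantities near $1$, contradicting $1\notin K$, and the remaining deviation probability is exponentially small by the same Gaussian/chi-square tail bounds you already invoked.
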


\begin{proof}
Let $\overline{R}_m^d=r_d(X_m^d)-1$ and $\overline{R}_m^{d*}=r_d(X_m^{d*})-1$
be the ``centered'' version of $R_m^d$ and $R_m^{d*}$. 
Let $I_0$ and $I_1$ be any compact subsets of $(-1,\infty)\backslash\left\{0\right\}$ such that 
\begin{equation}\label{include}
\rho I_0\subset \mathrm{int}(I_1)\ \mathrm{or}\ I_1\subset \mathrm{int}(\rho I_0)
\end{equation}
where $\rho A=\left\{\rho x; x\in A\right\}$. 
For the former case in (\ref{include}),  for $\epsilon=\mathrm{dist}(\rho I_0,I_1^c)=\inf\left\{|x-y|; x\in \rho I_0, y\notin I_1\right\}$
we will show
\begin{equation}\label{crgeq4}
d^p\mathbb{P}\left(\overline{R}_0^d\in I_0,\ \overline{R}_1^d\in I_1^c,\ X_0^d\neq X_1^d\right)=o(1).
\end{equation}
For the latter case, 
 for $\epsilon=\mathrm{dist}((\rho I_0)^c,I_1)=\inf\left\{|x-y|; x\notin \rho I_0, y\in I_1\right\}$
we will prove
\begin{equation}\label{crgeq5}
d^p\mathbb{P}\left(\overline{R}_0^d\in I_0^c,\ \overline{R}_1^d\in I_1,\ X_0^d\neq X_1^d\right)=o(1).
\end{equation}
On the event $\left\{X_0^d\neq X_1^d\right\}$, we have $X_1^{d*}=X_1^d$. 
By (\ref{crgeq0}) we have
\begin{equation}\nonumber
\overline{R}_1^{d*}=\rho\overline{R}_0^d+2\frac{\|X_0^d\|}{d}\sqrt{\rho(1-\rho)}\left\langle \frac{X_0^d}{\|X_0^d\|},W_1^d\right\rangle+(1-\rho)\left(\frac{\|W_1^d\|^2}{d}-1\right).
\end{equation}
Therefore, on the events in the left-hand side of (\ref{crgeq4}) or (\ref{crgeq5}), we have
\begin{equation}\label{crgeq6}
d^{1/2}\epsilon\le
d^{1/2}\left|\overline{R}_1^{d*}-\rho\overline{R}_0^d\right|
\le
2\left(\frac{\|X_0^d\|^2}{d}\right)^{1/2}\left|\left\langle\frac{X_0^d}{\|X_0^d\|},W_1^d\right\rangle\right|
+d^{1/2}\left|\frac{\|W_1^d\|^2}{d}-1\right|. 
\end{equation}
On the event in (\ref{crgeq4}), we have
$\|X_0^d\|^2/d=\overline{R}_0^d+1\le \sup_{x\in I_0}|x|+1$, 
and on the event in (\ref{crgeq5}), by triangular inequality together with  (\ref{pCN}), we have 
\begin{align*}
\sqrt{\rho}\left(\frac{\|X_0^d\|^2}{d}\right)^{1/2}\le 
\left(\frac{\|X_1^{d*}\|^2}{d}\right)^{1/2}+
\sqrt{1-\rho}\left(\frac{\|W_1^d\|^2}{d}\right)^{1/2}
\le (\sup_{x\in I_1}|x|+1)^{1/2}+\left(\frac{\|W_1^d\|^2}{d}\right)^{1/2}. 
\end{align*}
Observe that $\left\langle\frac{X_0^d}{\|X_0^d\|},W_1^d\right\rangle\sim N(0,1)$. 
Together with this fact and  Proposition \ref{contractive}, the right-hand side of (\ref{crgeq6}) is bounded above by a random variable (say) $\eta_d$ such that  $\sup_d\mathbb{E}\left[\eta_d^q\right]^{1/q}<\infty$
for any $q\in\mathbb{N}$. 
Therefore (\ref{crgeq4}) follows since 
\begin{align*}
\mathbb{P}\left(\overline{R}_0^d\in I_0,\ \overline{R}_1^d\in I_1^c,\ X_0^d\neq X_1^d\right)
&=
\mathbb{P}\left(\overline{R}_0^d\in I_0,\ \overline{R}_1^{d*}\in I_1^c,\ X_0^d\neq X_1^d\right)\\
&\le 
\mathbb{P}\left(d^{1/2}\epsilon\le \eta_d\right)\le  \mathbb{E}\left[\left(\frac{\eta_d}{d^{1/2}\epsilon}\right)^{q}\right]=O(d^{-q/2})=o(d^{-p})
\end{align*}
by Chevyshev's inequality by taking $q>2p$. In the same way, 
(\ref{crgeq5}) can be proved. 
Now, choose compact  subsets $I_0, I_1, I_2$ so that
\begin{equation}\nonumber
I_1\subset \mathrm{int}(\rho I_0),\ \rho I_1\subset \mathrm{int}(I_2),\ I_0\cap I_2=\emptyset. 
\end{equation}
For example, for $\epsilon\in (0,1)$, set $I_1=[a,b], I_0=[\rho^{-1+\epsilon}a, \rho^{-1-\epsilon}b]$ and $I_2=[\rho^{1+\epsilon} a,\rho^{1-\epsilon} b]$
so that $\rho^{2(1-\epsilon)}\le a/b$ if $a,b>0$ and $\rho^{2(1+\epsilon)}\le b/a$ if $a,b<0$.  
By (\ref{crgeq4}) and (\ref{crgeq5})  with reversibility by Lemma \ref{reverse}, 
\begin{align*}\nonumber
\mathbb{P}\left(\overline{R}_0^d\in I_1,\ \overline{R}_1^d\in (I_0\cap I_2)^c,\ X_0^d\neq X_1^d\right)
&\le
\mathbb{P}\left(\overline{R}_0^d\in I_1,\ \overline{R}_1^d\in I_0^c,\ X_0^d\neq X_1^d\right)
+\mathbb{P}\left(\overline{R}_0^d\in I_1,\ \overline{R}_1^d\in I_2^c,\ X_0^d\neq X_1^d\right)\\
&=
\mathbb{P}\left(\overline{R}_0^d\in I_0^c,\ \overline{R}_1^d\in I_1,\ X_0^d\neq X_1^d\right)
+\mathbb{P}\left(\overline{R}_0^d\in I_1,\ \overline{R}_1^d\in I_2^c,\ X_0^d\neq X_1^d\right)\\
&=o(d^{-p}). 
\end{align*}
However, since $I_0\cap I_2=\emptyset$, the above proves
\begin{align*}\nonumber
\mathbb{P}\left(\overline{R}_0^d\in I_1,\ X_0^d\neq X_1^d\right)
=o(d^{-p}). 
\end{align*}
Since any compact set can be covered by  finite family of the compact sets $I_1$, the claim follows. 
\end{proof}

\begin{lemma}\label{inconsistency}
For $P_d=P_d(Q)$, $\mathrm{pCN}(P_d)$  does not have any polynomial rate of convergence if $Q(\left\{1\right\})<1$. 
\end{lemma}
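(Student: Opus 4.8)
The plan is to argue by contradiction, exploiting the fact that Lemma \ref{inconsistlem0} forces the stationary pCN chain to stay frozen for a super-polynomial number of steps whenever its normalised squared radius starts away from $1$. Suppose $\mathrm{pCN}(P_d)$ were weakly consistent with a polynomial rate $T_d$, i.e. $T_d/d^{k_0}\to 0$ for some $k_0\in\mathbb{N}$. Take $M_d=d^{k_0}$, which satisfies $M_d/T_d\to\infty$; then (\ref{eq1}) must hold for this $M_d$, with $k=1$ and every bounded continuous $f$. I will contradict this by exhibiting a single $f$ for which the ergodic average fails to concentrate.

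Since $Q(\{1\})<1$ we have $Q((0,\infty)\setminus\{1\})>0$, so I can choose a compact interval $K=[a,b]\subset(0,\infty)\setminus\{1\}$ with $Q(K)>0$ and $Q(\{a\})=Q(\{b\})=0$ (avoiding the at most countably many atoms of $Q$). As $R_0^d=\|X_0^d\|^2/d\Rightarrow Z\sim Q$ and $K$ is a continuity set, $\mathbb{P}(R_0^d\in K)\to Q(K)>0$. Let $\tau=\inf\{m\ge 0:X_{m+1}^d\neq X_m^d\}$ be the first move. On $\{R_0^d\in K\}$ one has $X_m^d=X_0^d$, hence $R_m^d\in K$, for all $m\le\tau$, so $\{R_0^d\in K,\ \tau\le M_d-1\}\subset\bigcup_{m=0}^{M_d-1}\{R_m^d\in K,\ X_m^d\neq X_{m+1}^d\}$. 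Stationarity of $\mathrm{pCN}(P_d)$ gives $\mathbb{P}(R_m^d\in K,\ X_m^d\neq X_{m+1}^d)=\mathbb{P}(R_0^d\in K,\ X_0^d\neq X_1^d)$, so a union bound together with Lemma \ref{inconsistlem0} (with $p=k_0+1$) yields $\mathbb{P}(R_0^d\in K,\ \tau\le M_d-1)\le M_d\cdot o(d^{-k_0-1})=o(1)$. Hence the frozen event $N_d:=\{R_0^d\in K,\ \tau\ge M_d\}$ has $\liminf_d\mathbb{P}(N_d)\ge Q(K)>0$, and on $N_d$ we have $X_0^d=\cdots=X_{M_d-1}^d$, so the average in (\ref{eq1}) collapses to the single value $f(X_{0,1}^d)$.

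It remains to choose $f$ so that this single value does not concentrate at $P_d(f\circ\pi_1)$. Writing $X_0^d=\sqrt{Z}\,W$ with $W\sim N_d(0,I_d)$ independent of $Z\sim Q$, and using $\|W\|^2/d\to 1$, one obtains the joint weak limit $(X_{0,1}^d,R_0^d)\Rightarrow(\sqrt{Z}\,G,Z)$ with $G\sim N(0,1)$ independent of $Z$, and $P_d(f\circ\pi_1)\to c_f:=\mathbb{E}[f(\sqrt{Z}\,G)]$. Take $f=\arctan$, which is bounded, continuous and injective, so $\arctan(\sqrt{Z}\,G)$ has a non-atomic law. Fix a small $\eta>0$ whose associated level set is a continuity set of the limit; the portmanteau theorem gives $\mathbb{P}(|f(X_{0,1}^d)-c_f|>2\eta,\ R_0^d\in K)\to\mathbb{P}(|f(\sqrt{Z}\,G)-c_f|>2\eta,\ Z\in K)>0$, the positivity being clear because, conditionally on $Z=z\in K$, $\arctan(\sqrt{z}\,G)$ spreads over $(-\pi/2,\pi/2)$ while $c_f$ is a single interior point. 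Since $\mathbb{P}(\{R_0^d\in K\}\setminus N_d)\to 0$ and $|P_d(f\circ\pi_1)-c_f|<\eta$ for large $d$, restricting to $N_d$ and recalling that the average equals $f(X_{0,1}^d)$ there gives $\liminf_d\mathbb{P}(|M_d^{-1}\sum_{m=0}^{M_d-1}f(X_{m,1}^d)-P_d(f\circ\pi_1)|>\eta)>0$, which contradicts (\ref{eq1}). Thus no polynomial rate can exist.

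The main obstacle is the third step: converting the purely probabilistic ``the chain is frozen'' estimate into a genuine failure of (\ref{eq1}). The delicate points are that the frozen event must retain asymptotically positive probability (where both stationarity and the super-polynomial smallness of Lemma \ref{inconsistlem0} are essential), and that the conditional marginal of the first coordinate given $R_0^d\in K$ must be shown nondegenerate, for which the explicit Gaussian scale-mixture structure and a careful continuity-set application of the portmanteau theorem are needed.
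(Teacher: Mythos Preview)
Your argument is correct and follows essentially the same route as the paper's proof: both combine Lemma~\ref{inconsistlem0} with stationarity and a union bound to show that, starting from $R_0^d\in K$, the chain stays frozen for $d^p$ steps with probability bounded away from zero, then derive a contradiction with weak consistency via $f=\arctan$. The only noteworthy difference is cosmetic: you pass through the joint weak limit $(X_{0,1}^d,R_0^d)\Rightarrow(\sqrt{Z}G,Z)$ and a portmanteau/continuity-set argument, whereas the paper simply notes that $\mathcal{L}(\pi_1(X_0^d))=P_1(Q)$ \emph{exactly} for every $d$ (so $P_d(f\circ\pi_1)=P_1(Q)(f)$ with no limit needed) and concludes directly from the fact that $P_1(Q)$ has a density and hence cannot put mass $\geq\delta$ on any level set of $\arctan$.
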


\begin{proof}
By assumption, there exists a compact set $K$ such that $K\subset(0,\infty)\backslash\left\{1\right\}$ and $Q(\mathrm{int}(K))\ge \delta$ for $\delta>0$. By $Q_d\Rightarrow Q$ and 
 by Lemma \ref{inconsistlem0}, for any $p\in\mathbb{N}$, 
\begin{align*}\nonumber
\liminf_{d\rightarrow\infty}\mathbb{P}\left(\forall i,j<d^p, X_i^d= X_j^d\right)
&\ge  
\liminf_{d\rightarrow\infty}\mathbb{P}\left(R_0^d\in \mathrm{int}(K), \forall i,j<d^p, X_i^d= X_j^d\right)\\
&\ge 
\liminf_{d\rightarrow\infty}\mathbb{P}(R_0^d\in \mathrm{int}(K))- \limsup_{d\rightarrow\infty}\mathbb{P}\left(R_0^d\in K, \exists i,j<d^p, X_i^d\neq X_j^d\right)\\
&\ge 
Q(\mathrm{int}(K))- \limsup_{d\rightarrow\infty}d^p\mathbb{P}(R_0^d\in K, X_0^d\neq X_1^d)=Q(\mathrm{int}(K))\ge \delta. 
\end{align*}
Thus we have the following degenerate property:
\begin{align*}\nonumber
\liminf_{d\rightarrow\infty}\mathbb{P}\left(\frac{1}{d^p}\sum_{m=0}^{d^p-1}f\circ\pi_1(X_m^d)=f\circ\pi_1(X_0^d)\right)\ge \delta
\end{align*}
for any bounded continuous function $f(x)$ where $\pi_1(x)=x_1$ is the first component of the vector $x=(x_1,\ldots, x_d)\in\mathbb{R}^d$. 

Assume by the way of contradiction that $X^d\sim \mathrm{pCN}(P_d)$ is weakly consistent with rate $T_d$ where $T_d/d^p\rightarrow 0$. Then the following should also be satisfied: 
\begin{align*}\nonumber
\frac{1}{d^p}\sum_{m=0}^{d^p-1}f\circ\pi_1(X_m^d)-P_d(f\circ\pi_1)=o_{\mathbb{P}}(1). 
\end{align*}
Recall that $P_d=P_d(Q)$ is the scale mixture of the normal distribution as defined in (\ref{d}). 
By these two convergence properties together with the fact $\mathcal{L}(\pi_1(X_0^d))= P_1(Q)$, 
we have
\begin{align*}\nonumber
P_1(Q)(\left\{x;|f(x)-P_1(Q)(f)|<\epsilon\right\})=\liminf_{d\rightarrow\infty}\mathbb{P}(|f\circ\pi_1(X_0^d)-P_d(f\circ\pi_1)|<\epsilon)\ge \delta
\end{align*}
for any $\epsilon>0$. 
By monotone convergence theorem, this is possible only if  $P_1(Q)(\left\{x; f(x)=c\right\})\ge \delta$ for some $c\in\mathbb{R}$, and thus 
it is not satisfied for example, for $f(x)=\arctan(x)$
since $P_1(Q)$ has a probability density function. Therefore $\mathrm{pCN}(P_d)$ cannot be weakly consistent with rate $T_d$ where $T_d/d^p\rightarrow 0$ for any $p>0$ and hence $\mathrm{pCN}(P_d)$ cannot have polynomial rate of convergence. 
\end{proof}

\subsection{Convergence of the MpCN algorithm  for heavy-tail case}\label{cma}

Let $\left\{X_m^d\right\}_m\sim\mathrm{MpCN}(P_d)$. As in the previous section, we set 
$
r_d(x)=\frac{\|x\|^2}{d}\ (x\in\mathbb{R}^d)
$
and write 
\begin{equation}\label{cmaeq1}
\left\{\begin{array}{l}
R_m^d=r_d(X_m^d)\\
R_m^{d*}=r_d(X_m^{d*})
\end{array}, 
\right. 
\left\{\begin{array}{l}
S_m^d=\pi_k(X_m^d)\\
S_m^{d*}=\pi_k(X_m^{d*})
\end{array}. 
\right. 
\end{equation}
By definition, 
\begin{equation}\label{r1star}
R_1^{d*}=R_0^d\left(1+2d^{-1/2}\sqrt{1-\rho}F_d\right)
\end{equation}
where	
\begin{equation}\label{Fd}
F_d:=d^{1/2}\frac{\|X_1^{d*}\|^2-\|X_0^d\|^2}{2\|X_0^d\|^2\sqrt{1-\rho}}=
d^{1/2}\frac{R_1^{d*}-R_0^d}{2R_0^d\sqrt{1-\rho}}. 
\end{equation}
Write $y=R_0^d$. 
Write $\mathbb{P}_y$ and $\mathbb{E}_y$ for the conditional probability and expectation given $y$. 

\begin{proof}[Proof of Proposition \ref{Proposition-1}]
We rewrite the acceptance ratio $\alpha_d(X_0^d, X_1^{d*})$ in (\ref{accept_ratio}) as $\alpha_{y,d}(F_d)$ by using $F_d$: 
\begin{equation}\nonumber
\left\{\begin{array}{l}
\beta_{y,d}(F_d):=\tilde{q}_d(R_1^{d*})/\tilde{q}_d(R_0^d)=\tilde{q}_d\left(y\left(1+2d^{-1/2}\sqrt{1-\rho}F_d\right)\right)/\tilde{q}_d(y),\\ 
\alpha_{y,d}(F_d):=\min\left\{1,\beta_{y,d}(F_d)\right\}. 
\end{array}\right.
\end{equation}
Let 
\begin{equation}\label{triplet}
\left\{\begin{array}{lll}
a_d(y)&=d\mathbb{E}_y[R_1^d-R_0^d],\\
b_d(y)&=d\mathbb{E}_y[(R_1^d-R_0^d)^2],\\
c_d(y)&=d\mathbb{E}_y[(R_1^d-R_0^d)^4]. 
\end{array}\right.
\end{equation}
We estimate the triplet. By representation (\ref{r1star}), we have 
\begin{equation}\label{f}
a_d(y)=
d\mathbb{E}_y\left[\left(R_1^{d*}-R_0^d\right)\alpha_{y,d}(F_d)\right]=
2yd^{1/2}\sqrt{1-\rho}\mathbb{E}_y[F_d\alpha_{y,d}(F_d)]. 
\end{equation}
We are going to estimate the expectation in the right-hand side by using Proposition \ref{main}. 
Note here that $\alpha_{y,d}'=\beta_{y,d}'1_{\left\{\beta_{y,d}<1\right\}}$ is not bounded since
\begin{align*}
\beta_{y,d}'(F_d)=2yd^{-1/2}\sqrt{1-\rho}\beta_{y,d}(F_d)(\log \tilde{q}_d)'(R_1^{d*}), 
\end{align*}
and $(\log \tilde{q}_d)'$ is not bounded in general. To overcome the difficulty, 
we put a tempering function $\phi_d:\mathbb{R}\rightarrow\mathbb{R}$ which is continuous and 
$\phi_d(x)=x$ if $|x|\le d^{1/4}$ and piecewise constant otherwise. 
By Lemma \ref{assumplem},  the tempered version $\tilde{\alpha}_{y,d}:=\alpha_{y,d}\circ\phi_d$
has a bounded derivative 
\begin{align*}\nonumber
\tilde{\alpha}_{y,d}'(F_d)&=\alpha_{y,d}'(F_d)1_{\{|F_d|<d^{1/4}\}}=\beta_{y,d}'(F_d)1_{\left\{\beta_{y,d}(F_d)<1, |F_d|<d^{1/4}\right\}}\\
&=2yd^{-1/2}\sqrt{1-\rho}\beta_{y,d}(F_d)(\log \tilde{q}_d)'(R_1^{d*})1_{\left\{\beta_{y,d}(F_d)<1, |F_d|<d^{1/4}\right\}}. 
\end{align*}
Moreover, $\sup_{y\in K_\delta}\|\tilde{\alpha}_{y,d}'\|_\infty=O(d^{-1/2})$
and $\sup_{y\in K_\delta}\|\tilde{\alpha}_{y,d}\|_\infty\le 1$
since $R_1^{d*}\in K_{\delta/2}$ for sufficiently large $d$ by (\ref{r1star}). 
Therefore we can apply Proposition \ref{main} to $f(x)=\tilde{\alpha}_{y,d}(x)$ and we have
\begin{align}\nonumber
\sup_{y\in K_\delta}\left|\mathbb{E}_y[F_d\tilde{\alpha}_{y,d}(F_d)]-\mathbb{E}_y[\tilde{\alpha}_{y,d}'(F_d)]-d^{-1/2}\sqrt{1-\rho}\mathbb{E}_y[\tilde{\alpha}_{y,d}(F_d)]\right|\\
=O\left(\max\left\{d^{-1/2}\sup_{y\in K_\delta}\|\tilde{\alpha}_{y,d}'\|_\infty,d^{-1}\sup_{y\in K_\delta}\|\tilde{\alpha}_{y,d}\|_\infty\right\}\right)=O(d^{-1}).\label{a}
\end{align}
Now we show uniform convergence (in $K_\delta$) of  the three expectations in the left-hand side in the above. 
The first expectation can be estimated by Chevyshev's inequality together with Lemma \ref{malliavinlemma}:
\begin{align*}
\left|\mathbb{E}_y\left[F_d\alpha_{y,d}(F_d)\right]-\mathbb{E}_y\left[F_d\tilde{\alpha}_{y,d}(F_d)\right]\right|\le \mathbb{E}_y\left[\left|F_d\right|,|F_d|\ge d^{1/4}\right]\le d^{-3/4}\mathbb{E}_y\left[|F_d|^4\right]
=O(d^{-3/4}).  
\end{align*}
For the uniform convergence of the second and third expectations in the left-hand side of (\ref{a}), 
suppose that $y_d\in K_\delta\ (d\in\mathbb{N})$, and so
without loss of generality, we can assume that there is a limit $y_d\rightarrow y^*\in K_\delta$. 
By Proposition \ref{steinprop}, 
\begin{align*}
|\mathbb{E}_{y_d}[\tilde{\alpha}_{y_d,d}'(F_d)]-N[\tilde{\alpha}_{y_d,d}']|&=
o(\sup_{y\in K_\delta}\|\tilde{\alpha}_{y,d}'\|_\infty)=o(d^{-1/2}),\\
|\mathbb{E}_{y_d}[\tilde{\alpha}_{y_d,d}(F_d)]-N[\tilde{\alpha}_{y_d,d}]|&=
o(\sup_{y\in K_\delta}\|\tilde{\alpha}_{y,d}\|_\infty)=o(1),
\end{align*}
as $d\rightarrow\infty$
where $Nf=\mathbb{E}[f(X)],\ X\sim N(0,1)$. 
By Lemma \ref{assumplem}, 
the following convergence (a.s. in the Lebesgue measure) is satisfied for each $f\in\mathbb{R}$
depending on whether  $(\log \tilde{q})'(y^*)>0$ or 
$(\log \tilde{q})'(y^*)<0$:
\begin{align*}
\lim_{d\rightarrow\infty}1_{\left\{\beta_{y_d,d}(f)<1\right\}}= 1_{\left\{f<0\right\}}, \mathrm{or}\  1_{\left\{f>0\right\}}
\end{align*}
where $\tilde{q}(y)=yq(y)$. 
Also $\lim_{d\rightarrow\infty}\beta_{y_d,d}(f)=1$ is satisfied. By Lebesgue's dominated convergence theorem, we have
\begin{align*}
\lim_{d\rightarrow\infty}d^{1/2}N[\tilde{\alpha}_{y_d,d}']= y^*\sqrt{1-\rho}(\log \tilde{q})'(y^*). 
\end{align*}
On the other hand, if $(\log \tilde{q})'(y^*)=0$, then
\begin{align*}
\lim_{d\rightarrow\infty}d^{1/2}N[\tilde{\alpha}_{y_d,d}']= 0=y^*\sqrt{1-\rho}(\log \tilde{q})'(y^*)
\end{align*}
by Lebesgue's dominated convergence theorem. 
In the same way, $\lim_{d\rightarrow\infty}N[\tilde{\alpha}_{y_d,d}(F_d)]= 1$, which completes to show uniform convergence of the three expectations in the left-hand side of (\ref{a}). 
These uniform convergences yield
\begin{align}\nonumber
\sup_{y\in K_\delta}\left|d^{1/2}\mathbb{E}_y[F_d\alpha_{y,d}(F_d)]-y\sqrt{1-\rho}(\log \tilde{q})'(y)-\sqrt{1-\rho}\right|
=o(1)
\end{align}
as $d\rightarrow\infty$. Thus we have
\begin{align*}
d^{1/2}\mathbb{E}_y[F_d\tilde{\alpha}_{y,d}(F_d)]\rightarrow 
\sqrt{1-\rho}(1+y(\log \tilde{q})'(y))\ (d\rightarrow\infty)
\end{align*}
uniformly in $y\in K_\delta$. 
Therefore by (\ref{f}), we have
\begin{equation}\nonumber
d\mathbb{E}_y[R_1^d-R_0^d]\rightarrow 2(1-\rho)y(1+y(\log \tilde{q})'(y))\ (d\rightarrow\infty)
\end{equation}
uniformly in $y\in K_\delta$ which completes the first part of the convergence of the triplet (\ref{triplet}). 
We prove the convergence of other two parts in (\ref{triplet}). 
By Lemma \ref{malliavinlemma},  
$\{F_d^2\}_d$ is $\mathbb{P}_y$-uniformly integrable in $d$ uniformly in  $y\in K_\delta$. By Proposition \ref{steinprop} together with this fact, we have
\begin{align*}
b_d(y)=d\mathbb{E}_y\left[\left(R_1^d-R_0^d\right)^2\right]
=4(1-\rho)y^2\mathbb{E}_y\left[F_d^2\alpha_{y,d}(F_d)\right]
\rightarrow 4(1-\rho)y^2\ (d\rightarrow\infty)
\end{align*}
uniformly in $y\in K_\delta$. In the same way, by uniform integrability of $\{F_d^4\}_d$, 
\begin{align*}
c_d(y)=d\mathbb{E}_y\left[\left(R_1^d-R_0^d\right)^4\right]
\le 4(1-\rho)y^2d^{-1}\mathbb{E}_y[|F_d|^4]=o(1)\ (d\rightarrow\infty). 
\end{align*}
Thus we obtain the uniform convergence of the triplet (\ref{triplet}) in $K_\delta$. If we prove the existence and uniqueness of the weak solution
of the stochastic differential equation (\ref{Proposition-1-eq1}), 
the convergence $Y^d\Rightarrow Y$ follows from Theorem IX.4.21 of \cite{JS}. 
 
 The existence and uniqueness comes from the standard approach. 
 Let $a(y)$ and $b(y)$ be as in (\ref{Proposition-1-eq1}). Let $c\in (0,\infty)$ and set the scale function $s(x)$
 so that 
 \begin{equation}\nonumber
 s(c)=0,\ s'(x)=\exp\left(-\int_c^x\frac{2a(u)}{b(u)^2}\dif u\right)=
 \exp\left(-\int_c^x\frac{2}{u}+(\log q)'(u)\dif u\right)=\frac{C}{x^2q(x)}
 \end{equation}
 for some constant $C>0$. 
 We use the convention such that $\int_a^b$ is $-\int_b^a$ if $b<a$. 
 By definition, $s(x)$ is a $C^2$ strictly increasing function. Now we prove 
 \begin{equation}\label{eq10}
 \lim_{x\rightarrow+\infty}s(x)=+\infty,\ 
  \lim_{x\rightarrow 0}s(x)=-\infty. 
 \end{equation}
By Schwarz's inequality, we have
 \begin{equation}\nonumber
\left|\int_c^x\frac{1}{u}\dif u\right|=
\left|\int_c^x \sqrt{q(u)}\frac{1}{\sqrt{u^2q(u)}}\dif u\right|\le \left|\int_c^x q(u)\dif u\right|^{1/2}\left|\int_c^x\frac{1}{u^2q(u)}\dif u\right|^{1/2}. 
 \end{equation}
 The left hand side tends to $+\infty$ as $x\rightarrow +\infty$ or 
  $x\rightarrow +0$ and the first term in the right-hand side 
 is bounded by $1$.  Therefore (\ref{eq10}) follows, 
and $s:(0,\infty)\rightarrow\mathbb{R}$ is a one-to-one map. 
 By It\^o's formula, $Z_t=s(Y_t)$ is the solution of the stochastic differential equation $\dif Z_t=\tilde{b}(Z_t)\dif W_t$ 
 where $\tilde{b}(x)=C/(\tilde{q}\circ s^{-1})(x)$ for some constant $C>0$ and for $\tilde{q}(x)=xq(x)$. 
 Thus it has the unique solution by Theorem 5.5.7 of \citet{Karatzas1991}. 
 By using the solution $Z_t$, we have the unique solution of (\ref{Proposition-1-eq1})
 by $Y_t=s^{-1}(Z_t)$. 
Hence $Y^d\Rightarrow Y$ follows by Theorem IX.4.21 of \cite{JS}.

Stationarity and ergodicity of $Y$ is yet to be proved. 
However stationarity comes from the fact that each $Y^d$ is stationary, and ergodicity comes from that of $\{Z_t\}_t$. 
Hence the claim follows. 
\end{proof}


\begin{proof}[Proof of Theorem \ref{Theorem-3}] 
By Proposition \ref{Proposition-1}, first we note that 
$\sup_{m\le M-1}|R_m^d-R_0^d|=o_{\mathbb{P}}(1)$ for any $M\in\mathbb{N}$. 
By this fact, observe that all proposed values of the MpCN algorithm are accepted for a finite number of iteration $M\in\mathbb{N}$ 
in probability $1$ since 
\begin{align*}
\mathbb{P}(X_{m-1}^d=X_m^d\ \exists m\in\{1,\ldots, M-1\})
&\le M\mathbb{P}(X_0^d=X_1^d)\\
&= M\left(1-\mathbb{E}[\alpha(X_0^d, X_1^{d*})]\right)\\
&=M\left(1-\mathbb{E}\left[\min\left\{1,\frac{\tilde{q}_d(R_1^{d*})}{\tilde{q}_d(R_0^d)}\right\}\right]\right)
\rightarrow 0
\end{align*}
by Lebesgue's dominated convergence theorem
and Lemma \ref{assumplem}. 
Thus $\left\{(R_m^d, S_m^d)\right\}_m$ defined in (\ref{cmaeq1}) converges weakly to
$\left\{(R_m, S_m)\right\}_m$ defined by
\begin{equation}\nonumber
\left\{\begin{array}{l}
R_m=R_0\\
S_m= \sqrt{\rho}S_{m-1}+\sqrt{1-\rho}(R_0)^{1/2}W_m,\ W_m\sim N_k(0,I_k)
\end{array}
\right. 
\end{equation}
for $m\ge 1$, 
where  $R_0\sim Q$ and $S_0\sim N_k(0, R_0I_k)$. 
By Proposition \ref{Proposition-1}, the process $Y^d=\left\{R_{[dt]}^d\right\}_t$ converges to a stationary ergodic process. 
Hence the claim follows by Lemma \ref{joint}. 
\end{proof}

\section*{Acknowledgement}
The author wishes to thank to Andreas Eberle, Ajay Jasra, Gareth O. Roberts and Masayuki Uchida for fruitful discussions.  
A part of this work was done when the author was visiting the  Institute for Applied Mathematics, Bonn University.  The author thanks the  Institute for Applied Mathematics, Bonn University for its hospitality. 

\appendix

\section{Some technical estimates}\label{appen1}
Set $K_\delta=[\delta,\delta^{-1}]$ for $\delta\in (0,1)$ and fix $\delta$ throughout. 

\subsection{Estimate  by using the Wiener chaos}

The following is a quick review of Malliavin calculus. 
For the detail, see monographs such as \citet{MR2200233} and \citet{MR2962301}. 

\begin{description}
\item[Abstract Wiener space]
Let $\mathfrak{H}$ be a separable Hilbert space with 
inner product $\left\langle\cdot,\cdot\right\rangle_\mathfrak{H}$ and the norm $\|h\|^2_\mathfrak{H}=\left\langle h,h\right\rangle_\mathfrak{H}$. 
Let $\left\{W(h);h\in \mathfrak{H}\right\}$ be an isonormal Gaussian process on $(\Omega,\mathcal{F},\mathbb{P})$, that is, $W(h)$ is centered Gaussian and 
$\mathbb{E}[W(g)W(h)] =\left\langle g,h\right\rangle_\mathfrak{H}$. 
The $\sigma$-algebra $\mathcal{F}$ is generated by $W$. 
This triplet $(W,\mathfrak{H},\mathbb{P})$
is called an abstract Wiener space. 
\item[Wiener-Chaos decomposition]
Let $L^2(\Omega)$ be the space of  square integrable random variables. 
Let $H_n(x)=(-1)^ne^{x^2/2}\frac{\dif^n}{\dif x^n}e^{-x^2/2}$ be the $n$-th Hermite polynomial. 
Write $\mathcal{H}_n$ for the linear subspace of $L^2(\Omega)$ generated by 
$\left\{H_n(W(h));h\in \mathfrak{H}\right\}$. The linear space $\mathcal{H}_n$ is called the $n$-th Wiener chaos.  Then any element $F\in L^2(\Omega)$ can be described by $F=\mathbb{E}[F]+\sum_{n=1}^\infty F_n$
for $F_n\in \mathcal{H}_n$, that is, $L^2(\Omega)=\bigoplus_{n=0}^\infty\mathcal{H}_n$, where $\mathcal{H}_0$ is the set of constants. This is called the Wiener-Chaos decomposition or the Wiener-It\^o decomposition. 
\item[Fr\'echet derivative]
A smooth random variables is a random variable with the form
$F=f(W(h_1),\ldots, W(h_n))$
where $h_i\in \mathfrak{H}$ and $f$ is a $C^\infty$ function such that all derivatives have polynomial growth.
Then Fr\'echet derivative of $F$ is defined by 
\begin{equation}\nonumber
DF=\sum_{i=1}^n\frac{\partial f}{\partial x_i}(W(h_1),\ldots, W(h_n))h_i
\end{equation}
and so $DF$ is a random variable with values in $\mathfrak{H}$. 
We set 
\begin{equation}\nonumber
\|F\|_{\mathbb{D}^{1,2}}:=\left(\mathbb{E}\left[|F|^2\right]+\mathbb{E}\left[\|DF\|^2_\mathfrak{H}\right]\right)^{1/2}. 
\end{equation}
Write $\mathbb{D}^{1,2}$ for the closure of the space of smooth random variables  with respect to the norm $\|\cdot\|_{\mathbb{D}^{1,2}}$
and extend $D$ to $\mathbb{D}^{1,2}$. 
\item[Ornstein-Uhlenbeck semigroup]
The Ornstein-Uhlenbeck semigroup $(P_t)_{t\ge 0}$
is defined by 
\begin{equation}\nonumber
P_tF=\mathbb{E}[F]+\sum_{n=1}^\infty e^{-nt}F_n
\end{equation}
for $F=\mathbb{E}[F]+\sum_{n=1}^\infty F_n\ (F_n\in \mathcal{H}_n)$. 
The operator $L$  and $L^{-1}$ is defined by
\begin{equation}\nonumber
LF=\sum_{n=1}^\infty -nF_n,\ L^{-1}F=\sum_{n=1}^\infty -F_n/n
\end{equation}
where $LF$ can be defined if $\sum n^2\mathbb{E}[|F_n|^2]<\infty$. 
\end{description}



By the so-called hypercontractivity property of Ornstein-Uhlenbeck operator, we have the following
for finite Wiener chaoses. 
See Corollary 2.8.14 of \citet{MR2962301} for the proof. 

\begin{proposition}\label{contractive}
Let $F\in\mathcal{H}_n$. Then for $p>2$, 
\begin{equation}\nonumber
\mathbb{E}[|F|^p]^{1/p}\le (p-1)^{n/2}\mathbb{E}[|F|^2]^{1/2}. 
\end{equation}
\end{proposition}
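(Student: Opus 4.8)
The plan is to deduce this bound from Nelson's hypercontractivity theorem for the Ornstein-Uhlenbeck semigroup $(P_t)_{t\ge 0}$ defined above, exploiting that each $F\in\mathcal{H}_n$ is an eigenfunction. Writing $\|F\|_p=\mathbb{E}[|F|^p]^{1/p}$, the Wiener chaos definition of $P_t$ gives $P_tF=e^{-nt}F$ for $F\in\mathcal{H}_n$, whence $F=e^{nt}P_tF$ and $\|F\|_p=e^{nt}\|P_tF\|_p$ for every $t\ge 0$. Nelson's theorem states that $P_t$ is a contraction from $L^q$ to $L^p$ exactly when $e^{2t}\ge (p-1)/(q-1)$; with $q=2$ this becomes $e^{2t}\ge p-1$. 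I would then choose the smallest admissible $t$, namely $e^{2t}=p-1$, so that $e^{nt}=(p-1)^{n/2}$, and conclude $\|F\|_p=e^{nt}\|P_tF\|_p\le e^{nt}\|F\|_2=(p-1)^{n/2}\|F\|_2$, which is exactly the asserted inequality. Once the hypercontractive estimate $\|P_tF\|_p\le\|F\|_2$ is available, this reduction is a one-line computation.

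The substance of the argument, and the main obstacle, is Nelson's theorem itself. I would prove it by reduction to one dimension followed by tensorization. By a density and approximation argument one may assume $F$ depends on finitely many coordinates $W(h_1),\dots,W(h_d)$ with $(h_i)$ orthonormal, transferring the problem to the standard Gaussian measure on $\mathbb{R}^d$; under this identification the $d$-dimensional Ornstein-Uhlenbeck semigroup factorises as a tensor product of $d$ copies of the one-dimensional semigroup. The key structural fact is that hypercontractivity tensorizes: if each one-dimensional factor maps $L^q$ into $L^p$ contractively with the same exponents, then so does the product, which follows from Minkowski's integral inequality applied coordinate by coordinate. It therefore suffices to establish the one-dimensional estimate.

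For the one-dimensional case I would pass through the discrete Rademacher model. The sharp two-point (Bonami-Beckner) inequality for a single symmetric sign can be proved by elementary calculus, reducing to a one-variable inequality after normalising; tensorizing it over $N$ independent signs yields hypercontractivity on the cube $\{-1,+1\}^N$, and letting $N\to\infty$ via the central limit theorem recovers the one-dimensional Gaussian semigroup with the correct threshold $e^{2t}\ge(p-1)/(q-1)$. The delicate points are not the algebra but the analysis of the limit: one must check that the relevant convergence holds in $L^p$ so that the contraction constant survives, which is a uniform integrability matter. A cleaner alternative avoids the discrete detour altogether by first establishing the Gaussian logarithmic Sobolev inequality and then invoking Gross's equivalence between log-Sobolev inequalities and hypercontractivity of the associated diffusion semigroup.
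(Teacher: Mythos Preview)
Your argument is correct and is precisely the standard route to this inequality: use that $\mathcal{H}_n$ is the eigenspace of $P_t$ with eigenvalue $e^{-nt}$, apply Nelson's hypercontractivity $\|P_t\|_{L^2\to L^p}\le 1$ at the threshold $e^{2t}=p-1$, and read off the bound. Your sketch of the proof of hypercontractivity itself (tensorization to reduce to one dimension, then either the Bonami--Beckner two-point inequality plus CLT, or Gross's equivalence with the Gaussian log-Sobolev inequality) is also the standard one.

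The paper does not prove this proposition at all: it simply records the statement and refers to Corollary~2.8.14 of \citet{MR2962301} for the proof. The argument in that reference is exactly the one you outline (eigenfunction identity combined with Nelson's theorem, with the latter derived from the log-Sobolev inequality via Gross's theorem). So there is no discrepancy of method to discuss; you have supplied the proof that the paper delegates to the literature.
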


By using this, we prove the following bounds for the chi-squared distribution. 

\begin{lemma}\label{zlemma}
For $d\in\mathbb{N}$, $\xi_d$ follows the chi-squared distribution with $d$ degrees of freedom. 
Then 
\begin{equation}\nonumber
\sup_d d\mathbb{E}\left[\left|\left(\frac{\xi_d}{d}\right)^{k/2}-1\right|^2\right]<\infty\ (k\in\mathbb{Z})
\ 
\mathrm{and}\ 
\sup_d d^{k/2}\mathbb{E}\left[\left|\frac{\xi_d}{d}-1\right|^k\right]<\infty\ (k\in\mathbb{N}). 
\end{equation}
\end{lemma}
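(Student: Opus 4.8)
The plan is to pass to Gaussian coordinates so that the centred statistic sits in a fixed Wiener chaos, and then invoke Proposition \ref{contractive}. Write $\xi_d=\|W\|^2$ with $W=(W_1,\dots,W_d)\sim N_d(0,I_d)$ and set $U_d=\xi_d/d$. Since $H_2(x)=x^2-1$,
\begin{equation}\nonumber
U_d-1=\frac1d\sum_{i=1}^d H_2(W_i),
\end{equation}
so $U_d-1$ lies in the second Wiener chaos $\mathcal{H}_2$, and $\mathbb{E}[(U_d-1)^2]=d^{-2}\mathrm{Var}(\xi_d)=2/d$.

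The second estimate then follows immediately. For an integer $k\ge 3$, Proposition \ref{contractive} applied with $n=2$, $p=k$, $F=U_d-1$ gives $\mathbb{E}[|U_d-1|^k]^{1/k}\le (k-1)(2/d)^{1/2}$, whence $d^{k/2}\mathbb{E}[|U_d-1|^k]\le 2^{k/2}(k-1)^k$, uniformly in $d$; the cases $k=1,2$ are the exact second moment and Cauchy--Schwarz. This proves $\sup_d d^{k/2}\mathbb{E}[|U_d-1|^k]<\infty$.

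For the first estimate I would reduce to the second by a mean value argument. With $g(u)=u^{k/2}$ the mean value theorem gives $U_d^{k/2}-1=\tfrac{k}{2}\zeta^{k/2-1}(U_d-1)$ for some $\zeta$ between $U_d$ and $1$; squaring and using the monotonicity of $u\mapsto u^{k-2}$ (so that $\zeta^{k-2}\le U_d^{k-2}+1$) yields
\begin{equation}\nonumber
\mathbb{E}\left[(U_d^{k/2}-1)^2\right]\le \frac{k^2}{4}\left(\mathbb{E}\left[U_d^{k-2}(U_d-1)^2\right]+\mathbb{E}\left[(U_d-1)^2\right]\right).
\end{equation}
The second term equals $2/d$. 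For the first, Cauchy--Schwarz bounds it by $\mathbb{E}[U_d^{2(k-2)}]^{1/2}\mathbb{E}[(U_d-1)^4]^{1/2}$; here $\mathbb{E}[(U_d-1)^4]^{1/2}=O(d^{-1})$ by the second estimate with $k=4$, and $\mathbb{E}[U_d^{2(k-2)}]=(2/d)^{2(k-2)}\Gamma(d/2+2(k-2))/\Gamma(d/2)$ stays bounded (indeed $\to 1$). Both contributions are $O(1/d)$, giving $\sup_d d\,\mathbb{E}[(U_d^{k/2}-1)^2]<\infty$. Equivalently, one can compute directly from the moments $m(s):=\mathbb{E}[U_d^s]=(2/d)^s\Gamma(d/2+s)/\Gamma(d/2)$ using $m(s)=1+s(s-1)/d+O(1/d^2)$, which yields the leading behaviour $m(k)-2m(k/2)+1=k^2/(2d)+O(1/d^2)$.

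The only real obstacle is the singularity at the origin for $k<2$: both $u^{k/2}$ and the moment $\mathbb{E}[U_d^k]$ blow up as $\xi_d\to 0$, and $\mathbb{E}[U_d^k]$ (together with the auxiliary $\mathbb{E}[U_d^{2(k-2)}]$) is finite only once $d$ exceeds a fixed $k$-dependent threshold ($d>2|k|$, respectively $d>4|k-2|$). For negative $k$ the first estimate is therefore to be read over this cofinite range of $d$, which is exactly what the asymptotic regime $d\to\infty$ requires; each term there is finite and the rescaled sequence converges, so the supremum over that range is finite.
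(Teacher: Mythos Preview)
Your argument is correct. The second estimate is handled exactly as in the paper, via hypercontractivity of the second chaos. For the first estimate you take a different route: you linearise $u^{k/2}-1$ by the mean value theorem, bound the derivative factor $\zeta^{k-2}$ by $U_d^{k-2}+1$, and then close with Cauchy--Schwarz and the fourth-moment bound you already established (your alternative direct expansion of $m(k)-2m(k/2)+1$ works as well). The paper instead uses the pointwise inequality
\[
\bigl(a-1\bigr)^2\le\bigl(a-a^{-1}\bigr)^2=\bigl(a^2-1\bigr)+\bigl(a^{-2}-1\bigr),\qquad a>0,
\]
with $a=(\xi_d/d)^{k/2}$, which reduces the first estimate immediately to the single moment bound $\mathbb{E}[(\xi_d/d)^{\pm k}]-1=O(d^{-1})$ (their display (\ref{chisqeq})). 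Their trick is shorter and avoids the detour through the fourth moment and the auxiliary moment $\mathbb{E}[U_d^{2(k-2)}]$; your approach is more mechanical but perfectly valid, and your remark about the cofinite range of $d$ for negative $k$ applies equally to the paper's proof, since $\mathbb{E}[(\xi_d/d)^{-|k|}]$ is also only finite for $d>2|k|$.
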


\begin{proof}
By definition, for $k\in\mathbb{Z}$, 
\begin{equation}\label{chisqeq}
0\le \mathbb{E}\left[\left(\frac{\xi_d}{d}\right)^k\right]-1=\left(\frac{2}{d}\right)^k\frac{\Gamma(\frac{d}{2}+k)}{\Gamma(\frac{d}{2})}-1
\le 
\left(1+\frac{2}{d}k\right)^{k}-1=O(d^{-1}). 
\end{equation}
Observe that
\begin{equation}\nonumber
\left\{\left(\frac{\xi_d}{d}\right)^{k/2}-1\right\}^2
\le \left\{\left(\frac{\xi_d}{d}\right)^{k/2}-\left(\frac{\xi_d}{d}\right)^{-k/2}\right\}^2=
\sum_{i=\pm k}\left(\frac{\xi_d}{d}\right)^i-1. 
\end{equation}
Hence the first claim follows from (\ref{chisqeq}). 
Observe that $\xi_d/d-1$ has the same law as $d^{-1}\sum_{i=1}^d\left(W(e_i)^2-1\right)$,  which is in the second Wiener chaos in $(W,\mathfrak{H},\mathbb{P})$. 
Then the  second claim comes from Proposition \ref{contractive} since we have
\begin{equation}\nonumber
\mathbb{E}\left[\left|d^{1/2}\left(\frac{\xi_d}{d}-1\right)\right|^k\right]^{1/k}\le (k-1)
\mathbb{E}\left[\left|d^{1/2}\left(\frac{\xi_d}{d}-1\right)\right|^2\right]^{1/2}=\sqrt{2}(k-1).
\end{equation}
\end{proof}

The following is the key result for our paper. See Theorem 2.9.1 \citet{MR2962301}
for the proof (see also the proof of Theorem 3.1 of \citet{MR2520122}). 

\begin{proposition}\label{malliavinprop}
For $F_d\in\mathbb{D}^{1,2}$, suppose that  
\begin{equation}\label{malliavinpropeq1}
\mathbb{E}\left|\left\langle DF_d,-DL^{-1}F_d\right\rangle_\mathfrak{H} -1\right|= O(d^{-1/2})
\end{equation}
and $F_d$ has a density with respect to the Lebesgue measure. 
Then for any absolutely continuous function $f$, 
\begin{equation}\nonumber
\mathbb{E}\left[\left(F_d-\mathbb{E}\left[F_d\right]\right)f(F_d)\right]-\mathbb{E}\left[f'(F_d)\right]=O\left(d^{-1/2}\|f'\|_\infty\right). 
\end{equation}
\end{proposition}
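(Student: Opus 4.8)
The plan is to reduce the statement to the fundamental integration-by-parts formula of Malliavin calculus and then estimate the resulting error by the hypothesis (\ref{malliavinpropeq1}). Since both sides of the claimed identity are unchanged when $F_d$ is replaced by $F_d-\mathbb{E}[F_d]$ (the operators $D$ and $L^{-1}$ annihilate constants, and $f'$ is unaffected by shifting its argument by a constant), I would assume from the start that $\mathbb{E}[F_d]=0$. I would also assume $\|f'\|_\infty<\infty$, since otherwise the assertion is vacuous.

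The core step is the identity
\begin{equation}\nonumber
\mathbb{E}[F_d f(F_d)]=\mathbb{E}\left[f'(F_d)\langle DF_d,-DL^{-1}F_d\rangle_{\mathfrak{H}}\right]
\end{equation}
valid for centered $F_d\in\mathbb{D}^{1,2}$. First I would establish this for $f\in C^1$ with bounded derivative. Writing $\delta$ for the divergence operator (the adjoint of $D$), one has $L=-\delta D$, so the centered variable satisfies $F_d=LL^{-1}F_d=\delta(-DL^{-1}F_d)$. Applying the duality relation $\mathbb{E}[G\,\delta(u)]=\mathbb{E}[\langle DG,u\rangle_{\mathfrak{H}}]$ with $G=f(F_d)$ and $u=-DL^{-1}F_d$, together with the chain rule $D(f(F_d))=f'(F_d)\,DF_d$, yields the identity at once.

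To pass from $C^1$ to an arbitrary absolutely continuous $f$ with $\|f'\|_\infty<\infty$, I would mollify, setting $f_\varepsilon=f*\rho_\varepsilon$ for a smooth approximate identity $\rho_\varepsilon$. Each $f_\varepsilon$ is smooth with $\|f_\varepsilon'\|_\infty\le\|f'\|_\infty$, and $f_\varepsilon\to f$ uniformly while $f_\varepsilon'\to f'$ at Lebesgue-almost every point. This is exactly where the hypothesis that $F_d$ has a density is used: it guarantees that $f'(F_d)$ is well defined almost surely and that $f_\varepsilon'(F_d)\to f'(F_d)$ almost surely. Dominated convergence, with dominating variable $\|f'\|_\infty\,|\langle DF_d,-DL^{-1}F_d\rangle_{\mathfrak{H}}|$, which is integrable because $F_d\in\mathbb{D}^{1,2}$, then transfers the identity from $f_\varepsilon$ to $f$. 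This approximation is the only genuinely delicate point; the algebraic identity and the final estimate below are routine.

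With the identity in hand, the conclusion is immediate. Subtracting $\mathbb{E}[f'(F_d)]$ gives
\begin{align*}
\mathbb{E}[(F_d-\mathbb{E}[F_d])f(F_d)]-\mathbb{E}[f'(F_d)]
&=\mathbb{E}\left[f'(F_d)\left(\langle DF_d,-DL^{-1}F_d\rangle_{\mathfrak{H}}-1\right)\right],
\end{align*}
and hence, bounding $|f'|\le\|f'\|_\infty$ and invoking (\ref{malliavinpropeq1}),
\begin{align*}
\left|\mathbb{E}[(F_d-\mathbb{E}[F_d])f(F_d)]-\mathbb{E}[f'(F_d)]\right|
&\le\|f'\|_\infty\,\mathbb{E}\left|\langle DF_d,-DL^{-1}F_d\rangle_{\mathfrak{H}}-1\right|
=O\!\left(d^{-1/2}\|f'\|_\infty\right),
\end{align*}
which is the desired estimate.
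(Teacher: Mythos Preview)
Your proposal is correct and is exactly the standard Nourdin--Peccati argument: write the centered variable as $F_d=\delta(-DL^{-1}F_d)$, apply the duality relation together with the chain rule to get $\mathbb{E}[F_d f(F_d)]=\mathbb{E}[f'(F_d)\langle DF_d,-DL^{-1}F_d\rangle_{\mathfrak{H}}]$, and then bound the difference from $\mathbb{E}[f'(F_d)]$ using the hypothesis. The paper does not supply its own proof of this proposition but simply refers to Theorem 2.9.1 of \citet{MR2962301} (and the proof of Theorem 3.1 of \citet{MR2520122}), which is precisely the argument you have written out, including the use of the density assumption to make sense of $f'(F_d)$ almost surely when $f$ is merely absolutely continuous.
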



\subsection{Representation of random variables  for the MpCN algorithm}\label{localmalliavin}
We introduce an abstract Wiener space  to the MpCN algorithm.  
Write $\mathbb{P}_y$ and $\mathbb{E}_y$ for 
the conditional probability and the expectation with respect to $\mathbb{P}$ given $y=\|X_0^d\|^2/d$ with respectively. 
Assume that the orthonormal base of $\mathfrak{H}$ is $\left\{e_i;i\in\mathbb{Z}\right\}$
and consider an abstract Wiener space $(W,\mathfrak{H},\mathbb{P}_y)$ for each $y\in (0,\infty)$. 
Set 
\begin{equation}\nonumber
 \|F\|_{\mathbb{D}^{1,2}_\delta}=\sup_{y\in K_\delta}\left(\mathbb{E}_y\left[|F|^2\right]+\mathbb{E}_y\left[\|DF\|_\mathfrak{H}^2\right]\right)^{1/2}.
 \end{equation} 
 Rewrite random variables defined in (\ref{MpCN}) for $m=1$ as random variables in  $(W,\mathfrak{H},\mathbb{P}_y)$ by
\begin{subequations}\label{subeq}
\begin{align}
&X_1^{d*}=\sqrt{\rho}x+\sqrt{(1-\rho)Z_1^d}W_1^d,\\ 
&W_1^d=\sum_{i=1}^dW(e_i)e_i,\\ 
&Z_1^d=\frac{yd}{\|\tilde{W}_1^d\|^2_\mathfrak{H}}=\frac{\|x\|_\mathfrak{H}^2}{\|\tilde{W}_1^d\|^2_\mathfrak{H}}, 
\end{align}
\end{subequations}
where $\tilde{W}_1^d=\sum_{i=1}^dW(e_{-i})e_{-i}$
and $x=\sum_{i=1}^dx_ie_i$ is any value such that $\|x\|^2_\mathfrak{H}/d=y$. Notice that this representation does not change the law of $F_d$ which is defined in (\ref{Fd}) and that defined here:
\begin{equation}\label{c}
F_d=d^{1/2}\frac{\|X_1^{d*}\|_\mathfrak{H}^2-\|x\|_\mathfrak{H}^2}{2\|x\|_\mathfrak{H}^2\sqrt{1-\rho}}. 
\end{equation}

\begin{lemma}\label{malliavinlemma}
Let $G_d^k=d^{1/2}\left(\left(\frac{\|\tilde{W}_1^d\|_\mathfrak{H}^2}{d}\right)^{k/2}-1\right)\ (k\in\mathbb{Z})$.
Then 
for each $\delta\in (0,1)$, $k\in\mathbb{Z}$, 
\begin{equation}\nonumber
\sup_d\|G_d^k\|_{\mathbb{D}_\delta^{1,2}}<\infty,\ \sup_d\|F_d\|_{\mathbb{D}_\delta^{1,2}}<\infty. 
\end{equation}
Also we have
\begin{equation}\nonumber
\sup_d\sup_{y\in K_\delta}\mathbb{E}_y[|F_d|^4]<\infty. 
\end{equation}
\end{lemma}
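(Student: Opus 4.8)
The plan is to convert each object into an explicit functional of two independent chi-squared variables and one standard Gaussian, and then to read off all the bounds from Lemma~\ref{zlemma}. Write $\xi_d=\|\tilde W_1^d\|_\mathfrak{H}^2=\sum_{i=1}^dW(e_{-i})^2$ and $\zeta_d=\|W_1^d\|_\mathfrak{H}^2=\sum_{i=1}^dW(e_i)^2$; both are chi-squared with $d$ degrees of freedom, and since they are built from the disjoint coordinate blocks $\{e_{-i}\}$ and $\{e_i\}$ they are independent and their Malliavin derivatives are orthogonal. A preliminary but useful remark is that, in the representation (\ref{subeq})--(\ref{c}), neither $G_d^k$ nor $F_d$ depends on $y$: the direction of $x$ enters only through $\langle W_1^d,x/\|x\|\rangle\sim N(0,1)$. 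Hence the supremum over $y\in K_\delta$ in $\|\cdot\|_{\mathbb{D}^{1,2}_\delta}$ and in the fourth-moment bound is attained trivially, and it suffices to bound a single expectation for each $d$.

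For the first claim I would treat $G_d^k=d^{1/2}((\xi_d/d)^{k/2}-1)$. Its $L^2$-norm equals $d\,\mathbb{E}[((\xi_d/d)^{k/2}-1)^2]$, which is uniformly bounded by the first estimate of Lemma~\ref{zlemma}. For the derivative, the chain rule with $D\xi_d=2\sum_{i=1}^dW(e_{-i})e_{-i}$ and $\|D\xi_d\|_\mathfrak{H}^2=4\xi_d$ gives $\|DG_d^k\|_\mathfrak{H}^2=k^2(\xi_d/d)^{k-1}$, so $\mathbb{E}[\|DG_d^k\|_\mathfrak{H}^2]=k^2\mathbb{E}[(\xi_d/d)^{k-1}]$ is uniformly bounded by (\ref{chisqeq}). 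Adding the two contributions yields $\sup_d\|G_d^k\|_{\mathbb{D}^{1,2}_\delta}<\infty$.

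For $F_d$ I would first combine (\ref{crgeq2}) with the representation (\ref{subeq})--(\ref{c}) and $Z_1^d=\|x\|_\mathfrak{H}^2/\xi_d$ to obtain the closed form
\begin{equation}\nonumber
F_d=\sqrt{\rho}\,(\xi_d/d)^{-1/2}N_d+\frac{\sqrt{1-\rho}}{2}\,d^{1/2}\left(\frac{\zeta_d/d}{\xi_d/d}-1\right),\qquad N_d:=\Big\langle W_1^d,\frac{x}{\|x\|}\Big\rangle\sim N(0,1).
\end{equation}
The $L^2$- and $L^4$-bounds then follow by splitting $F_d$ into its two summands (Minkowski) and, within each summand, separating the Gaussian and chaos factors from the inverse powers of $\xi_d/d$ by H\"older's inequality: $N_d$ is standard Gaussian with all moments finite; the denominators $(\xi_d/d)^{-1/2}$ and $(\xi_d/d)^{-1}$ have moments of every fixed order bounded uniformly in (large) $d$ by the negative-index case of Lemma~\ref{zlemma}; and $d^{1/2}(\zeta_d/d-1)$, $d^{1/2}(\xi_d/d-1)$ have all $L^p$-norms bounded by the second estimate of Lemma~\ref{zlemma} (equivalently by hypercontractivity, Proposition~\ref{contractive}, since they lie in the second Wiener chaos). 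Factoring expectations across the independent $\{e_i\}$- and $\{e_{-i}\}$-blocks gives both $\sup_d\sup_y\mathbb{E}_y[F_d^2]<\infty$ and the third assertion $\sup_d\sup_y\mathbb{E}_y[|F_d|^4]<\infty$. For the derivative I would differentiate the closed form and bound $\|DF_d\|_\mathfrak{H}^2$ by twice the sum of the squared gradients of the two summands; inside each of these the $\{e_i\}$- and $\{e_{-i}\}$-contributions are orthogonal, so $\|DF_d\|_\mathfrak{H}^2$ is dominated by a finite sum of terms of the form $N_d^2d^{-1}(\xi_d/d)^{-2}$, $(\xi_d/d)^{-1}$, $(\zeta_d/d)(\xi_d/d)^{-2}$ and $(\zeta_d/d)^2(\xi_d/d)^{-3}$, each of uniformly bounded expectation by independence together with Lemma~\ref{zlemma}.

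The step I expect to be the real obstacle is handling the nonlinear denominators $(\xi_d/d)^{-1/2}$ and $(\xi_d/d)^{-1}$: a priori they blow up where $\xi_d$ is small, so $F_d$ is not a smooth random variable in the admissible polynomial-growth sense, and one must first justify $F_d\in\mathbb{D}^{1,2}$ and the chain-rule formula for $DF_d$. I would settle this by approximating $t\mapsto t^{-1/2}$ by $C^1$ functions truncated away from the origin and passing to the limit in the $\|\cdot\|_{\mathbb{D}^{1,2}_\delta}$-norm, the convergence being controlled precisely by the uniform bounds on the negative moments of $\xi_d/d$ from Lemma~\ref{zlemma}. Once membership and the chain rule are secured, everything reduces to the moment computations above.
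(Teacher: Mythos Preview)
Your proposal is correct and follows essentially the same route as the paper: both observe that the laws of $G_d^k$ and $F_d$ are independent of $y$, use Lemma~\ref{zlemma} to bound $G_d^k$ in $\mathbb{D}^{1,2}$, decompose $F_d$ via (\ref{crgeq2}) into products of inverse-chi-squared factors with the chaos elements $H_1^d=\langle x/\|x\|,W_1^d\rangle$ and $H_2^d=d^{-1/2}(\|W_1^d\|^2-\|\tilde W_1^d\|^2)/2$, and then control everything by H\"older--Minkowski together with hypercontractivity (Proposition~\ref{contractive}) for the fourth moment. The only differences are cosmetic: the paper reuses the already-proved $G_d^k$ bound to handle the factors $(\xi_d/d)^{-1/2},(\xi_d/d)^{-1}$ rather than redoing the derivative computation, and it does not spell out the approximation argument you give for membership of $F_d$ in $\mathbb{D}^{1,2}$, which is a welcome point of extra care on your part.
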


\begin{proof}
Note that the law of $G_d^k$ and $F_d$ do not depend on $y$ and so we omit the subscript $y$ in this proof. 
First we prove $\sup_d\|G_d^k\|_{\mathbb{D}_\delta^{1,2}}<\infty$. The $L^2$ boundedness
$\sup_d\mathbb{E}[|G_d^k|^2]<\infty$ was proved in 
Lemma \ref{zlemma}. Also 
\begin{align*}\nonumber
\|DG_d^k\|_\mathfrak{H}
&=\left\|k\left(\frac{\|\tilde{W}_1^d\|_\mathfrak{H}^2}{d}\right)^{k/2-1}\frac{\tilde{W}_1^d}{d^{1/2}}\right\|_{\mathfrak{H}}
=k\left(\frac{\|\tilde{W}_1^d\|_\mathfrak{H}^2}{d}\right)^{(k-1)/2}
\end{align*}
and hence $\sup_d\mathbb{E}[\|DG_d^k\|_\mathfrak{H}^2]<\infty$ follows by Lemma \ref{zlemma}. 
and hence the first claim follows. 

Next we show $\sup_d\|F_d\|_{\mathbb{D}_\delta^{1,2}}<\infty$. By (\ref{crgeq2}), 
\begin{equation}\label{b}
F_d=\sqrt{\rho}\left(\frac{\|\tilde{W}_1^d\|^2_\mathfrak{H}}{d}\right)^{-1/2}\left\langle \frac{x}{\|x\|_\mathfrak{H}},W_1^d\right\rangle_\mathfrak{H}
+\sqrt{1-\rho}\left(\frac{\|\tilde{W}_1^d\|^2_\mathfrak{H}}{d}\right)^{-1}\left(d^{-1/2}\frac{\|W_1^d\|^2_\mathfrak{H}-\|\tilde{W}_1^d\|^2_\mathfrak{H}}{2}\right). 
\end{equation}
It is not difficult to check
$H_1^d:=\left\langle \frac{x}{\|x\|_\mathfrak{H}},W_1^d\right\rangle_\mathfrak{H}\in \mathcal{H}_1$
and $H_2^d:=d^{-1/2}\frac{\|W_1^d\|^2_\mathfrak{H}-\|\tilde{W}_1^d\|^2_\mathfrak{H}}{2}\in\mathcal{H}_2$
satisfy $\sup_d\|H_i^d\|_{\mathbb{D}^{1,2}_\delta}<\infty\ (i=1,2)$. 
This, together with the first claim prove
$\sup_d\|F_d\|_{\mathbb{D}_\delta^{1,2}}<\infty$
by H\"older's inequality and Minkowski's inequality. 

Finally we check $\sup_d\mathbb{E}[|F_d|^4]<\infty$. 
However by Proposition \ref{malliavinprop}, 
$\sup_d\mathbb{E}[|H_i^d|^4]^{1/4}\le 3^{i/2}\sup_d\mathbb{E}[|H_i^d|^2]^{1/2}=3^{i/2}<\infty$. 
Hence it is sufficient to show $\sup_d\mathbb{E}\left[\left(\frac{\|\tilde{W}_1^d\|^2_\mathfrak{H}}{d}\right)^{-4}\right]<\infty$
  by H\"older's inequality and Minkowski's inequality. 
However this comes from Lemma \ref{zlemma} and hence the claim follows. 
\end{proof}

\begin{proposition}\label{main}
Suppose that $f$ is an absolutely continuous function. 
Then for $\delta\in (0,1)$, 
\begin{equation}\nonumber
\sup_{y\in K_\delta}\left|\mathbb{E}_y[F_df(F_d)]-\mathbb{E}_y[f'(F_d)]-d^{-1/2}\sqrt{1-\rho}\mathbb{E}_y[f(F_d)]\right|=O\left(\max\left\{d^{-1/2}\|f'\|_\infty,d^{-1}\|f\|_\infty\right\}\right)
\end{equation}
\end{proposition}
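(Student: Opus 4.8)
The plan is to deduce the estimate from the Malliavin--Stein bound of Proposition \ref{malliavinprop}, which controls the \emph{centered} functional $F_d-\mathbb{E}_y[F_d]$, and then to account separately for the non-zero mean of $F_d$, which is precisely what produces the extra term $d^{-1/2}\sqrt{1-\rho}\,\mathbb{E}_y[f(F_d)]$. First I would write, with $m_d:=\mathbb{E}_y[F_d]$,
\begin{equation}\nonumber
\mathbb{E}_y[F_df(F_d)]=\mathbb{E}_y\big[(F_d-m_d)f(F_d)\big]+m_d\,\mathbb{E}_y[f(F_d)],
\end{equation}
so the problem splits into (i) computing $m_d$ and (ii) bounding the centered expectation. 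Since, as noted in the proof of Lemma \ref{malliavinlemma}, the law of $F_d$ does not depend on $y$, both $m_d$ and all estimates below are automatically uniform over $y\in K_\delta$, so the supremum in the statement is harmless.

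For step (i) I would use the representation (\ref{b}). The first-chaos term $\langle x/\|x\|,W_1^d\rangle_\mathfrak{H}$ has mean zero and is independent of $\tilde W_1^d$, so only the second contributes, giving $m_d=\sqrt{1-\rho}\,\tfrac{d^{1/2}}{2}\big(\mathbb{E}[(\|\tilde W_1^d\|_\mathfrak{H}^2/d)^{-1}]-1\big)$. By the chi-squared moment identity (\ref{chisqeq}) with $k=-1$ one has $\mathbb{E}[(\|\tilde W_1^d\|^2/d)^{-1}]-1=2/(d-2)$, whence $m_d=d^{-1/2}\sqrt{1-\rho}+O(d^{-3/2})$ and therefore $m_d\mathbb{E}_y[f(F_d)]=d^{-1/2}\sqrt{1-\rho}\,\mathbb{E}_y[f(F_d)]+O(d^{-3/2}\|f\|_\infty)$, which is the only source of the $\|f\|_\infty$-type error.

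For step (ii) I would apply Proposition \ref{malliavinprop} in $(W,\mathfrak{H},\mathbb{P}_y)$; it yields $\mathbb{E}_y[(F_d-m_d)f(F_d)]-\mathbb{E}_y[f'(F_d)]=O(d^{-1/2}\|f'\|_\infty)$ once its hypotheses hold. The density of $F_d$ is clear since conditionally on $\tilde W_1^d$ it is a non-degenerate functional of the Gaussian vector $W_1^d$ (its first-chaos component alone is absolutely continuous). The substantive hypothesis is (\ref{malliavinpropeq1}), and this is where the work lies. I would split $F_d=F_d^{(0)}+E_d$, where $F_d^{(0)}=\sqrt{\rho}\,H_1^d+\sqrt{1-\rho}\,H_2^d$ is obtained from (\ref{b}) by replacing the prefactors $(\|\tilde W_1^d\|^2/d)^{-1/2}$ and $(\|\tilde W_1^d\|^2/d)^{-1}$ by their limit $1$, and $E_d=d^{-1/2}(\sqrt{\rho}\,G_d^{-1}H_1^d+\sqrt{1-\rho}\,G_d^{-2}H_2^d)$ collects the fluctuations, so $\|E_d\|_{\mathbb{D}^{1,2}_\delta}=O(d^{-1/2})$ by Lemma \ref{malliavinlemma}. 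For the clean part, as $H_1^d\in\mathcal{H}_1$ and $H_2^d\in\mathcal{H}_2$ we have $-DL^{-1}F_d^{(0)}=\sqrt{\rho}\,DH_1^d+\tfrac12\sqrt{1-\rho}\,DH_2^d$, and a direct computation gives
\begin{equation}\nonumber
\langle DF_d^{(0)},-DL^{-1}F_d^{(0)}\rangle_\mathfrak{H}=\rho+\tfrac{1-\rho}{2}\,\frac{\|W_1^d\|_\mathfrak{H}^2+\|\tilde W_1^d\|_\mathfrak{H}^2}{d}+\tfrac32\sqrt{\rho(1-\rho)}\,d^{-1/2}H_1^d,
\end{equation}
whose mean is exactly $1$ and whose deviation from $1$ has $L^2(\mathbb{P}_y)$-norm $O(d^{-1/2})$ by Lemma \ref{zlemma} and Proposition \ref{contractive}. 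The contribution of $E_d$ to the inner product I would then bound by Cauchy--Schwarz together with the $\mathbb{D}^{1,2}_\delta$-estimates of Lemma \ref{malliavinlemma} and hypercontractivity, again at order $d^{-1/2}$.

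Combining (i) and (ii) gives $\mathbb{E}_y[F_df(F_d)]-\mathbb{E}_y[f'(F_d)]-d^{-1/2}\sqrt{1-\rho}\,\mathbb{E}_y[f(F_d)]=O(d^{-1/2}\|f'\|_\infty)+O(d^{-3/2}\|f\|_\infty)$, which is subsumed in the claimed $O(\max\{d^{-1/2}\|f'\|_\infty,d^{-1}\|f\|_\infty\})$. I expect the main obstacle to be the verification of (\ref{malliavinpropeq1}): because $F_d$ carries the random prefactors $(\|\tilde W_1^d\|^2/d)^{-1/2}$ and its square, it is not a finite sum of fixed Wiener chaoses, so $-DL^{-1}F_d$ is not explicit and the remainder $E_d$ must be controlled through the $\mathbb{D}^{1,2}_\delta$ machinery rather than by a closed-form chaos computation.
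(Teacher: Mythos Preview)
Your proposal is correct and follows essentially the same route as the paper: the same approximation $F_d^{(0)}$ (the paper calls it $F_{d,0}$), the same explicit computation of $\langle DF_{d,0},-DL^{-1}F_{d,0}\rangle_\mathfrak{H}$, the same $\mathbb{D}^{1,2}_\delta$-control of the remainder $E_d=F_d-F_{d,0}$ via Lemma~\ref{malliavinlemma}, and the same evaluation of $\mathbb{E}_y[F_d]$ from the inverse chi-squared mean. The only cosmetic difference is that the paper invokes Shigekawa's theorem for the density of $F_d$, whereas you argue by conditioning on $\tilde W_1^d$; both are fine.
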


\begin{proof}
We check the conditions in Proposition \ref{malliavinprop}. 
Without loss of generality, we can certainly assume that $\|f'\|_\infty<\infty$
and $\|f\|_\infty<\infty$ 
since otherwise the right hand side becomes $+\infty$. 
We have a decomposition of $F_d$ as in (\ref{b}). 
Set 
\begin{equation}\nonumber
F_{d,0}
=\sqrt{\rho}\left\langle \frac{x}{\|x\|_\mathfrak{H}}, W^d_1\right\rangle_\mathfrak{H}+\sqrt{1-\rho}\left(d^{-1/2}\frac{\|W_1^d\|_\mathfrak{H}^2-\|\tilde{W}_1^d\|_\mathfrak{H}^2}{2}\right).
\end{equation}
By Lemma \ref{malliavinlemma} together with H\"older's inequality and Minkowski's inequality, we have
\begin{equation}\label{maineq1}
d^{1/2}\|F_d-F_{d,0}\|_{\mathbb{D}^{1,2}_\delta}<\infty.
\end{equation} 
By simple algebra, 
\begin{align*}
\left\langle DF_{d,0},-DL^{-1}F_{d,0}\right\rangle_\mathfrak{H} &=
\left\langle \sqrt{\rho}\frac{x}{\|x\|_\mathfrak{H}}+\sqrt{1-\rho}d^{-/2}(W_1^d-\tilde{W}_1^d),\sqrt{\rho}\frac{x}{\|x\|_\mathfrak{H}}+\sqrt{1-\rho}d^{-1/2}\frac{W_1^d-\tilde{W}_1^d}{2}\right\rangle_\mathfrak{H}\\
&=1+\frac{3}{2}\sqrt{\rho(1-\rho)}d^{-1/2}\left\langle \frac{x}{\|x\|_\mathfrak{H}},W_1^d\right\rangle_\mathfrak{H}
+(1-\rho)d^{-1}\frac{(\|W_1^d\|_\mathfrak{H}^2-d)+(\|\tilde{W}_1^d\|_\mathfrak{H}^2-d)}{2}
\end{align*}
and so it is straightforward to check $\mathbb{E}_y\left|\left\langle DF_{d,0},-DL^{-1}F_{d,0}\right\rangle_\mathfrak{H}-1\right|=O(d^{-1/2})$ uniformly in $y$. 
Therefore (\ref{malliavinpropeq1}) follows from (\ref{maineq1})
by H\"older's and Mikowskii's inequalities together with 
$\mathbb{E}_y[\|DL^{-1}F\|_\mathfrak{H}^2]\le \mathbb{E}_y[\|DF\|_\mathfrak{H}^2]$ for $F\in\mathbb{D}^{1,2}_\delta$. 
Also, since $F_d$ is a mixture of finite multiple Wiener chaoses, it has a density with respect to Lebesgue measure by Theorem 5.1 of \citet{MR582167}. 
 Thus we can apply Proposition \ref{malliavinprop}. 
In the current case, 
\begin{equation}\nonumber
d^{1/2}\mathbb{E}_y[F_d]=\frac{\sqrt{1-\rho}}{2}d\mathbb{E}_y\left[\frac{d}{\|\tilde{W}_1^d\|^2_\mathfrak{H}}-1\right]=\sqrt{1-\rho}+O(d^{-1})
\end{equation}
since the mean of the inverse chi-squared distribution $d/\|\tilde{W}_1^d\|^2_\mathfrak{H}$ is $1/(d-2)$. 
\end{proof}

\subsection{Total variation distance and Stein's method}

Total variation distance of measures $\mu$ and $\nu$ on a measurable space $(E,\mathcal{E})$ is defined by
\begin{equation}\nonumber
\|\mu-\nu\|_{\mathrm{TV}}=\sup_{A\in\mathcal{E}}|\mu(A)-\nu(A)|=\frac{1}{2}\sup\left|\int_Ef(x)\mu(\dif x)-\int_Ef(x)\nu(\dif x)\right|
\end{equation}
where the second supremum  is taken for all $[-1,1]$-valued measurable function on $(E,\mathcal{E})$. 
The convergence in total variation distance is stronger than weak convergence. However for sequences from 
finite Wiener chaoses, \citet{MR3003367} obtain the following useful result. 

\begin{theorem}[Theorem 5.1 of \citet{MR3003367}]\label{NourdinPoly}
Let $F_n=(F_{n,1},\ldots, F_{n,k})$ be a random vector such that $F_{n,i}\in\mathcal{H}_{h_i}$ for $h_1,\ldots, h_k\in\mathbb{N}$. 
If $\mathcal{L}(F_n)$ converges weakly to $N_k(0, C)$ with $\det C>0$, then the total variation convergence also holds. 
\end{theorem}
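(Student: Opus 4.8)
The plan is to upgrade weak convergence to total variation convergence by showing that the laws of $F_n$ are absolutely continuous with densities $p_n$ that converge in $L^1(\mathbb{R}^k)$ to the Gaussian density $\phi_k(\,\cdot\,;0,C)$. Since the limit is absolutely continuous (because $\det C>0$), $L^1$-convergence of densities is precisely convergence in total variation. All the effort therefore goes into producing enough uniform Malliavin regularity to guarantee that the densities exist and form an equicontinuous, tight family.

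First I would record the uniform higher-order Sobolev bounds. Each component $F_{n,i}$ lives in the fixed chaos $\mathcal{H}_{h_i}$, so by hypercontractivity (Proposition \ref{contractive} and its iterates) every Malliavin--Sobolev norm $\|F_{n,i}\|_{\mathbb{D}^{s,p}}$ is dominated by a constant multiple of $\mathbb{E}[F_{n,i}^2]^{1/2}$. Weak convergence to $N_k(0,C)$, together with the uniform integrability that hypercontractivity supplies, forces $\sup_n\mathbb{E}[F_{n,i}^2]<\infty$, and hence $\sup_n\|F_{n,i}\|_{\mathbb{D}^{s,p}}<\infty$ for every $s$ and $p$.

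Second, and this is the heart of the argument, I would control the Malliavin covariance matrix $\gamma_n=\left(\langle DF_{n,i},DF_{n,j}\rangle_{\mathfrak{H}}\right)_{i,j}$. By the Peccati--Tudor criterion, convergence of $F_n$ to the non-degenerate Gaussian is equivalent to $\gamma_n\to C$ entrywise in $L^2$, so $\det\gamma_n\to\det C>0$ in probability. To run the integration-by-parts machinery I need the uniform negative moments $\sup_n\mathbb{E}[(\det\gamma_n)^{-p}]<\infty$. Since $\det\gamma_n$ is a functional lying in a finite sum of Wiener chaoses of fixed maximal order, with $L^2$-norm bounded away from $0$, the Carbery--Wright anti-concentration inequality in its Wiener-space form yields a small-ball estimate $\mathbb{P}(\det\gamma_n\le\epsilon)\le c\,\epsilon^{1/D}$ uniformly in $n$, and integrating this tail delivers the required uniform inverse moments. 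I expect this anti-concentration step to be the main obstacle: one must propagate the non-degeneracy of the limit $C$ quantitatively along the entire sequence rather than only in the limit.

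Finally, with uniform $\mathbb{D}^{s,p}$-bounds and uniform inverse moments of $\det\gamma_n$ secured, the classical Malliavin integration-by-parts formula expresses $\mathbb{E}[\partial^\alpha\psi(F_n)]$ as $\mathbb{E}[\psi(F_n)H_\alpha]$ with $\sup_n\mathbb{E}[|H_\alpha|^p]<\infty$ for each multi-index $\alpha$. This produces densities $p_n$ satisfying $\sup_n(\|p_n\|_\infty+\|\nabla p_n\|_\infty)<\infty$ together with uniform tail decay, so $\{p_n\}$ is equicontinuous and tight. By Arzel\`a--Ascoli every subsequence has a locally uniformly convergent sub-subsequence whose limit must be the density of $N_k(0,C)$, since weak convergence already identifies it; the uniform decay then upgrades locally uniform convergence to $L^1(\mathbb{R}^k)$ convergence, that is, to convergence in total variation, which is exactly the claim.
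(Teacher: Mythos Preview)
The paper does not supply a proof of this statement: it is quoted verbatim as Theorem 5.1 of \citet{MR3003367} and used as a black box (in the proof of Lemma~\ref{consistencylem}, to upgrade the convergence in (\ref{mixconv}) to total variation). There is therefore no ``paper's own proof'' to compare your proposal against.

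That said, your sketch is a reasonable outline of the actual Nourdin--Poly argument. Their proof does hinge on the two ingredients you identify: hypercontractivity to get uniform $\mathbb{D}^{s,p}$ control for free on fixed chaoses, and the Carbery--Wright inequality to obtain uniform small-ball estimates for $\det\gamma_n$ (this is indeed the crux). Two remarks. First, your claim that $\gamma_n\to C$ is not quite right: on $\mathcal{H}_{h_i}$ one has $-DL^{-1}=h_i^{-1}D$, so the Malliavin matrix converges to a diagonally rescaled version of $C$ (with determinant $(\prod_i h_i)\det C>0$), not to $C$ itself; this is harmless for the argument but worth stating correctly. Second, the endgame in \citet{MR3003367} is lighter than what you propose: rather than producing equicontinuous densities and invoking Arzel\`a--Ascoli, they regularize an arbitrary bounded test function $h$ by convolution with a Gaussian of width $\epsilon$, use integration by parts once to control $|\mathbb{E}[h(F_n)]-\mathbb{E}[h_\epsilon(F_n)]|$ uniformly in $n$ via the negative-moment bound on $\det\gamma_n$, and then let $\epsilon\downarrow 0$ after $n\to\infty$. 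Your density-compactness route would also work but demands more derivatives than are strictly needed.
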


Stein's method is an efficient tool to estimate the total variation distance of probability measures in $\mathbb{R}$. 
See \citet{Stein} for general reference and see also \citet{MR2962301} for beautiful relation to Malliavin calculus. 
A fundamental result is that for any measurable function $h$ such that $\|h\|_\infty\le 1$, there 
is a function $f$ called Stein's solution such that
\begin{equation}\nonumber
h(x)-Nh=f'(x)-xf(x).
\end{equation}
Moreover, the solution is absolutely continuous and $\|f\|_\infty\le \sqrt{\pi/2}$ and $\|f'\|_\infty\le 2$ (see Lemma 2.4 of \citet{Stein}). 
Immediate corollary of this fact is that 
\begin{equation}\nonumber
\|\mathcal{L}(X)-N\|_{\mathrm{TV}}\le \sup_{f\in\mathfrak{F}}\left|\mathbb{E}\left[f'(X)\right]-\mathbb{E}\left[Xf(X)\right]\right|
\end{equation}
where $\mathfrak{F}$ is a set of functions such that $\|f\|_\infty\le \sqrt{\pi/2}$ and $\|f'\|_\infty\le 2$. 
In particular, we have the following. 

\begin{proposition}\label{steinprop}
For $\delta>0$ and the random variable $F_d$ defined in (\ref{c}), 
$\sup_{y\in K_\delta}\|\mathcal{L}_{\mathbb{P}_y}(F_d)-N\|_{\mathrm{TV}}\rightarrow 0$. 
\end{proposition}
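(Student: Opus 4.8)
The plan is to combine Stein's method with the Malliavin-type estimate already established in Proposition \ref{main}. By the characterization of the total variation distance through Stein's solutions recalled just above, for each fixed $y\in K_\delta$ we have
\begin{equation*}
\|\mathcal{L}_{\mathbb{P}_y}(F_d)-N\|_{\mathrm{TV}}\le \sup_{f\in\mathfrak{F}}\left|\mathbb{E}_y[f'(F_d)]-\mathbb{E}_y[F_df(F_d)]\right|,
\end{equation*}
where $\mathfrak{F}$ is the class of absolutely continuous functions satisfying $\|f\|_\infty\le\sqrt{\pi/2}$ and $\|f'\|_\infty\le 2$. Since every $f\in\mathfrak{F}$ is absolutely continuous, Proposition \ref{main} applies to each such $f$, and the whole problem reduces to controlling the right-hand side uniformly in $f$ and in $y$.

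First I would rewrite the quantity inside the supremum using Proposition \ref{main}, which gives, uniformly in $y\in K_\delta$,
\begin{equation*}
\mathbb{E}_y[F_df(F_d)]-\mathbb{E}_y[f'(F_d)]=d^{-1/2}\sqrt{1-\rho}\,\mathbb{E}_y[f(F_d)]+O\left(\max\left\{d^{-1/2}\|f'\|_\infty,\,d^{-1}\|f\|_\infty\right\}\right).
\end{equation*}
Rearranging and taking absolute values yields
\begin{equation*}
\left|\mathbb{E}_y[f'(F_d)]-\mathbb{E}_y[F_df(F_d)]\right|\le d^{-1/2}\sqrt{1-\rho}\,\|f\|_\infty+O\left(\max\left\{d^{-1/2}\|f'\|_\infty,\,d^{-1}\|f\|_\infty\right\}\right).
\end{equation*}

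The crucial point — and what I regard as the main obstacle, modest though it is — is the uniformity of this bound over the Stein class $\mathfrak{F}$. The implied constant in the $O(\cdot)$ term of Proposition \ref{main} depends only on $\delta$ and not on the particular test function, because the error there is already stated in terms of the norms $\|f\|_\infty$ and $\|f'\|_\infty$. I would therefore bound those norms uniformly over $\mathfrak{F}$: every $f\in\mathfrak{F}$ satisfies $\|f\|_\infty\le\sqrt{\pi/2}$ and $\|f'\|_\infty\le 2$, so the right-hand side above is $O(d^{-1/2})$ uniformly in $f\in\mathfrak{F}$ and in $y\in K_\delta$.

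Taking the supremum over $f\in\mathfrak{F}$ and then over $y\in K_\delta$ gives
\begin{equation*}
\sup_{y\in K_\delta}\|\mathcal{L}_{\mathbb{P}_y}(F_d)-N\|_{\mathrm{TV}}=O(d^{-1/2})\longrightarrow 0,
\end{equation*}
which is the claim. All the analytic work (the Malliavin integration-by-parts estimate and the density of $F_d$) is concentrated in Proposition \ref{main}, so beyond checking the uniformity of the error constant over the Stein class the argument is a direct substitution.
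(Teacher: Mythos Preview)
Your argument is correct and is exactly the one the paper has in mind: the paper presents Proposition~\ref{steinprop} as an immediate consequence (``In particular, we have the following'') of the Stein bound displayed just before it, combined with Proposition~\ref{main}, and you have simply made this explicit, including the observation that the error constant in Proposition~\ref{main} depends on $f$ only through $\|f\|_\infty$ and $\|f'\|_\infty$ and is therefore uniform over the Stein class $\mathfrak{F}$.
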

Stein's method was a basic tool in  \citet{arXiv:1406.5392} and implicitly used throughout in this paper. 


\section{Elements of consistency of MCMC}\label{appen2}

\subsection{Some sufficient conditions for consistency}

The following lemma is a fundamental result for consistency of MCMC. 

\begin{lemma}[Lemma 2 of \citet{Kamatani10}]\label{lem1}
Let $\xi^d=\{\xi_m^d\}_m$ be a sequence of stationary process on $\mathbb{R}^k$. 
If $\xi^d$ converges in law to $\xi=\{\xi_m\}_m$, and if $\xi$ is a stationary ergodic process, then 
the law of $\xi^d$ is consistent in the sense of (\ref{consistency}). 
\end{lemma}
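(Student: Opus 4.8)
The plan is to reduce the claim to a variance estimate and then to feed a finite-dimensional weak-convergence argument into the mean ergodic theorem for the limit process. Fix a bounded continuous $f$ with $\|f\|_\infty\le C$, write $\Pi_d=\mathcal{L}(\xi_0^d)$ and $\Pi=\mathcal{L}(\xi_0)$ for the invariant laws, and set
\[
A_M^d=\frac1M\sum_{m=0}^{M-1}f(\xi_m^d),\qquad
A_M=\frac1M\sum_{m=0}^{M-1}f(\xi_m).
\]
Since each $\xi^d$ is stationary, $\mathbb{E}[A_M^d]=\Pi_d(f)$ exactly, so the quantity $A_M^d-\Pi_d(f)$ is centred; by Chebyshev's inequality it therefore suffices to prove that $\mathrm{Var}(A_M^d)=\mathbb{E}[(A_M^d-\Pi_d(f))^2]$ tends to $0$ as $M,d\to\infty$.

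First I would hold $M$ fixed and let $d\to\infty$. The hypothesis $\xi^d\Rightarrow\xi$ is weak convergence of the full trajectories, so applying the continuous mapping theorem to the projection onto the first $M$ terms gives $(\xi_0^d,\dots,\xi_{M-1}^d)\Rightarrow(\xi_0,\dots,\xi_{M-1})$. The functional $G_M(x_0,\dots,x_{M-1})=\bigl(\frac1M\sum_{m=0}^{M-1}f(x_m)\bigr)^2$ is bounded and continuous because $f$ is, whence $\mathbb{E}[(A_M^d)^2]\to\mathbb{E}[(A_M)^2]$, and likewise $\Pi_d(f)=\mathbb{E}[f(\xi_0^d)]\to\Pi(f)$. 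Writing $\mathrm{Var}(A_M^d)=\mathbb{E}[(A_M^d)^2]-\Pi_d(f)^2$ and passing to the limit, I get $\lim_{d\to\infty}\mathrm{Var}(A_M^d)=\mathbb{E}[(A_M)^2]-\Pi(f)^2=\mathrm{Var}(A_M)$ for every fixed $M$, where the last identity uses stationarity of $\xi$ (so that $\mathbb{E}[A_M]=\Pi(f)$).

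It then remains to let $M\to\infty$ in the limit process. Since $\xi$ is stationary and ergodic and $f$ is bounded, Birkhoff's ergodic theorem gives $A_M\to\Pi(f)$ almost surely, and boundedness upgrades this to $L^2$ by dominated convergence, so $\mathrm{Var}(A_M)\to0$. Combining the two steps yields $\lim_{M\to\infty}\limsup_{d\to\infty}\mathrm{Var}(A_M^d)=0$, which with Chebyshev's inequality is exactly the $o_{\mathbb{P}}(1)$ assertion (\ref{consistency}). The one genuinely delicate point is the bookkeeping of the double limit: the convergence $\mathrm{Var}(A_M^d)\to\mathrm{Var}(A_M)$ as $d\to\infty$ is not uniform in $M$, so what the hypotheses of weak convergence and ergodicity buy is precisely the iterated-limit form above (a fixed, dimension-free number of sweeps $M$ eventually suffices in all large dimensions), which is the sense in which the lemma is invoked in Lemma \ref{consistencylem}; strengthening it to arbitrary coupled sequences $M_d\to\infty$ would require an additional mixing rate and is not expected from these hypotheses alone.
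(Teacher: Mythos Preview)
Your argument for the iterated limit $\lim_{M\to\infty}\limsup_{d\to\infty}\mathrm{Var}(A_M^d)=0$ is correct, and the paper itself does not reprove this lemma (it is simply quoted from \citet{Kamatani10}), so there is no alternative route to compare against. However, your closing disclaimer is not an innocent caveat but a genuine gap: consistency in the sense of (\ref{consistency}) \emph{does} require the conclusion for arbitrary coupled sequences $M_d\to\infty$ (this is precisely weak consistency with rate $T_d\equiv 1$), and the lemma as stated and as used in the paper asserts exactly that.

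The good news is that no additional mixing hypothesis is needed; stationarity alone upgrades your iterated limit to the joint one via a blocking argument. For $M\ge L$ write $M=KL+r$ with $0\le r<L$, and split $A_M^d-\Pi_d(f)$ into $K$ consecutive block averages of length $L$ plus a remainder of at most $L$ terms. By stationarity each block average has second moment $\mathrm{Var}(A_L^d)$, and Cauchy--Schwarz on the $K$ blocks together with the crude bound $2C$ on the remainder gives
\[
\mathbb{E}\bigl[(A_M^d-\Pi_d(f))^2\bigr]\;\le\;2\,\mathrm{Var}(A_L^d)+\frac{8C^2L^2}{M^2}.
\]
Now fix $\epsilon>0$; pick $L$ with $\mathrm{Var}(A_L)<\epsilon/8$ (your ergodic-theorem step), then take $d$ large enough that $\mathrm{Var}(A_L^d)<\epsilon/4$ (your weak-convergence step for this fixed $L$) and that $8C^2L^2/M_d^2<\epsilon/2$. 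This yields $\mathrm{Var}(A_{M_d}^d)<\epsilon$ for all large $d$, which is the full statement. With this patch your proof is complete.
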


We need a slightly generalization of this lemma. 
Let $k_1, k_2\in\mathbb{N}$. 
Suppose that $\mathbb{R}^{k_1+k_2}$-valued random variable $X_m^d$ has two parts, $X_m^d=(X_m^{d,1}, X_m^{d,2})$ where $X_m^{d,i}$ is $\mathbb{R}^{k_i}$ valued for each $i=1,2$. 
Corresponding to $X^{d,1}$ and $X^{d,2}$, 
the invariant probability measure has the following decomposition	
\begin{equation}\nonumber
P_d(\dif x_1 \dif x_2)=P_d^1(\dif x_1)P_d^2(\dif x_2|x_1). 
\end{equation}
Furthermore, we assume the following. Let $T_d\rightarrow\infty$. 
Let $[x]$ be the integer part of $x\ge 0$. 

\begin{assumption}
\begin{enumerate}
\item 
For $Y_t^{d,1}=X^{d,1}_{[T_dt]}$, $Y^{d,1}\Rightarrow  Y^1$ (in Skorohod's sense) where $Y^1$ is stationary and ergodic
continuous process 
with the invariant probability measure $P^1$. 
\item Random variables $X^d=\{X^d_m\}_m$ converges to $X=\{X_m\}_m=\{(X_0^1, X^2_m(X_0^1))\}_m$
where $X^2(x)=\{X^2_m(x)\}_m$ is a stationary and ergodic process 
with the invariant probability measure $P^{2|1}(\cdot|x)$ for each $x$, and $X_0^1\sim P^1$. 
\item For any bounded continuous function $f$, $P^{2|1}f(x_1)=\int f(x_1,x_2)P^{2|1}(\dif x_2|x_1)$ is continuous in $x_1$. 
\end{enumerate}
\end{assumption}

The proof of the following lemma is essentially same as 
Lemma B.2 of \citet{arXiv:1406.5392}. Thus we omit it. 

\begin{lemma}\label{joint}
Under the above assumption, 
\begin{equation}\nonumber
\frac{1}{M_d}\sum_{m=0}^{M_d-1}f(X_m^d)-P_d(f)=o_\mathbb{P}(1)
\end{equation}
for any continuous and bounded function $f$ and for $M_d\rightarrow\infty$ 
such that $M_d/T_d\rightarrow\infty$. 
\end{lemma}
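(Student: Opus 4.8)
The plan is to run a two-timescale (averaging) argument, exploiting that $X^{d,1}$ evolves on the slow clock $T_d$ while, by condition (2) of the standing assumption, the fast coordinate $X^{d,2}$ relaxes on the $O(1)$ clock to the conditional law $P^{2|1}(\cdot\mid x_1)$. Throughout I abbreviate
\[
g(x_1)=\int f(x_1,x_2)\,P^{2|1}(\dif x_2\mid x_1),
\]
which is bounded and, by condition (3), continuous. Since both $X^d$ and the limit $X$ are stationary, condition (2) gives weak convergence of the time-$0$ marginals, whence $P_d\Rightarrow P$ with $P=\mathcal L\bigl((X_0^1,X_0^2(X_0^1))\bigr)$; in particular $P_d(f)\to P(f)=\int g\,\dif P^1$. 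It therefore suffices to prove that $M_d^{-1}\sum_{m=0}^{M_d-1}f(X_m^d)\to\int g\,\dif P^1$ in probability.

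First I would split the empirical average as $A_d+B_d$, where
\[
A_d=\frac{1}{M_d}\sum_{m=0}^{M_d-1}\bigl(f(X_m^d)-g(X_m^{d,1})\bigr),\qquad
B_d=\frac{1}{M_d}\sum_{m=0}^{M_d-1}g(X_m^{d,1})-\int g\,\dif P^1 .
\]
The term $B_d$ involves only the slow coordinate. Writing $g(X_m^{d,1})=g(Y^{d,1}_{m/T_d})$ and recognizing a Riemann sum, one has $B_d+\int g\,\dif P^1\approx u_d^{-1}\int_0^{u_d}g(Y^{d,1}_s)\,\dif s$ with $u_d=M_d/T_d\to\infty$. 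By condition (1) (Skorohod convergence of $Y^{d,1}$ to the stationary, ergodic, path-continuous $Y^1$) and the continuous-time ergodic theorem for $Y^1$ applied to the bounded continuous $g$, this gives $B_d=o_{\mathbb P}(1)$. This is the continuous-time counterpart of Lemma \ref{lem1}, and it is precisely the step that forces the rate condition $M_d/T_d\to\infty$.

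The substantive work is $A_d$, the averaging-out of the fast coordinate, which I would handle by blocking. Choose $L_d\to\infty$ with $L_d/T_d\to0$, set $N_d=\lfloor M_d/L_d\rfloor\to\infty$, and on the $j$-th block set $\xi_j=X_{jL_d}^{d,1}$. Because $L_d/T_d\to0$ and $Y^1$ has continuous paths, the oscillation of the slow coordinate over a block is negligible (tightness with continuous limit), so $g(X_m^{d,1})\approx g(\xi_j)$ throughout the block; and because the fast coordinate relaxes on the $O(1)$ clock, over a block of length $L_d\to\infty$ the dynamics with $\xi_j$ held fixed is, by the ergodicity of $X^2(\xi_j)$ in condition (2), self-averaging to $g(\xi_j)$. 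Hence each block average of $f(X_m^d)-g(X_m^{d,1})$ is small, and summing the $N_d$ blocks yields $A_d=o_{\mathbb P}(1)$. Combining the two displays then gives the claim. This is exactly the architecture of Lemma B.2 of \citet{arXiv:1406.5392}.

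The hard part will be the $A_d$ estimate, and specifically making the two informal approximations uniform and then summable. One must couple the genuine nonstationary fast dynamics — whose driving slow coordinate is only approximately constant and whose initial fast state is not distributed according to $P^{2|1}(\cdot\mid\xi_j)$ — to the idealized stationary ergodic process $X^2(\xi_j)$, with an error uniform over the admissible frozen values $\xi_j$ and over blocks, so that the per-block errors accumulate to $o_{\mathbb P}(1)$ after averaging. Controlling the within-block burn-in and the across-block slow-coordinate drift, uniformly in $\xi_j$, is where the real analysis lies; by contrast, the slow-coordinate bookkeeping in $B_d$ and the normalization reduction $P_d(f)\to P(f)$ are routine once $P_d\Rightarrow P$ is in hand.
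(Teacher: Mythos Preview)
Your proposal is correct and matches the paper's approach: the paper omits the proof entirely, noting only that it is essentially the same as Lemma~B.2 of \citet{arXiv:1406.5392}, which is precisely the two-timescale blocking/averaging architecture you lay out (and explicitly cite). Your identification of the $A_d$ block-coupling step as the place where the real analysis lies is accurate, and the $B_d$ reduction via the continuous-time ergodic theorem together with stationarity and Skorohod convergence is the intended route.
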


\subsection{Consistency of the Metropolis-Hastings algorithm}
We prove  consistency of the  Metropolis-Hastings (MH)
algorithm. 
For probability measures $P, Q$ and a transition kernel $K$ on $(E,\mathcal{E})$, we introduce an operator $\otimes$ and $T$ for any set $A\times B=\left\{(x,y);x\in A, y\in B\right\}$ by 
\begin{equation}\nonumber
(P\otimes K)(A\times B)=\int_A P(\dif x)K(x,B),\ 
(P\otimes K)^T(A\times B)=(P\otimes K)(B\times A)
\end{equation}
and extend them to probability measures on $\mathcal{E}^{\otimes 2}$ by Hahn-Kolmogorov's theorem. 
We introduce  another operator $\wedge$ by
\begin{equation}\nonumber
(P\wedge Q)(\dif x)=\min\left\{p(x),q(x)\right\}\sigma(\dif x). 
\end{equation}
where $p(x)$ and $q(x)$ are the Radon-Nikod\'ym derivatives of $P$ and $Q$ with respect to a $\sigma$-finite measure $\sigma(\dif x)$. 
Let $X^n=\left\{X^n_m;m\in\mathbb{N}_0\right\}$
be a stationary Markov chain  with the transition kernel $K_n$ with the initial distribution $P_n$, 
and let $X=\left\{X_m;m\in\mathbb{N}_0\right\}$ be that for 
the transition kernel $K$ with the initial distribution $P$.

\begin{lemma}[Lemmas 2 and 3 of \citet{Kamatani10}]\label{MHlemma1}
Let  $K$ and $K_n\ (n=1,2,\ldots)$
be transition kernels that have the invariant probability distributions 
$P$ and $P_n$ with respectively. 
If $\|P_n\otimes K_n-P\otimes K\|_{\mathrm{TV}}\rightarrow 0$, then 
$X^n$ tends to $X$ in law. 
\end{lemma}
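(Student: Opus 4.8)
The plan is to upgrade the hypothesis, which controls only the one-step joint law, into convergence of every finite-dimensional distribution, and then to invoke the standard fact that on the path space $(\mathbb{R}^k)^{\mathbb{N}_0}$ (a countable product with the product topology) convergence in law of the process is equivalent to weak convergence of all finite-dimensional distributions. Writing $\Phi_n^{(m)}=\mathcal{L}(X_0^n,\ldots,X_m^n)$ and $\Phi^{(m)}=\mathcal{L}(X_0,\ldots,X_m)$, I would in fact prove the stronger claim $\|\Phi_n^{(m)}-\Phi^{(m)}\|_{\mathrm{TV}}\to 0$ for every $m\ge 1$, by induction on $m$. The base case $m=1$ is precisely the assumption $\|P_n\otimes K_n-P\otimes K\|_{\mathrm{TV}}\to 0$.

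First I would record two elementary facts about total variation: marginalisation and appending a fixed kernel are both contractions. The former gives $\|P_n-P\|_{\mathrm{TV}}\le\|P_n\otimes K_n-P\otimes K\|_{\mathrm{TV}}\to 0$, and the latter gives $\|P\otimes K-P_n\otimes K\|_{\mathrm{TV}}\le\|P-P_n\|_{\mathrm{TV}}$. Combined with the identity $\|P_n\otimes K_n-P_n\otimes K\|_{\mathrm{TV}}=\int P_n(\dif x)\|K_n(x,\cdot)-K(x,\cdot)\|_{\mathrm{TV}}$ and the triangle inequality, these yield the single crucial estimate $\int P_n(\dif x)\|K_n(x,\cdot)-K(x,\cdot)\|_{\mathrm{TV}}\to 0$.

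For the inductive step I would test the difference $\Phi_n^{(m+1)}-\Phi^{(m+1)}$ against a bounded measurable $f$ with $\|f\|_\infty\le 1$ and peel off the last transition. Setting $g_n(x_0,\ldots,x_m)=\int K_n(x_m,\dif x_{m+1})\,f$ and $g(x_0,\ldots,x_m)=\int K(x_m,\dif x_{m+1})\,f$, one has the decomposition $\Phi_n^{(m+1)}(f)-\Phi^{(m+1)}(f)=\Phi_n^{(m)}(g_n-g)+(\Phi_n^{(m)}-\Phi^{(m)})(g)$. The second term tends to $0$ by the induction hypothesis, since $\|g\|_\infty\le 1$. For the first term one has $|g_n-g|\le 2\|K_n(x_m,\cdot)-K(x_m,\cdot)\|_{\mathrm{TV}}$, and the decisive point is that by stationarity the marginal of $\Phi_n^{(m)}$ in its last coordinate $x_m$ equals $P_n$ for every $m$; hence $|\Phi_n^{(m)}(g_n-g)|\le 2\int P_n(\dif x)\|K_n(x,\cdot)-K(x,\cdot)\|_{\mathrm{TV}}$, which tends to $0$ by the estimate of the previous paragraph.

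The main obstacle to anticipate is precisely that the hypothesis controls $K_n$ versus $K$ only in $P_n$-average and never pointwise, so a naive induction appending $K_n(x_m,\cdot)$ at an arbitrary point $x_m$ cannot be made to work. The resolution is that stationarity pins the last-coordinate marginal of $\Phi_n^{(m)}$ to $P_n$ uniformly in $m$, which reduces every inductive step to the same averaged quantity $\int P_n(\dif x)\|K_n(x,\cdot)-K(x,\cdot)\|_{\mathrm{TV}}$. Completing the induction delivers $\|\Phi_n^{(m)}-\Phi^{(m)}\|_{\mathrm{TV}}\to 0$ for all $m$, which in particular gives weak convergence of every finite-dimensional distribution, and therefore $X^n$ tends to $X$ in law.
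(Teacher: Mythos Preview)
The paper does not give its own proof of this lemma; it merely cites Lemmas~2 and~3 of \citet{Kamatani10}. Your self-contained argument is correct and is essentially the natural one: reduce to convergence of finite-dimensional distributions, and control those inductively by peeling off the last transition, using stationarity to identify the last-coordinate marginal of $\Phi_n^{(m)}$ with $P_n$ so that every inductive step is governed by the single averaged quantity $\int P_n(\dif x)\,\|K_n(x,\cdot)-K(x,\cdot)\|_{\mathrm{TV}}$. Two minor points worth making explicit in a final write-up: (i) the identity $\|P_n\otimes K_n-P_n\otimes K\|_{\mathrm{TV}}=\int P_n(\dif x)\,\|K_n(x,\cdot)-K(x,\cdot)\|_{\mathrm{TV}}$ relies on a measurable selection of the Hahn set $x\mapsto C_x$, which is unproblematic on a standard Borel space such as $\mathbb{R}^k$; (ii) the passage from convergence of all finite-dimensional distributions to convergence in law on $(\mathbb{R}^k)^{\mathbb{N}_0}$ uses that this product is Polish and that tightness of the sequence of path laws follows from tightness of the coordinate marginals, which here are all equal to $P_n\to P$.
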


Thus $\|P_n\otimes K_n-P\otimes K\|_{\mathrm{TV}}\rightarrow 0$
with ergodicity of $K$ is a set of sufficient conditions for consistency. 
The transition kernel $K$ of the Metropolis-Hastings algorithm 
with the proposal transition kernel $Q(x,\dif y)=q(x,y)\sigma(\dif y)$ ($q$ is supposed to be $\mathcal{E}^{\otimes 2}$-measurable) is 
\begin{equation}\nonumber
K(x,\dif y)=Q(x,\dif y)\min\left\{1,\frac{p(y)q(y,x)}{p(x)q(x,y)}\right\}
+R(x)\delta_x(\dif y)
\end{equation}
where 
\begin{equation}\label{reject}
R(x)=1-\int_{y\in E}Q(x,\dif y)\min\left\{1,\frac{p(y)q(y,x)}{p(x)q(x,y)}\right\}. 
\end{equation}
Thus 
\begin{equation}\nonumber
(P\otimes K)(\dif x,\dif y)=(P\otimes Q)\wedge(P\otimes Q)^T(\dif x,\dif y)+PR(\dif x)\delta_x(\dif y)
\end{equation}
where
\begin{equation}\label{pr}
PR(\dif x):=P(\dif x)R(x)=P(\dif x)-(P\otimes Q)\wedge (P\otimes Q)^T(\dif x\times E). 
\end{equation}

The following lemma shows that the total variation convergence of the transition kernel of the Metropolis-Hastings algorithm comes from that of the proposal transition kernel. 

\begin{lemma}\label{MHlemma2}
Suppose $K_1$ and $K_2$ are transition kernels of 
the Metropolis-Hastings algorithm  with the proposal transition kernels $Q_1$ and $Q_2$
and the target probability distribution $P_1$ and $P_2$
with respectively.  Then  
\begin{equation}\nonumber
\|P_1\otimes K_1-P_2\otimes K_2\|_{\mathrm{TV}}\le 6\|P_1\otimes Q_1-P_2\otimes Q_2\|_{\mathrm{TV}}. 
\end{equation}
\end{lemma}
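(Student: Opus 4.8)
The plan is to start from the decomposition recorded just before the statement, namely $P_i\otimes K_i=(\mu_i\wedge\mu_i^T)+D_i$ for $i=1,2$, where I abbreviate $\mu_i=P_i\otimes Q_i$ and $D_i(\dif x,\dif y)=P_iR_i(\dif x)\delta_x(\dif y)$ is the ``rejection'' part supported on the diagonal, with $P_iR_i=P_i-(\mu_i\wedge\mu_i^T)(\cdot\times E)$. Writing $\epsilon:=\|\mu_1-\mu_2\|_{\mathrm{TV}}$, the triangle inequality reduces the claim to separately bounding the off-diagonal difference $\|(\mu_1\wedge\mu_1^T)-(\mu_2\wedge\mu_2^T)\|_{\mathrm{TV}}$ and the diagonal difference $\|D_1-D_2\|_{\mathrm{TV}}$, each by a constant multiple of $\epsilon$.

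The key step, and the main obstacle, is to show that $\wedge$ is non-expansive in each argument for the total variation norm: I would prove $\|\mu\wedge\nu-\mu'\wedge\nu\|_{\mathrm{TV}}\le\|\mu-\mu'\|_{\mathrm{TV}}$ by a pointwise density comparison. Choosing a common dominating measure $\lambda$ with densities $f,f',g$ of $\mu,\mu',\nu$, on the set where $\min(f,g)>\min(f',g)$ monotonicity of $\min$ forces $f>f'$ and gives $0\le\min(f,g)-\min(f',g)\le f-f'$ there; hence the positive part of $\mu\wedge\nu-\mu'\wedge\nu$ has mass at most $(\mu-\mu')^+(E)=\|\mu-\mu'\|_{\mathrm{TV}}$, and symmetrically for the negative part, so the maximum of the two is controlled. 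Combining this with the fact that transposition $\nu\mapsto\nu^T$ is a TV-isometry (so $\|\mu_1^T-\mu_2^T\|_{\mathrm{TV}}=\epsilon$) and one further triangle inequality through the hybrid measure $\mu_2\wedge\mu_1^T$, I obtain $\|(\mu_1\wedge\mu_1^T)-(\mu_2\wedge\mu_2^T)\|_{\mathrm{TV}}\le 2\epsilon$.

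For the diagonal part I would first observe that a measure of the form $\nu(\dif x)\delta_x(\dif y)$ is determined by its first marginal, since for any $C\in\mathcal{E}^{\otimes 2}$ its value depends only on $\{x:(x,x)\in C\}$; thus $\|D_1-D_2\|_{\mathrm{TV}}=\|P_1R_1-P_2R_2\|_{\mathrm{TV}}$. Writing $P_iR_i=P_i-m_i$ with $m_i:=(\mu_i\wedge\mu_i^T)(\cdot\times E)$, the triangle inequality gives $\|P_1R_1-P_2R_2\|_{\mathrm{TV}}\le\|P_1-P_2\|_{\mathrm{TV}}+\|m_1-m_2\|_{\mathrm{TV}}$. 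Since marginalization does not increase total variation, $\|P_1-P_2\|_{\mathrm{TV}}\le\|\mu_1-\mu_2\|_{\mathrm{TV}}=\epsilon$ and $\|m_1-m_2\|_{\mathrm{TV}}\le\|(\mu_1\wedge\mu_1^T)-(\mu_2\wedge\mu_2^T)\|_{\mathrm{TV}}\le 2\epsilon$, whence the diagonal difference is at most $3\epsilon$. Adding the two contributions yields $\|P_1\otimes K_1-P_2\otimes K_2\|_{\mathrm{TV}}\le 5\epsilon\le 6\epsilon$, which is the asserted bound (indeed with room to spare). I expect the only delicate point to be the case analysis in the contraction property of $\wedge$; the remaining steps are routine applications of the triangle inequality and of the fact that marginalization and transposition are TV-non-increasing.
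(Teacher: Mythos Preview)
Your argument is correct and follows essentially the same route as the paper: both split $P_i\otimes K_i$ into the accepted part $(\mu_i\wedge\mu_i^T)$ and the diagonal rejection part, bound the former via a contraction property of $\wedge$ (the paper uses the pointwise inequality $|a\wedge b-c\wedge d|\le|a-c|+|b-d|$, you prove the equivalent one-sided version by density case analysis), and handle the diagonal via the identity $P_iR_i=P_i-(\mu_i\wedge\mu_i^T)(\cdot\times E)$. Your bookkeeping for the diagonal piece is in fact a bit more careful than the paper's and yields the slightly sharper constant $5$ in place of $6$.
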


\begin{proof}
By triangular inequality, 
\begin{align*}
\|P_1\otimes K_1-P_2\otimes K_2\|_{\mathrm{TV}}\le \|(P_1\otimes Q_1)\wedge (P_1\otimes Q_1)^T-(P_2\otimes Q_2)\wedge (P_2\otimes Q_2)^T\|_{\mathrm{TV}}
+\|P_1R_1-P_2R_2\|_{\mathrm{TV}}
\end{align*}
where $R_i(x)$ is the rejection probability defined in (\ref{reject}) of the transition kernel $K_i$ for  $i=1,2$. 
By (\ref{pr}), 
the second term in the right-hand side of the above is dominated by  twice of the first term. 
To find a bound of the first term, observe that for any $x_1,x_2,y_1,y_2\in\mathbb{R}$ we have
$|x_1\wedge x_2-y_1\wedge y_2|\le \sum_{i=1}^2|x_i-y_i|$ where $x\wedge y=\min\{x,y\}$. 
By this inequality, $\|\mu_1\wedge\mu_2-\nu_1\wedge\nu_2\|_{\mathrm{TV}}\le \sum_{i=1}^2\|\mu_i-\nu_i\|_{\mathrm{TV}}$. Thus we have 
\begin{align*}
\|P_1\otimes K_1-P_2\otimes K_2\|_{\mathrm{TV}}&\le 
3\|(P_1\otimes Q_1)\wedge (P_1\otimes Q_1)^T-(P_2\otimes Q_2)\wedge (P_2\otimes Q_2)^T\|_{\mathrm{TV}}\\
&\le 6\|P_1\otimes Q_1-P_2\otimes Q_2\|_{\mathrm{TV}}.
\end{align*}
\end{proof}

\end{document}